\documentclass[11pt, a4paper, oneside]{report}
\pdfoutput=1
\usepackage{preamble}
\begin{document}

\begin{titlepage}
\AddToShipoutPicture*{\put(0,0){\includegraphics*{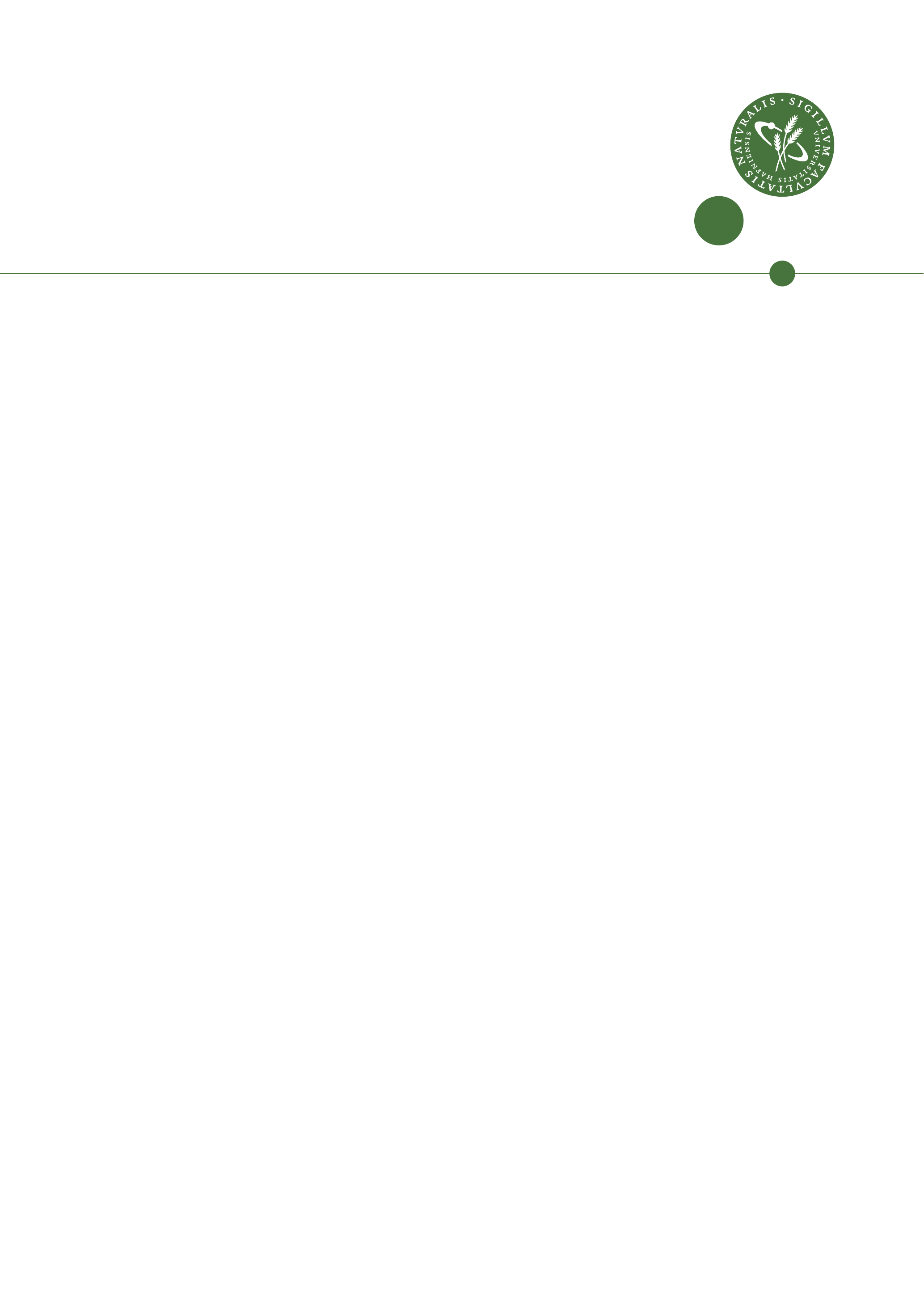}}}
\AddToShipoutPicture*{\put(0,0){\includegraphics*{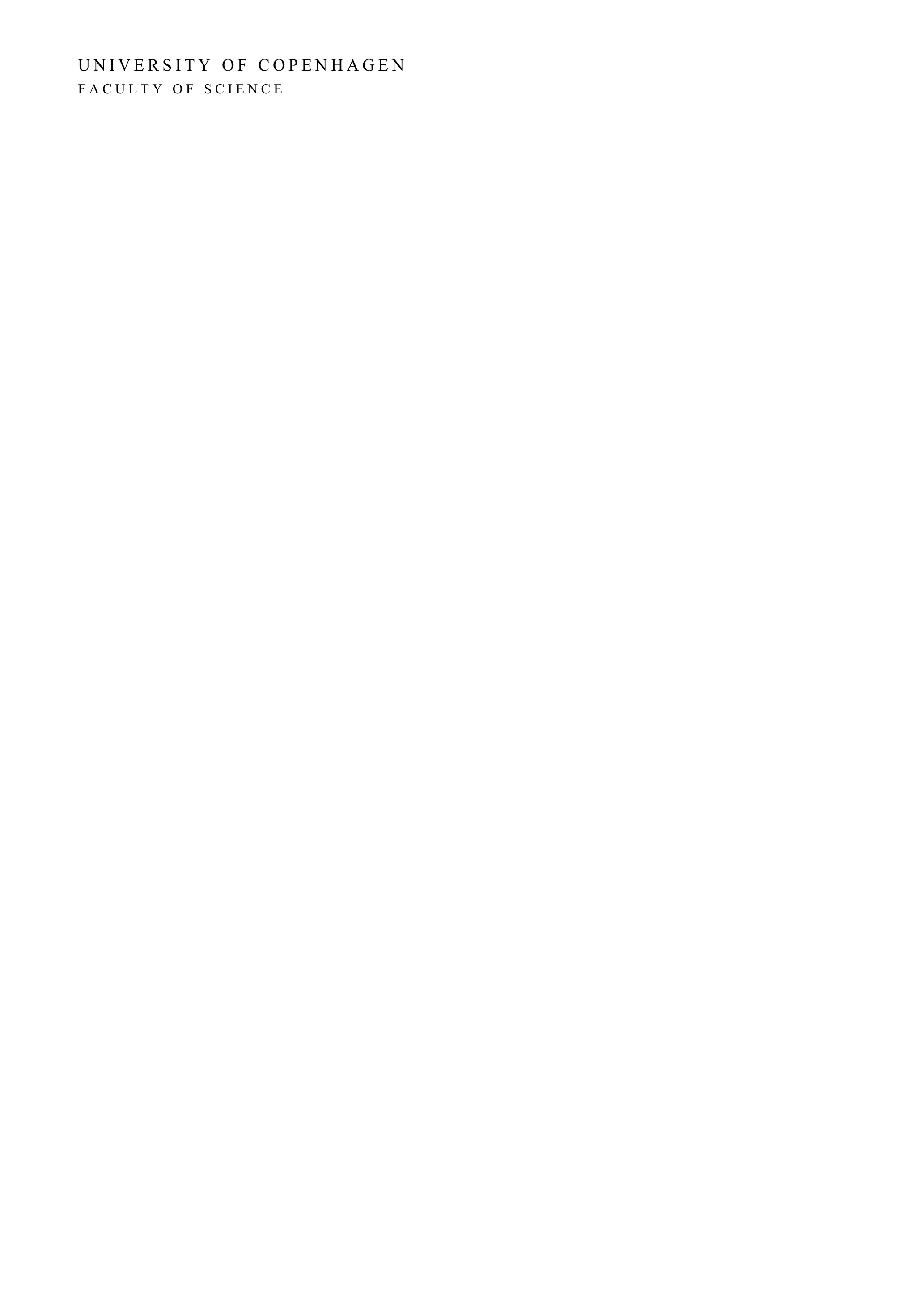}}}

\begin{flushleft}
\vspace*{3cm}
\textbf{\huge{Master's Thesis}}

\vspace*{3mm}
\textbf{Martin Holm Cservenka} \\
Department of Computer Science \\
University of Copenhagen \\
\texttt{djp595@alumni.ku.dk}

\vspace*{4cm}
\textbf{\huge{Design and Implementation of Dynamic Memory Management in a Reversible Object-Oriented Programming Language}}
\vfill
\textbf{Supervisors:} Robert Glück \& Torben Ægidius Mogensen

\textbf{Submitted:} January $25^{th}$, 2018

Revision 1.01
\end{flushleft}
\end{titlepage}

\newpage 

\begin{versionhistory}
  \vhEntry{1.01}{2018-04-11}{Martin}{Fixed incorrect subfigures (d, e, f) in Figure~\ref{fig:deallocation-order-free-list}. Updated Figure~\ref{fig:injective-garbage-in-out} to better emphesize output copying. Updated Figure~\ref{fig:equivalent-free-lists} to better reflect the harmless garbage flow through programs.}

\end{versionhistory} 

\newpage

\chapter*{Abstract}
\markright{Abstract}

The reversible object-oriented programming language (\textsc{Roopl}) was presented in late 2016 and proved that object-oriented programming paradigms works in the reversible setting. The language featured simple statically scoped objects which made non-trivial programs tedious, if not impossible to write using the limited tools provided.
We introduce an extension to \textsc{Roopl} in form the new language \rooplpp, featuring dynamic memory management and fixed-sized arrays for increased language expressiveness. The language is a superset of \textsc{Roopl} and has formally been defined by its language semantics, type system and computational universality. Considerations for reversible memory manager layouts are discussed and ultimately lead to the selection of the Buddy Memory layout. Translations of the extensions added in \rooplpp to the reversible assembly language \textsc{Pisa} are presented to provide garbage-free computations. The dynamic memory management extension successfully increases the expressiveness of \textsc{Roopl} and as a result, shows that non-trivial reversible data structures, such as binary trees and doubly-linked lists, are feasible and do not contradict the reversible computing paradigm.
\newpage 

\chapter*{Preface}
\markright{Preface}

This Master's Thesis is submitted as the last part for the degree of Master of Science in Computer Science at the University of Copenhagen, Department of Computer Science, presenting a 30 ECTS workload.

The thesis consists of \pageref*{LastPage} pages and a ZIP archive containing source code and test programs developed as part of the thesis work.

I would like to thank my two supervisors, Robert Glück and Torben Mogensen, for their invaluable supervision and guidance throughout this project and introduction to the field of reversible computing. A big thanks to my university colleague and friend, Tue Haulund, for allowing me to continue his initial work on \textsc{Roopl} and providing information, sparring and source code material and for being a great ally through our years at the University of Copenhagen. In addition, thanks to my dear aunt Doris, for financially supporting my studies by paying for all my books needed. Finally, a thanks to Jess, for all the love and support throughout the entire span of my thesis process.
 
\newpage

\tableofcontents
\newpage

\listoffigures 
\newpage
 
\chapter{Introduction}
\label{chp:introduction}
In recent years, technologies such as cloud-based services, deep learning, cryptocurrency mining and other services requiring large computational power and availability have been on the rise. Most of these services are hosted on massive server parks, consuming immense amounts of electricity in order to power the machines and the cooling architectures as heat dissipates from the hardware. A recent study showed that the Bitcoin network including its mining processes' currently stands at 0.13\% of the total global electricity consumption, rivaling the usage of a small country like Denmark's~\cite{digiconomist:bitcoin}. With the recent years focus on climate and particularly energy consumption, companies have started to attempt to reduce their power usage in these massive server farms. As an example, Facebook built new server park in the arctic circle in 2013, in an attempt to take advantage of the natural surroundings in the cooling architecture to reduce its power consumption~\cite{bloomberg:facebook}. 

Reversible computing presents a possible solution the problematic power consumption issues revolving around computations. Traditional, irreversible computers dissipates heat during their computation. Landauer's principle states that deletion of information in a system always results by an increase in energy consumption. In reversible computing, all information is preserved throughout the execution, and as such, the energy consumption theoretically should be smaller~\cite{rl:irreversibility}.

Currently, reversible computing is not commercially appealing, as it is an area which still is being actively researched. However, several steps has been taken in the direction of a fully reversible system, which some day might be applicable in a large setting. Reversible machine architectures have been presented such as the Pendulum architecture and its instruction set Pendulum ISA (\textsc{Pisa})~\cite{cv:pendulum, ha:architecture} and the \textsc{BobISA} architecture and instruction set~\cite{mt:bob} and high level languages \textsc{Janus}~\cite{cl:janus, ty:janus, ty:ejanus} and \textsc{R}~\cite{mf:r} exists. 

While cryptocurrency mining and many other computations are not reversible, the area remains interesting in terms of its applications and gains.

\section{Reversible Computing}
\label{sec:reversible-computing}
Reversible computing is a two-directional computational model in which all processes are time-invertible. This means, that at any time during execution, the computation can return to a former state. In order to maintain \textit{reversibility}, the reversible computational model cannot compute \textit{many-to-one} functions, as the models requires an exact inverse $f^{-1}$ of a function $f$ in order to support backwards determinism. Therefore, reversible programs must only consist of \textit{one-to-one} functions, also known as \textit{injective} functions, which result in a garbage-free computation, as garbage-generating functions simply can be unwinded to clean up.

Each step of a reversible program is locally invertible, meaning each step has exactly one inverse step. A reversible program can be inverted simply by computing the inverse of each of its steps, without any knowledge about the overall functionality or requirements of the program. This property immediately yields interesting consequences in terms of software development, as an encryption or compression algorithm implemented in a reversible language immediately yields the decryption or decompression algorithm by running the algorithm backwards.

The reversibility is however not free and comes and the cost of strictness when writing programs. Almost every popular, irreversible programming language features a conditional component in form of \textbf{if}-\textbf{else}-statements. In these languages, we only define the \textit{entry}-condition in the conditional, that is, the condition that determines which branch of the component we continue execution in. In reversible languages, we must also specify an \textit{exit}-condition, such that we can determine which branch we should follow, when executing the program in reverse. In theory, this sounds trivial, but in practice it turns to add a new layer of complexity when writing programs.

\section{Object-Oriented Programming}
\label{sec:object-oriented-programming}
Object-oriented programming (OOP) has for many years been the most widely used programming paradigm as reflected in the popular usage of object-oriented programming languages, such as the \textsc{C}-family languages, \textsc{Java}, \textsc{PHP} and in recent years \textsc{Javascript} and \textsc{Python}. The OOP core concepts such as \textit{inheritance}, \textit{encapsulation} and \textit{polymorphism} allows complex systems to be modeled by breaking the system into smaller parts in form of abstract objects~\cite{jm:concepts}.

\section{Reversible Object-Oriented Programming}
\label{sec:reversible-object-oriented-programming}
The high-level reversible language \textsc{Roopl} (Reversible Object-Oriented Programming Language) was introduced in late 2016~\cite{th:roopl, th:roopl2}. The language extends the design of previously existing reversible imperative languages with object-oriented programming language features such as user-defined data types, class inheritance and subtype-polymorphism. As a first, \textsc{Roopl} successfully integrates the object-oriented programming (OOP) paradigms into the reversible computation setting using a static memory manager to maintain garbage-free computation, but at cost of programmer usability as objects only lives within \textbf{construct} / \textbf{deconstruct} blocks, which needs to be predefined, as the program call stack is required to be reset before program termination.

Conceptualizations and ideas for the \textsc{Joule} language was also published in 2016~\cite{us:joule}. The language, a homonym of \textsc{Janus Object-Oriented Language}, \textsc{Jool}, presented an alternative OOP extension to \textsc{Janus}, differing from \textsc{Roopl}. The language featured heap allocated objects with constructors and multiple object references, as such also addressing the problems with \textsc{Roopl}. The language is still a work in progress, aiming to provide a useful, reversible object oriented-programming language.

\section{Motivation}
\label{sec:motivation}
The block defined objects of \textsc{Roopl} and lack of multiple references are problematic when writing complex, reversible programs using OOP methodologies as they pose severe limitations on the expressiveness. It has therefore been proposed to extend and partially redesign the language with dynamic memory management in mind, such that these shortcomings can be addressed, and ultimately increase the usability of reversible OOP. Work within the field of reversible computing related to heap manipulation~\cite{ha:heap}, reference counting~\cite{tm:refcounting} and garbage collection~\cite{tm:garbage} suggests that a \textsc{Roopl} extension is feasible.

\section{Thesis Statement}
\label{sec:thesis-statement}
An extension of the reversible object-oriented programming language with dynamic memory management is feasible and effective. The resulting expressiveness allows non-trivial reversible programming previously unseen, such as reversible data structures, including linked lists, doubly linked lists and trees.

\section{Outline}
\label{sec:outline}
This Master's thesis consists of four chapters, besides the introductory chapter. The following summary describes the following chapters.
\begin{itemize}
    \item \textbf{Chapter 2} formally defines the \textsc{Roopl} extension exemplified by the new language \rooplpp, a superset of \textsc{Roopl}.
    \item \textbf{Chapter 3} serves as a brief description of dynamic memory management along with a discussion of various reversible, dynamic memory management layouts.
    \item \textbf{Chapter 4} presents the translation techniques utilized in compiling a \rooplpp program to \textsc{Pisa} instructions.
    \item \textbf{Chapter 5} presents the conclusions of the thesis and future work proposals.
\end{itemize}

Besides the five chapters, a number of appendices is supplied, containing \textsc{Pisa} translations of the reversible heap allocation algorithm, the source code of the \rooplpp to \textsc{Pisa} compiler, the \rooplpp source code for the example programs and their translated \textsc{Pisa} versions.
\newpage

\chapter{The \textsc{Roopl\texttt{++}} Language}
\label{chp:rooplpp}
With the design and implementation of the \textsc{Reversible Object-Oriented Programming Language} (\textsc{Roopl}) and the work-in-progress report of \textsc{Joule}, the first steps into the uncharted lands of Object-Oriented Programming (OOP) and reversibility were taken. In this chapter, we present \rooplpp, the natural successor to \textsc{Roopl}, improving the object instantiation of the language by letting objects live outside \textbf{construct}/\textbf{deconstruct} blocks, allowing complex, reversible programs to be written using OOP methodologies. As with its predecessor, \rooplpp is purely reversible and each component of a program written in \rooplpp is locally invertible. This ensures no computation history is required, nor added program size for backwards program execution.

Inspired by other language successors such as \textsc{C\texttt{++}} was to \textsc{C}, \rooplpp is a superset of \textsc{Roopl}, containing all original functionality of its predecessor, extended with new object instantiation methods for increased programming usability and an array type.

\vskip 2em

\begin{figure}[ht]
    \centering
    \lstinputlisting[language=roopl, style=basic, frame=none, multicols=2]{fib.rplpp}
    \caption{Example \rooplpp program implementing the Fibonacci function}    
\end{figure}
\newpage

\section{Syntax}
\label{sec:syntax}
A \rooplpp program consists, analogously to a \textsc{Roopl} program, of one or more class definitions, each with a varying number of fields and class methods. The entry point of the program is a nullary main method, which is defined exactly once and is instantiated during program start-up. Fields of the main object will serve as output of the program, just as in \textsc{Roopl}.
\begin{figure}[h]
    \centering
    \vspace{3mm}
    \textbf{\rooplpp Grammar}
    \begin{align}
    prog		\quad&::= \quad cl^+ \tag{program}\\
    cl			\quad&::=\quad \textbf{class}\ c\ (\textbf{inherits}\ c)^?\ (t\ x)^*\ m^+\tag{class definition}\\
    d           \quad&::=\quad c\ |\ c[e]\ |\ \textbf{int}[e] \tag{class and arrays}\\
    t			\quad&::=\quad \textbf{int}\ |\ c\ |\ \textbf{int}[]\ |\ c[]\tag{data type}\\
    y          \quad&::=\quad x\ |\ x[e] \tag{variable identifiers}\\
    m			\quad&::=\quad \textbf{method}\ q\textbf{\texttt{(}}t\ x,\ \dots,\ t\ x\textbf{\texttt{)}}\ s\tag{method}\\
    s			\quad&::=\quad y\ \odot\textbf{\texttt{=}}\ e\ |\ y\ \textbf{\texttt{<=>}}\ y\tag{assignment}\\\
    			&\ |\ \qquad \textbf{if}\ e\ \textbf{then}\ s\ \textbf{else}\ s\ \textbf{fi}\ e\tag{conditional}\\
    			&\ |\ \qquad \textbf{from}\ e\ \textbf{do}\ s\ \textbf{loop}\ s\ \textbf{until}\ e\tag{loop}\\
                &\ |\ \qquad \textbf{construct}\ c\ x\quad s\quad\textbf{destruct}\ x\tag{object block}\\
                &\ |\ \qquad \textbf{local}\ t\ x\ \texttt{=}\ e\quad s\quad\textbf{delocal}\ t\ x\ \texttt{=}\ e\tag{local variable block}\\
                &\ |\ \qquad \textbf{new}\ d\ y\ |\ \textbf{delete}\ d\ y \tag{object con- and destruction}\\
                &\ |\ \qquad \textbf{copy}\ d\ y\ y\ |\ \textbf{uncopy}\ d\ y\ y \tag{reference con- and destruction}\\
    			&\ |\ \qquad \textbf{call}\ q\textbf{\texttt{(}}x,\ \dots,\ x\textbf{\texttt{)}}\ |\ \textbf{uncall}\ q\textbf{\texttt{(}}x,\ \dots,\ x\textbf{\texttt{)}}\tag{local method invocation}\\
    			&\ |\ \qquad \textbf{call}\ y\textbf{\texttt{::}}q\textbf{\texttt{(}}x,\ \dots,\ x\textbf{\texttt{)}}\ |\ \textbf{uncall}\ y\textbf{\texttt{::}}q\textbf{\texttt{(}}x,\ \dots,\ x\textbf{\texttt{)}}\tag{method invocation}\\
    			&\ |\ \qquad \textbf{skip}\ |\ s\ s\tag{statement sequence}\\
    e			\quad&::=\quad \overline{n}\ |\ x\ |\ x[e]\ |\ \textbf{\texttt{nil}}\ |\ e\ \otimes\ e\tag{expression}\\
    \odot	\quad&::=\quad \textbf{\texttt{+}}\ |\ \textbf{\texttt{-}}\ |\ \textbf{\texttt{\^}}\tag{operator}\\
    \otimes\quad&::=\quad \odot\ |\ \textbf{\texttt{*}}\ |\ \textbf{\texttt{/}}\ |\ \textbf{\texttt{\%}}\ |\ \textbf{\texttt{\&}}\ |\ \textbf{\texttt{|}}\ |\ \textbf{\texttt{\&\&}}\ |\ \textbf{\texttt{||}}\ |\ \textbf{\texttt{<}}\ |\ \textbf{\texttt{\textgreater}}\ |\ \textbf{\texttt{=}}\ |\ \textbf{\texttt{!=}}\ |\ \textbf{\texttt{<=}}\ |\ \textbf{\texttt{\textgreater=}}\tag{operator}
    \end{align}
    \vspace{2mm}
    \textbf{Syntax Domains}
    \begin{align*}
    prog &\in \text{Programs} & s &\in \text{Statements}      & n &\in \text{Constants} \\
    cl   &\in \text{Classes}  & e &\in \text{Expressions}     & x &\in \text{VarIDs}    \\
    t    &\in \text{Types}    & \odot &\in \text{ModOps}      & q &\in \text{MethodIDs} \\
    m    &\in \text{Methods}  & \otimes &\in \text{Operators} & c &\in \text{ClassIDs}
    \end{align*}
    \caption{Syntax domains and EBNF grammar for \rooplpp}
    \label{fig:roopl-grammar}
\end{figure}

The \rooplpp grammar extends the grammar of \textsc{Roopl}  with a new static integer or class array type and a new object lifetime option in form of objects outside of blocks, using the \textbf{new} and \textbf{delete} approach. Furthermore, the local block extension proposed in~\cite{th:roopl} has become a standard part of the language. Class definitions remains unchanged, and consists of a \textbf{class} keyword followed by a class name. Subclasses must be specified using the \textbf{inherits} keyword and a following parent class name. Classes can have any number of fields of any of the data types, including the new Array type. A class definition is required to include at least one method, defined by the \textbf{method} keyword followed by a method name, a comma-separated list of parameters and a body.

Reversible assignments for integer variables and integer array elements uses similar syntax as \textsc{Janus} assignments, by updating a variable through any of the addition (\texttt{+=}), subtraction (\texttt{-=}) or bitwise XOR (\texttt{\^{}=}) operators. As with \textsc{Janus}, when updating a variable $x$ using any of said operators, the right-hand side of the operator argument must be entirely independent of $x$ to maintain reversibility. Usage of these reversible assignment operators for object or array variables is undefined. Variables and array elements of any type can be swapped using the \texttt{<=>} operator as long as the variable is of same type as the array type. If an array is of a base class type, subclass variable values can be swapped in and out of the array, as long as the resulting value in the variable is still of the original subclass type.  

\rooplpp objects can be instantiated in two ways. Either using object blocks known from \textsc{Roopl}, or by using the \textbf{new} statement. The object-blocks have a statically-scoped lifetime, as the object only exists within the \textbf{construct} and \textbf{destruct} segments. Using \textbf{new} allows the object to live until program termination, if the program terminates with a \textbf{delete} call. By design, it is the programmers responsibility to deallocate objects instantiated by the \textbf{new} statement.

Arrays are also instantiated by usage of \textbf{new} and \textbf{delete}. Assignment of array cells depend on the type of the arrays, which is further discussed in section~\ref{sec:array-instantiation}.

The methodologies for argument aliasing and its restrictions on method on invocations from \textsc{Roopl} carries over in \rooplpp and object fields are as such disallowed as arguments to local methods to prevent irreversible updates and non-local method calls to a passed objects are prohibited. The parameter passing scheme remains call-by-reference and the object model of \textsc{Roopl} remains largely unchanged in \rooplpp.

\section{Object Instantiation}
\label{sec:object-instantiation}
Object instantiation through the \textbf{new} statement follows the pattern of the mechanics known from the \textbf{construct}/\textbf{destruct} blocks from \textsc{Roopl}, but providing improved scoping and lifetime options objects. The mechanisms of the statement 

\begin{align*}
\textbf{construct}\ c\ x\ \quad s\ \quad \textbf{destruct}\ x
\end{align*}

are as follows:

\begin{enumerate}
\item Memory for an object of class $c$ is allocated. All fields are automatically zero-initialized by virtue of residing in already zero-cleared memory.
\item The block statement $s$ is executed, with the name $x$ representing a reference to the newly allocated object.
\item The reference $x$ may be modified by swapping its value with that of other references of the same type, but it should be restored to its original value within the statement block $s$, otherwise the meaning of the object block is undefined.
\item Any state that is accumulated within the object should be cleared or uncomputed before the end of the statement is reached, otherwise the meaning of the object block is undefined.
\item The zero-cleared memory is reclaimed by the system.
\end{enumerate}

The statement pair consisting of

\begin{align*}
\textbf{new}\ c\ x\ \quad \dots\ \quad \textbf{delete}\ c\ x\
\end{align*}

could be considered a \textit{dynamic} block, meaning we can have overlapping blocks. Compared to \textbf{construct}/\textbf{destruct} block consisting of a single statement, the \textbf{new}/\textbf{delete} block consist of two separate statements. We can as such initialize an object $x$ of class $c$ and an object $y$ of class $d$ and destroy $x$ before we destroy $y$, a feature that was not possible in \textsc{Roopl}. The mechanisms of the \textbf{new} statement are as follows:

\begin{enumerate}
    \item Memory for an object of class $c$ is allocated. All fields are automatically zero-initialized by virtue of residing in already zero-cleared memory.
    \item The address of the newly allocated block is stored in the previously defined and zero-cleared reference $x$.
\end{enumerate}

and the mechanisms of the \textbf{delete} statement are as follow

\begin{enumerate}
    \item The reference $x$ may be modified by swapping its internal field values with that of other references of the same type, but should be zero-cleared before a \textbf{delete} statement is called on $x$, otherwise the meaning of the object deletion is undefined.
    \item Any state that is accumulated within the object should be cleared or uncomputed before the \textbf{delete} statement is executed, otherwise the meaning of the object block is undefined.
    \item The zero-cleared memory is reclaimed by the system.
\end{enumerate}

The mechanisms of the \textbf{new} and \textbf{delete} statements are, essentially, a split of the mechanisms of the \textbf{construct}/\textbf{destruct} blocks into two separate statements. As with \textsc{Roopl}, fields must be zero-cleared after object deletion, otherwise it is impossible for the system to reclaim the memory reversibly. This is the responsibility of the of the programmer to maintain this, and to ensure that objects are indeed deleted in the first place. A \textbf{new} statement without a corresponding \textbf{delete} statement targeting the same object further ahead in the program is undefined, as is a delete statement without a preceding \textbf{new} statement.

Note that variable scopes are always static, but object scopes can be either static (using \textbf{construct}/\textbf{destruct}) or dynamic (using \textbf{new}/\textbf{delete}).
\newpage

\section{Array Model}
\label{sec:array-model}
Besides asymmetric object lifetimes, \rooplpp also introduces reversible, fixed-sized arrays of either integer or object types. While \textsc{Roopl} only featured integers and custom data types in form of classes, one of its main inspirations, \textsc{Janus}, implemented static, reversible arrays~\cite{ty:janus}. 

While \textsc{Roopl} by design did not include any data storage language constructs, as they are not especially noteworthy nor interesting from an OOP perspective, they do generally improve the expressiveness of the language. Arrays were decided to be part of the core language for this reason, as one of the main goals of \rooplpp is increased expressiveness while implementing reversible programs.

In \rooplpp, arrays expand upon the array model from \textsc{Janus}. Arrays are indexed by integers, starting from 0. In \textsc{Janus}, only integer arrays were allowed, while in \rooplpp arrays of any type can be defined, meaning either integer arrays or custom data types in form of class arrays. They are however, still restricted to one dimension.

Array element accessing is accomplished using the bracket notation known from \textsc{Janus}. Accessing an out-of-bounds index is undefined.
Array instantiation and element assignments, aliasing and circularity is described in detail in the following section.

Arrays can contain elements of different classes sharing a base class, that is, say class $A$ and $B$ both inherit from some class $C$ and array $x$ is of type $C[]$. In this case, the array can hold elements of type $A$, $B$, and $C$. When swapping array elements from a base class array with object references, the programmer must be careful not to swap the values of, say, and $A$ object into a $B$ reference.

\section{Array Instantiation}
\label{sec:array-instantiation}
Array instantiation uses the \textbf{new} and \textbf{delete} keywords to reversibly construct and destruct array types. The mechanisms of the statement
\begin{align*}
\textbf{new}\ \textbf{int}[e]\ x
\end{align*} 
in which we reserve memory for an integer array are as follows

\begin{enumerate}
    \item The expression $e$ is evaluated
    \item Memory equal to the integer value that $e$ evaluates to and an additional small amount of memory for of overhead is reserved for the array.
    \item The address of the newly allocated memory is stored in the previously defined and zero-cleared reference $x$.
\end{enumerate}

In \rooplpp, we only allow instantiation of fixed-sized arrays of a length defined in the given expression $e$. Array elements are assigned dependent on the type of the array. For integer arrays, any of the reversible assignment operators can be used to assign values to cells. For class arrays, we assign cell elements a little differently. We either make use of the \textbf{new} and \textbf{delete} statements, but instead of specifying which variable should hold the newly created/deleted object or array, we specify which array cell it should be stored in or we use the \textbf{swap} statement to swap values in and out of array cells. Usage of the assignment operators on non-integer arrays is undefined.

\begin{lstlisting}[caption={Assignment of array elements}, language=roopl, style=basic,label={lst:array-assignment}]
    new int[5] intArray         // Init new integer array
    new Foo[2] fooArray         // init new Foo array

    intArray[1] += 10           // Legal array integer assignment
    intArray[1] -= 10           // Legal Zero-clearing for integer array cells

    new Foo fooObject
    fooArray[0] <=> fooObject   // Legal object array cell assignment
    new Foo fooArray[2]         // Legal object array cell assignment

    ...                         // Clear all array cells

    delete Foo fooArray[0]      // Legal object array cell zero-clearing
    delete Foo fooArray[1]      // Legal object array cell zero-clearing
\end{lstlisting}
 
As with \rooplpp objects instantiated outside of \textbf{construct}/\textbf{destruct} blocks, arrays must be deleted before program termination to reversibly allow the system to reclaim the memory. Before deletion of an array, all its elements must be zero-cleared such that no garbage data resides in memory after erasure of the array reference.

Consider the statement
\begin{align*}
\textbf{delete}\ \textbf{int}[e]\ x
\end{align*}

with the following mechanics

\begin{enumerate}
\item The reference $x$ may be modified by swapping, assigning cell element values and zero-clearing cell element values, but must be restored to an array of same type with fully zero-cleared cells before the \textbf{delete} statement. Otherwise, the meaning of the statement is undefined.
\item The value of the expression $e$ is evaluated and used to reclaim the allocated memory space. 
\item If the reference $x$ is a fully zero-cleared array upon the \textbf{delete} statement execution, the zero-cleared memory is reclaimed by the system.
\end{enumerate}

With reversible, fixed-sized arrays of varying types, we must be extremely careful when updating and assigning values, to ensure we maintain reversibility and avoid irreversible statements. Therefore, when assigning or updating integer elements with one of the reversible assignment operators, we prohibit the cell value from being reference on the right hand side, meaning the following statement is prohibited
\begin{align*}
x[5]\ \texttt{+=}\ x[5] + 1 
\end{align*}

However, we do allow other initialized, non-zero-cleared array elements from the same array or arrays of same type to be referenced in the right hand side of the statement. As with regular assignment, we still prohibit the left side reference to occur in the ride side, meaning the following statements are also prohibited
\begin{align*}
x\ \texttt{+=}\ y[x]\\
y[x]\ \texttt{+=}\ x
\end{align*}

\newpage
\section{Referencing}
\label{sec:referencing-layout}
Besides the addition of dynamically lifetimed objects and arrays, \rooplpp also increases program flexibility by allowing multiple references to objects and arrays through the usage of the \textbf{copy} statement. Once instantiated through either a \textbf{new} or \textbf{construct/destruct} block, an object or array reference can be copied into another zero-cleared variable. The reference acts as a regular instance and can be modified through methods as per usual. To delete a reference, the logical inverse statement \textbf{uncopy} must be used.

The syntax for referencing consists of the statement
\begin{align*}
    \textbf{copy}\ c\ x\ x'
\end{align*}
which copies a reference of variable $x$, an instance of class or array $c$, and stores the reference in variable $x'$.

For deleting copies, the following statement is used
\begin{align*}
    \textbf{uncopy}\ c\ x\ x'
\end{align*}    
which simply zero-clears variable $x'$, which is a reference to variable $x$, an instance of class or array $c$.

The mechanism of the \textbf{copy} statement is simply as follows
\begin{enumerate}
    \item The memory address stored in variable $x$ is copied into the zero-cleared variable $x'$. If $x'$ is not zero-cleared or $x$ is not a class instance, then \textbf{copy} is undefined.
\end{enumerate}

The mechanism of the \textbf{uncopy} statement is simply as follows
\begin{enumerate}
    \item The memory address stored in variable $x'$ is zero-cleared if it matches the address stored in $x$. If $x'$ is not a copy of $x$ or $x$ has been zero-cleared before the \textbf{uncopy} statement is executed, said statement is undefined.
\end{enumerate}
As references do not require all fields or cells to be zero-cleared (as they are simple pointers to existing objects or arrays), the reversible programmer should carefully ensure that all references are un-copied before deleting said object or array, as copied references to cleared objects or arrays would be pointing to cleared memory, which might be used later by the system. These type of references are also known as \textit{dangling pointers}.

It should be noted, that from a language design perspective, it is the programmer's responsibility to ensure such situations do not occur. From an implementation perspective, such situations are usually checked by the compiler either statically during compilation or during the actual runtime of the program. This is addressed later in sections~\ref{sec:referencing} and~\ref{sec:error-handling}.

\newpage
\section{Local Blocks}
\label{sec:local-blocks}
The local block presented in the extended \textsc{Janus} in~\cite{ty:ejanus} consisted of a local variable allocation, a statement and a local variable deallocation. These local variable blocks add immense programmer usability as the introduce a form of reversible temporary variable. The \textsc{Roopl} compiler features support for local integer blocks, but not object blocks. In \rooplpp, local blocks can be instantiated with all of the languages variable types; integers, arrays and user-defined types in the form of objects.

Local integer blocks works exactly the same as in \textsc{Roopl} and \textsc{Janus}, where the local variable initialized will be set to the evaluated result of a given expression.

Local array and object blocks feature a number of different options. If a local array or object block is initialized with a \textbf{nil} value, the variable must afterwards be initialized using a \text{new} statement before any type-specific functionality is accessible. If the block is initiated with an existing object or array reference, the local variable essentially becomes a reference copy, analogous to a variable initialized from a \textbf{copy} statement. 

\begin{figure}[h]
    \centering
    \begin{equation*}
        \textbf{construct}\ c\ x\ \quad s\ \quad \textbf{destruct}\ x\ \qquad \overset{\textbf{def}}{\scalebox{1.8}{=}} \qquad \begin{array}{ l }
            \textbf{local}\ c\ x\ = \textbf{nil}\\
            \textbf{new}\ c\ x \quad s \quad \textbf{delete}\ c\ x\\
            \textbf{delocal}\ c\ x = \textbf{nil} 
        \end{array}
      \end{equation*}
    \caption{\textbf{construct}/\textbf{destruct}-blocks can be considered syntactic sugar}
    \label{fig:sugar-construct-destruct}
\end{figure}

For objects, the \textbf{construct}/\textbf{destruct}-blocks can be considered syntactic sugar for a local block defined with a \textbf{nil} value, containing a \textbf{new} statement in the beginning of its statement block and a \textbf{delete} statement in the very end, as shown in figure~\ref{fig:sugar-construct-destruct}.

As local array and object blocks allow freedom in terms of their interaction with other statements in the language, it is the programmer's responsibility that the local variable is deallocated using a correct expression at the end of the block definition. The value of the variable is a pointer to an object or an array. Said object or array must have all fields/cells zero-cleared before the pointer is zero-cleared at the end of the local block. If the pointer is at any point exchanged with the pointer of another object or array using the \textbf{swap} statement, the same conditions apply.  

\newpage
\section{\rooplpp Expressiveness}
\label{sec:rooplpp-expressiveness}
By introducing dynamic lifetime objects and by allowing objects to be referenced multiple times, we can express non-trivial reversible programs. To demonstrate the capacities, expressiveness and possibilities of \rooplpp, the following section presents previously unseen reversible data structures, which now are feasible, written in \rooplpp.

\subsection{Linked List}
\label{subsec:linked-list}
\citeauthor{th:roopl} presented a linked list implemented in \textsc{Roopl} in~\cite{th:roopl}. The implementation featured a \textit{ListBuilder} and a \textit{Sum} class, required to determine and retain the sum of a constructed linked list as the statically scoped object blocks of \textsc{Roopl} would deallocate automatically after building the full list. In \rooplpp, we do not face the same challenges and the implementation becomes much more straightforward. Figure~\ref{fig:linked-list-class} implements a \textit{LinkedList} class, which simply has the head of the list and the list length as its internal fields. For demonstration, the class allows extension of the list by either appending or prepending cell elements to the list. In either case, we first check if the \textit{head} field is initialized. If not, the cell we are either appending or prepending simply becomes the new head of the list. If we are appending a cell the Cell-class \textit{append} method is called on the \textit{head} cell with the new cell as its only argument. When prepending, the existing head is simply appended to the new cell and the new cell is set as head of the linked list.

\begin{figure}[ht!]
    \centering
    \begin{lstlisting}[style = basic, language = roopl] 
    class Cell
        Cell next
        int data
    
        method constructor(int value)
            data ^= value     
    
        method append(Cell cell)
            if next = nil & cell != nil then
                next <=> cell           // Store as next cell if current cell is end of list
            else skip
            fi next != nil & cell = nil
    
            if next != nil then
                call next::append(cell) // Recursively search until we reach end of list
            else skip
            fi next != nil
    \end{lstlisting}
    \caption{Linked List cell class}
    \label{fig:linked-list-cell-class}
\end{figure}

Figure~\ref{fig:linked-list-cell-class} shows the \textit{Cell} class of the linked list which has a \textit{next} and a \textit{data} field, a constructor and the \textit{append} method. The append method works by recursively looking through the linked cell nodes until we reach the end of the free list, where the \textit{next} field has not been initialized yet. When we find such a cell, we simply swap the contents of the \textit{next} and \textit{cell} variables, s.t. the cell becomes the new end of the linked list.

\begin{figure}[ht!]
    \centering
    \begin{lstlisting}[style = basic, language = roopl]                
    class LinkedList
        Cell head
        int listLength
    
        method insertHead(Cell cell)
            if head = nil & cell != nil then
                head <=> cell               // Set cell as head of list if list is empty
            else skip
            fi head != nil & cell = nil
    
        method appendCell(Cell cell)
            call insertHead(cell)           // Insert as head if empty list
    
            if head != nil then
                call head::append(cell)     // Iterate until we hit end of list
            else skip
            fi head != nil
    
            listLength += 1                 // Increment length
    
        method prependCell(Cell cell)
            call insertHead(cell)           // Insert as head if empty list
    
            if cell != nil & head != nil then
                call cell::append(head)     // Set cell.next = head. head = nil after execution
            else skip
            fi cell != nil & head = nil
    
            if cell != nil & head = nil then
                cell <=> head               // Set head = cell. Cell is nil after execution
            else skip
            fi cell = nil & head != nil
    
            listLength += 1                 // Increment length
    
        method length(int result)
            result ^= listLength  
    \end{lstlisting}
    \caption{Linked List class}
    \label{fig:linked-list-class}
\end{figure}

An interesting observation is that the \textit{append} method is called an additional time \textit{after} setting the cell as the new end of the linked list. In a non-reversible programming language, we would simply call append in the else-branch of the first conditional. In the reversible setting, this is not an option, as the append call would modify the value of the \textit{next} and \textit{cell} variables and as such, corrupt the control flow as the exit condition would be true after executing both the then- and else-branch of the conditional. To avoid this, we simply call one additional time with a \textbf{nil} value \textit{cell}.
This "wasted" additional call with a \textbf{nil} value is a recurring technique in the following presented reversible data structure implementations. 

\subsection{Binary Tree}
\label{subsec:binary-tree}
Figures~\ref{fig:binary-tree-class},~\ref{fig:binary-tree-node-class} and~\ref{fig:binary-tree-node-class-cont} shows the implementation of a binary tree in form of a  rooted, unbalanced, min-heap. The \textit{Tree} class shown in figure~\ref{fig:binary-tree-class} has a single root node field and the three methods \textit{insertNode, sum} and \textit{mirror}. For insertion, the \textit{insertNode} method is called from the \textit{root}, if it is initialized and if not, the passed node parameter is simply set as the new root of the tree. The \textit{insertNode} method implemented in the \textit{Node} class shown in figure~\ref{fig:binary-tree-node-class} first determines if we need to insert left or right but checking the passed value against the value of the current node. This is done recursively, until an uninitialized node in the correct subtree has been found. Note that as a consequence of reversibility, the value of node we wish to insert must be passed separately in the method call as we otherwise cannot zero-clear it after swapping the node we are inserting with either the right or left child of the current cell.

\begin{figure}[ht!]
    \centering
    \begin{lstlisting}[style = basic, language = roopl] 
    class Tree
        Node root
        
        method insertNode(Node node, int value)
            if root = nil & node != nil then
                root <=> node
            else skip
            fi root != nil & node = nil
    
            if root != nil then
                call root::insertNode(node, value)
            else skip
            fi root != nil
    
        method sum(int result)
            if root != nil then
                call root::getSum(result)
            else skip
            fi root != nil
    
        method mirror()
            if root != nil then
                call root::mirror()
            else skip
            fi root != nil
    \end{lstlisting}
    \caption{Binary Tree class}
    \label{fig:binary-tree-class}
\end{figure}

Summing and mirroring the tree works in a similar fashion by recursively iterating each node of the tree. For summing we simply add the value of the node to the sum and for mirroring we swap the children of the node and then recursively swap the children of the left and right node, if initialized. The sum and mirror methods are implemented in figure~\ref{fig:binary-tree-node-class-cont}.

\begin{figure}[ht!]
    \centering
    \begin{lstlisting}[style = basic, language = roopl] 
    method getSum(int result)
        result += value                  // Add the value of this node to the sum   

        if left != nil then
            call left::getSum(result)   // If we have a left child, follow that path
        else skip                       // Else, skip
        fi left != nil

        if right != nil then
            call right::getSum(result)  // If we have a right child, follow that path
        else skip                       // Else, skip
         fi right != nil

    method mirror()
        left <=> right                  // Swap left and right children

        if left = nil then skip
        else call left::mirror()        // Recursively swap children if left != nil
        fi left = nil

        if right = nil then skip
        else call right::mirror()       // Recursively swap children if right != nil
        fi right = nil 
    \end{lstlisting}
    \caption{Binary Tree node class (cont)}
    \label{fig:binary-tree-node-class-cont}
\end{figure}

\begin{figure}[ht!]
    \centering
    \begin{lstlisting}[style = basic, language = roopl] 
    class Node
        Node left
        Node right
        int value
    
        method setValue(int newValue)
            value ^= newValue 
    
        method insertNode(Node node, int nodeValue)
            // Determine if we insert left or right
            if nodeValue < value then
                if left = nil & node != nil then
                    // If open left node, store here
                    left <=> node
                else skip
                fi left != nil & node = nil
    
                if left != nil then
                    // If current node has left, continue iterating
                    call left::insertNode(node, nodeValue)
                else skip
                fi left != nil
            else
                if right = nil & node != nil then
                    // If open right node spot, store here
                    right <=> node
                else skip
                fi right != nil & node = nil
    
                if right != nil then
                    // If current node has, continue searching
                    call right::insertNode(node, nodeValue)
                else skip
                fi right != nil
            fi nodeValue < value
    \end{lstlisting}
    \caption{Binary Tree node class}
    \label{fig:binary-tree-node-class}
\end{figure}

\subsection{Doubly Linked List}
\label{subsec:doubly-linked-list}
Finally, we present the reversible doubly linked list, shown in figures~\ref{fig:doubly-linked-list-class}-\ref{fig:doubly-linked-list-cell-class-cont}. A \textit{cell} in a doubly linked list contains a reference to itself named \textit{self}, a reference to its left and right neighbours, a data and an index field. As with the linked list and binary tree implementation the \textit{DoubleLinkedList} class has a field referencing the head of the list and its \textit{appendCell} method is identical to the one of the linked list. 

\begin{figure}[ht]
    \centering
    \begin{tikzpicture}
        \draw (0,0) rectangle (2,2) node[midway] {$c_1$};
        \draw (3,0) rectangle (5,2) node[midway] {$c_2$};
        \draw (6,0) rectangle (8,2) node[midway] {$c_3$};

        \node[circle,fill,inner sep=1pt] at (1.25, 1.75) {};
        \draw[-latex] (1.25, 1.75) to[out=45, in=135, distance=.85cm] (.75, 1.70);

        \node[circle,fill,inner sep=1pt] at (3.25, .5) {};
        \draw[-latex] (3.25, .5) to[out=-135, in=-45] (1.75, .5);

        \node[circle,fill,inner sep=1pt] at (4.25, 1.75) {};
        \draw[-latex, dotted] (4.25, 1.75) to[out=45, in=135, distance=.85cm] (3.75, 1.70);

        \node[circle,fill,inner sep=1pt] at (6.25, .5) {};
        \draw[-latex, dotted] (6.25, .5) to[out=-135, in=-45] (4.75, .5);

        \node[circle,fill,inner sep=1pt] at (1.75, 1.5) {};
        \draw[-latex, dotted] (1.75, 1.5) to[out=45, in=135] (3.25, 1.5);

        \node[circle,fill,inner sep=1pt] at (4.75, 1.5) {};
        \draw[-latex, dashed] (4.75, 1.5) to[out=45, in=135] (6.25, 1.5);

        \node[circle,fill,inner sep=1pt] at (7.25, 1.75) {};
        \draw[-latex, dashed] (7.25, 1.75) to[out=45, in=135, distance=.85cm] (6.75, 1.70);
    \end{tikzpicture}
    \caption{Multiple identical reference are needed for a doubly linked list implementation}
    \label{fig:doubly-linked-list-reference}
\end{figure}
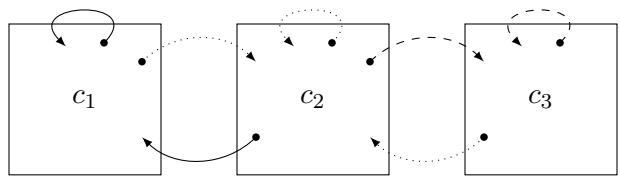

This data structure is particularly interesting, as it, unlike the former two presented structures, cannot be expressed in \textsc{Roopl}, as this requires multiple reference to objects, in order for an object to point to itself and to its left and right neighbours. Figure~\ref{fig:doubly-linked-list-reference} shows the multiple references needed for the doubly linked list implementation denoted by the three different arrow types.

\begin{figure}[ht!]
    \centering
    \begin{lstlisting}[style = basic, language = roopl] 
    class DoublyLinkedList
        Cell head
        int length
    
        method appendCell(Cell cell)
            if head = nil & cell != nil then
                head <=> cell
            else skip
            fi head != nil & cell = nil
    
            if head != nil then 
                call head::append(cell)
            else skip
            fi head != nil
    
            length += 1
    \end{lstlisting}
    \caption{Doubly Linked List class}
    \label{fig:doubly-linked-list-class}
\end{figure}

When we append a cell to the list, we first search recursively through the list until we are at the end. The new cell is then set as \textit{right} of the current cell. A reference to the current self is created using the \textbf{copy} statement, and set as \textit{left} of the new end of the list, thus resulting in the new cell being linked to list and now acting as end of the list.

\begin{figure}[ht!]
    \centering
    \begin{lstlisting}[style = basic, language = roopl] 
        class Cell
        int data
        int index
        Cell left
        Cell right
        Cell self
    
        method setData(int value)
            data ^= value
    
        method setIndex(int i)
            index ^= i    
    
        method setLeft(Cell cell)
            left <=> cell
    
        method setRight(Cell cell)
            right <=> cell
    
        method setSelf(Cell cell)
            self <=> cell
    \end{lstlisting}
    \caption{Doubly Linked List Cell class}
    \label{fig:doubly-linked-list-cell-class}
\end{figure}

\begin{figure}[ht!]
    \centering
    \begin{lstlisting}[style = basic, language = roopl]
    method append(Cell cell)
        if right = nil & cell != nil then   // If current cell does not have a right neighbour
            right <=> cell                  // Set new cell as right neighbour of current cell
        
            local Cell selfCopy = nil      
            copy Cell self selfCopy         // Copy reference to current cell
            call right::setLeft(selfCopy)   // Set current as left of  right neighbour
            delocal Cell selfCopy = nil

            local int cellIndex = index + 1
            call right::setIndex(cellIndex) // Set index in right neighbour of current
            delocal int cellIndex = index + 1
        else skip
        fi right != nil & cell = nil

        if right != nil then
            call right::append(cell)        // Keep searching for empty right neighbour
        else skip
        fi right != nil 
    \end{lstlisting}
    \caption{Doubly Linked List Cell class (cont)}
    \label{fig:doubly-linked-list-cell-class-cont}
\end{figure}

The data structure could relatively easily be extended to work as a dynamic array. Currently each cell contains an \textit{index} field, specifying their position in the list. If, say, we wanted to insert some new data at index $n$, without updating the existing value, but essentially squeezing in a new cell, we could add a method to the \textit{DoublyLinkedList} class taking a data value and an index. When executing this method, we could iterate the list until we reach the cell with index $n$, construct a new \textit{cell} instance, update required \textit{left} and \textit{right} pointers to insert the new cell at the correct position, in such a way that the old cell at index $n$ now is the new right neighbour of the cell and finally recursively iterating the list, incrementing the index of cells to the right of the new cell by one. In reverse, this would remove a cell from the list. If we want to update an existing value at a index, a similar technique could be used, where we iterate through the cells until we find the correct index. If we are given an index that is out of bounds in terms of the current length of the list, we could extend the tail on the list until reach a cell with the wanted index. When we are zero-clearing a value that is the furthest index, the inverse would apply, and a such we would zero-clear the cell, and the deallocate cells until we reach a cell which does not have a zero-cleared \textit{data} field. 

This extended doubly linked list would also allow lists of n-dimensional lists, as the type of the \textit{data} field simply could be changed to, say, a \textit{FooDoublyLinkedList}, resulting in an array of Foo arrays. 
\newpage

\section{Type System}
\label{sec:type-system}
The type system of \rooplpp expands on the type system of \textsc{Roopl} presented by~\citeauthor{th:roopl}~\cite{th:roopl} and is analogously described by syntax-directed inference typing rules in the style of ~\citeauthor{wi:semantics}~\cite{wi:semantics}. As \rooplpp introduces two new types in form of \textit{references} and arrays, a few \textsc{Roopl} typing rules must be modified to accommodate these added types. For completeness all typing rules, including unmodified rules, are included in the following sections. 

\subsection{Preliminaries}
\label{subsec:preliminaries}
The types in \rooplpp are given by the following grammar:
\begin{equation*}
    \tau ::= \textbf{int}\ |\ c \in \text{ClassIDs}\ |\ r \in \text{ReferenceIDs}\ |\ i \in \text{IntegerArrayIDs}\ |\ o \in \text{ClassArrayIDs}
\end{equation*}
The type environment $\Pi$ is a finite map pairing variables to types, which can be applied to an identifier $x$ using the $\Pi(x)$ notation. Notation $\Pi' = \Pi[x \mapsto \tau]$ defines updates and creation of a new type environment $\Pi'$ such that $\Pi'(x) = \tau$ and $\Pi'(y) = \Pi(y)$ if $x \not= y$, for some variable identifier $x$ and $y$. The empty type environment is denoted as $[\ ]$ and the function $vars\ :\ Expressions\ \to \text{VarIDs}$ is described by the following definition
\begin{align*}
    &\text{vars}(\bar{n}) &&= \emptyset\\
    &\text{vars}(\textbf{nil}) &&= \emptyset\\
    &\text{vars}(x) &&= \{\ x\ \}\\
    &\text{vars}(x[e]) &&= \{\ x\ \} \cup \text{vars}(e)\\
    &\text{vars}(e_1 \otimes e_2) &&= \text{vars}(e_1) \cup \text{vars}(e_2).
\end{align*}
The binary subtype relation $c_1 \prec: c_2$ is required for supporting subtype polymorphism and is defined as follows:
\begin{align*}
    c_1 &\prec: c_2 &&\text{if}\ c_1\ \text{inherits from}\ c_2\\
    c   &\prec: c   &&(reflexivity)\\
    c_1 &\prec: c_3 &&\text{if}\ c_1 \prec: c_2\ \text{and}\ c_2 \prec: c_3\ (transitivity)
\end{align*}

Furthermore, we formally define object models in such a way that inherited fields and methods are included, unless overridden by the derived fields. Therefore, we define $\Gamma$ to be the class map of a program $p$, such that $\Gamma$ is a finite map from class identifiers to tuples of methods and fields for the class $p$. Application of a class map $\Gamma$ to some class $cl$ is denoted as $\Gamma(cl)$. Construction of a class map is done through function $gen$, as shown in figure~\ref{fig:class-gen}. Figure~\ref{fig:fields-methods} defines the $fields$ and $methods$ functions to determine these given a class. Set operation $\uplus$ defines method overloading by dropping base class methods if a similarly named method exists in the derived class. The definitions shown in Figure~\ref{fig:class-gen} and~\ref{fig:fields-methods} are originally from~\cite{th:roopl}.
 
\begin{figure}[ht]
    \centering
    $\text{gen}\Big(\overbrace{cl_1,\ ...,\ cl_n}^p\Big) = \overbrace{\Big[ \alpha(cl_1) \mapsto \beta(cl_1),\ ...,\ \alpha(cl_n) \mapsto \beta(cl_n) \Big]}^\Gamma$\\

    \vskip 2em
    
    $\alpha\Big(\textbf{class}\ c\ \cdots\Big) = c$
    \hskip 2em
    $\beta(cl) = \Big(\text{fields}(cl),\ \text{methods}(cl) \Big)$

    \caption{Definition $gen$ for constructing the finite class map $\Gamma$ of a given program $p$, originally from ~\cite{th:roopl}}
    \label{fig:class-gen}
\end{figure}

\begin{figure}[H]
    \centering
    \begin{equation*}
        \text{fields}(cl) = \begin{cases}
            \eta(cl)  & \text{if}\ cl \sim [ \textbf{class}\ c\ \cdots ]\\
            \eta(cl) \cup \text{fields}\Big(\alpha^{-1}(c')\Big) & \text{if}\ cl \sim [ \textbf{class}\ c\ \textbf{inherits}\ c'\ \cdots ]
        \end{cases}
    \end{equation*}

    \vskip 1em

    \begin{equation*}
        \text{methods}(cl) = \begin{cases}
            \delta(cl) & \text{if}\ cl \sim [ \textbf{class}\ c\ \cdots ]\\
            \delta(cl) \uplus \text{methods}\Big(\alpha^{-1}(c')\Big) & \text{if}\ cl \sim [\textbf{class}\ c\ \textbf{inherits}\ c'\ \cdots]
        \end{cases}
    \end{equation*}

    \vskip 1em

    \begin{equation*}
        A\ \uplus B\ \overset{def}{=}\ A \cup \bigg\{ m \in B\ \bigg\vert\ \nexists\  m' \Big( \zeta(m') = \zeta(m) \wedge m' \in A \Big) \bigg\}
    \end{equation*}

    $\zeta\Big( \textbf{method}\ q\ (\cdots)\ s \Big) = q$
    \hskip 2em
    $\eta\Big( \textbf{class}\ c\ \cdots\ \overbrace{t_1 f_1\ \cdots\ t_n f_n}^{fs}\ \cdots \Big) = fs$

    \begin{equation*}
        \delta\Big( \textbf{class}\ c\ \cdots \overbrace{\textbf{method}\ q_1\ (\cdots)\ s_1\ \cdots\ \textbf{method}\ q_n\ (\cdots)\ s_n }^{ms}\ \cdots \Big) = ms
    \end{equation*}
    \caption{Definition of fields and methods, originally from~\cite{th:roopl}}
    \label{fig:fields-methods}
\end{figure}

Finally, we formally define a link between arrays of a given type and other types. The function \textit{arrayType}, defined in figure~\ref{fig:array-type}, is $c$ if the passed array $a$ is an array of class $c$ instances.

\begin{figure}[ht]
    \centering
    \begin{equation*}
        \text{arrayType}(a) = \begin{cases}
            c  & \text{if}\ a \in ClassArrayIDs\ \text{and}\ a\ \text{is a}\ c\ \text{array}\\
            \textbf{int} & \text{if}\ a \in i
        \end{cases}
    \end{equation*}
    \caption{Definition \textit{arrayType} for mapping types of arrays to either class types or the integer type}
    \label{fig:array-type}
\end{figure}

\newpage
\subsection{Expressions}
\label{subsec:expressions}
The type judgment

\begin{prooftree}
	\AXC{}
    \UIC{$\Pi \vdash_{expr}\ e : \tau$} 
\end{prooftree} 

defines the type of expressions. The judgment reads as: under type environment $\Pi$, expression $e$ has type $\tau$.

\begin{figure}[ht]
    \begin{center}
        \AXC{} 
        \RL{\textsc{T-Con}}
        \UIC{$\Pi \vdash_{expr}\ n : \textbf{int}$} 
        \DP
        \hskip 2em 
        \AXC{$\Pi(x) = \tau$}
        \RL{\textsc{T-Var}}
        \UIC{$\Pi \vdash_{expr}\ x : \tau$} 
        \DP
        \hskip 2em 
        \AXC{$\tau \not= \textbf{int}$}
        \RL{\textsc{T-Nil}}
        \UIC{$\Pi \vdash_{expr}\ \textbf{nil}\ : \tau$}
        \DP
        
        \vskip 2em 
        \AXC{$\Pi \vdash_{expr}\ e_1 : \textbf{int}$}
        \AXC{$\Pi \vdash_{expr}\ e_2 : \textbf{int}$}
        \RL{\textsc{T-BinOpInt}}
        \BIC{$\Pi \vdash_{expr}\ e_1 \otimes e_2 : \textbf{int}$}
        \DP

        \vskip 2em 
        \AXC{$\Pi \vdash_{expr}\ e_1 : c$}
        \AXC{$\Pi \vdash_{expr}\ e_2 : c$}
        \AXC{$\otimes \in \{ \texttt{\textbf{=}}, \texttt{\textbf{!=}} \}$}
        \RL{\textsc{T-BinOpObj}}
        \TIC{$\Pi \vdash_{expr}\ e_1 \otimes e_2 : \textbf{int}$}
        \DP
    \end{center}
    \caption{Typing rules for expressions in \textsc{Roopl}, originally from~\cite{th:roopl}}
    \label{fig:typing-rules-expressions}
\end{figure}

The original expression typing rules from \textsc{Roopl} are shown in figure~\ref{fig:typing-rules-expressions}. The type rules \textsc{T-Con}, \textsc{T-Var} and \textsc{T-Nil} defines typing of the simplest expressions. Numeric literals are of type \textbf{int}, typing of variable expressions depends on the type of the variable in the type environment and the \textbf{nil} literal is a non-integer type. All binary operations are defined for integers, while only equality-operators are defined for objects.

With the addition of the \rooplpp array type, we extend the expression typing rules with rule \textsc{T-ArrElemVar} which defines typing for array element variables, shown in figure~\ref{fig:typing-rules-expression-extension}.

\begin{figure}[ht]
    \begin{center}
        \AXC{$\Pi(x) = \tau[\ ]$}
        \AXC{$\Pi_{expr} \vdash\ e\ :\ \textbf{int}$}
        \RL{\textsc{T-ArrElemVar}}
        \BIC{$\Pi \vdash_{expr}\ x[e] : \tau$} 
        \DP
    \end{center}
    \caption{Typing rule extension for the \textsc{Roopl} typing rules}
    \label{fig:typing-rules-expression-extension}
\end{figure}

\subsection{Statements}
\label{subsec:statements}
The type judgment

\begin{prooftree}
	\AXC{}
    \UIC{$\langle \Pi,\ c\rangle\ \vdash_{stmt}^\Gamma\ s$} 
\end{prooftree} 

defines well-typed statements. The judgment reads as under type environment $\Pi$ within class $c$, statement $s$ is well-typed with class map $\Gamma$.
\begin{figure}[H]
    \begin{center}
        \AXC{$x \not\in \text{vars}(e)$}
        \AXC{$\Pi \vdash_{expr}\ e : \textbf{int}$}
        \AXC{$\Pi(x) = \textbf{int}$}
        \RL{\textsc{T-AssVar}}
        \TIC{$\langle \Pi, c \rangle \vdash^{\Gamma}_{stmt}\ x\ \texttt{$\odot$=}\ e$}
        \DP

        \vskip 2em
        
        \AXC{$\Pi \vdash_{expr}\ e_1 : \textbf{int}$}
        \AXC{$\langle \Pi, c \rangle \vdash^{\Gamma}_{stmt}\ s_1$}
        \AXC{$\langle \Pi, c \rangle \vdash^{\Gamma}_{stmt}\ s_2$}
        \AXC{$\Pi \vdash_{expr}\ e_2 : \textbf{int}$}
        \RL{\textsc{T-If}}
        \QIC{$\langle \Pi, c \rangle \vdash^{\Gamma}_{stmt}\ \textbf{if}\ e_1\ \textbf{then}\ s_1\ \textbf{else}\ s_2\ \textbf{fi}\ e_2$}
        \DP

        \vskip 2em

        \AXC{$\Pi \vdash_{expr}\ e_1 : \textbf{int}$}
        \AXC{$\langle \Pi, c \rangle \vdash^{\Gamma}_{stmt}\ s_1$}
        \AXC{$\langle \Pi, c \rangle \vdash^{\Gamma}_{stmt}\ s_2$}
        \AXC{$\Pi \vdash_{expr}\ e_2 : \textbf{int}$}
        \RL{\textsc{T-Loop}}
        \QIC{$\langle \Pi, c \rangle \vdash^{\Gamma}_{stmt}\ \textbf{from}\ e_1\ \textbf{do}\ s_1\ \textbf{loop}\ s_2\ \textbf{until}\ e_2$}
        \DP

        \vskip 2em
        
        \AXC{$\langle \Pi[x \mapsto c'], c \rangle \vdash^{\Gamma}_{stmt}\ s$}
        \RL{\textsc{T-ObjBlock}}
        \UIC{$\langle \Pi, c \rangle \vdash^{\Gamma}_{stmt}\ \textbf{construct}\ c'\ x\quad s\quad \textbf{destruct}\ x$} 
        \DP
        \hskip 2em
        \AXC{}
        \RL{\textsc{T-Skip}}
        \UIC{$\langle \Pi, c \rangle \vdash^{\Gamma}_{stmt}\ \textbf{skip}$} 
        \DP

        \vskip 2em

        \AXC{$\langle \Pi, c \rangle \vdash^{\Gamma}_{stmt}\ s_1$}
        \AXC{$\langle \Pi, c \rangle \vdash^{\Gamma}_{stmt}\ s_2$}
        \RL{\textsc{T-Seq}}
        \BIC{$\langle \Pi, c \rangle \vdash^{\Gamma}_{stmt}\ s_1\ s_2$}
        \DP
        \hskip 2em
        \AXC{$\Pi(x_1) = \Pi(x_2)$}
        \RL{\textsc{T-SwpVar}}
        \UIC{$\langle \Pi, c \rangle \vdash^{\Gamma}_{stmt}\ x_1\ \texttt{\textbf{<=>}}\ x_2$}
        \DP

        \vskip 2em

        \AXC{$\Gamma(\Pi(c)) = \Big( fields,\ methods \Big)$}
        \DP
        \AXC{$\Big( \textbf{method}\ q(t_1\ y_1,\ ...,\ t_n\ y_n)\ s \Big) \in methods$}
        \DP
        \AXC{$\{ x_1,\ ...,\ x_n \} \cap fields = \emptyset$}
        \AXC{$i \not= j \implies x_i \not= x_j$}
        \AXC{$\Pi(x_1) \prec: t_1\ \cdots\ \Pi(x_n) \prec: t_n$}
        \RL{\textsc{T-Call}}
        \TIC{$\langle \Pi, c \rangle \vdash^{\Gamma}_{stmt}\ \textbf{call}\ q(x_1,\ ...,\ x_n)$}
        \DP

        \vskip 2em
        
        \AXC{$\Gamma(\Pi(x_0)) = \Big( fields,\ methods \Big)$}
        \DP
        \AXC{$\Big( \textbf{method}\ q(t_1\ y_1,\ ...,\ t_n\ y_n)\ s \Big) \in methods$}
        \DP
        \AXC{$i \not= j \implies x_i \not= x_j$}
        \AXC{$\Pi(x_1) \prec: t_1\ \cdots\ \Pi(x_n) \prec: t_n$}
        \RL{\textsc{T-CallO}}
        \BIC{$\langle \Pi, c \rangle \vdash^{\Gamma}_{stmt}\ \textbf{call}\ x_0::q(x_1,\ ...,\ x_n)$}
        \DP

        \vskip 2em

        \AXC{$\langle \Pi, c \rangle \vdash^{\Gamma}_{stmt}\ \textbf{call}\ q(x_1,\ ...,\ x_n)$}
        \RL{\textsc{T-Uc}}
        \UIC{$\langle \Pi, c \rangle \vdash^{\Gamma}_{stmt}\ \textbf{uncall}\ q(x_1,\ ...,\ x_n)$}
        \DP
        \hskip 2em
        \AXC{$\langle \Pi, c \rangle \vdash^{\Gamma}_{stmt}\ \textbf{call}\ x_0::q(x_1,\ ...,\ x_n)$}
        \RL{\textsc{T-Uco}}
        \UIC{$\langle \Pi, c \rangle \vdash^{\Gamma}_{stmt}\ \textbf{uncall}\ x_0::q(x_1,\ ...,\ x_n)$}
        \DP
    \end{center}
    \caption{Typing rules for statements in \textsc{Roopl}, originally from~\cite{th:roopl}}
    \label{fig:typing-rules-statements}
\end{figure}

Typing rule \textsc{T-AssVar} defines variable assignments for an integer variable and an integer expression result, given that the variable $x$ does not occur in the expression $e$. 

The type rules \textsc{T-If} and \textsc{T-Loop} defines reversible conditionals and loops as known from \textsc{Janus}, where entry and exit conditions are integers and branch and loop statements are well-typed statements. 

The object block, introduced in \textsc{Roopl}, is only well-typed if its body statement is well-typed. 

The \textbf{skip} statement is always well-typed, while a sequence of statements are well-typed if each of the provided statements are. Variable \textbf{swap} statements are well-typed if both operands are of the same type under type environment $\Pi$.

As with \textsc{Roopl}, type correctness of local method invocation is defined in rule \textsc{T-Call} iff:
\begin{itemize}
    \item The number of arguments matches the method arity
    \item No class fields are present in the arguments passed to the method (To prevent irreversible updates)
    \item The argument list contains unique elements
    \item Each argument is a subtype of the type of the equivalent formal parameter. 
\end{itemize}

For foreign method invocations, typing rule \textsc{T-CallO}. A foreign method invocation is well-typed using the same rules as for \textsc{T-Call} besides having no restrictions on class fields parameters in the arguments, but an added rule stating that the callee object $x_0$ must not be passed as an argument.

The typing rules \textsc{T-UC} and \textsc{T-UCO} defines uncalling of methods in terms of their respective inverse counterparts.

\begin{figure}[H]
    \begin{center}
        \AXC{$\Pi(x) = \textbf{int}[\ ]$}
        \DP
        \AXC{$\Pi \vdash_{expr} e_1\ :\ \textbf{int}$}
        \DP
        \vskip 1em
        \AXC{$\Big( {\ x\ } \cup \text{vars}(e_1) \Big) \cap \text{vars}(e_2) = \emptyset$}
        \AXC{$\Pi \vdash_{expr}\ e_2 : \textbf{int}$}
        \RL{\textsc{T-ArrElemAss}}
        \BIC{$\langle \Pi, c \rangle \vdash^{\Gamma}_{stmt}\ x[e_1]\ \texttt{$\odot$=}\ e_2$}
        \DP 

        \vskip 2em
        \AXC{$\Pi(x) = c'$}
        \RL{\textsc{T-ObjNew}}
        \UIC{$\langle \Pi, c \rangle  \vdash^{\Gamma}_{stmt}\ \textbf{new}\ c'\ x$} 
        \DP
        \hskip 2em
        \AXC{$\Pi(x) = c'$}
        \RL{\textsc{T-ObjDlt}}
        \UIC{$\langle \Pi, c \rangle  \vdash^{\Gamma}_{stmt}\ \textbf{delete}\ c'\ x$} 
        \DP

        \vskip 2em

         \AXC{$\text{arrayType}(a) \in \Big\{ \text{classIDs}, \textbf{int} \Big\}$}
         \AXC{$\Pi \vdash_{expr}\ e = \textbf{int}$}
         \AXC{$\Pi(x) = a[\ ]$}
         \RL{\textsc{T-ArrNew}}
         \TIC{$\langle \Pi, c \rangle  \vdash^{\Gamma}_{stmt}\ \textbf{new}\ a[e]\ x$} 
         \DP

         \vskip 2em

         \AXC{$\text{arrayType}(a) \in \Big\{ \text{classIDs}, \textbf{int} \Big\}$}
         \AXC{$\Pi \vdash_{expr}\ e = \textbf{int}$}
         \AXC{$\Pi(x) = a[\ ]$}
         \RL{\textsc{T-ArrDlt}}
         \TIC{$\langle \Pi, c \rangle  \vdash^{\Gamma}_{stmt}\ \textbf{delete}\ a[e]\ x$}
         \DP
 
         \vskip 2em

        \AXC{$\Pi(x) = c'$}
        \AXC{$\Pi(x') = c'$}
        \RL{\textsc{T-Cp}}
        \BIC{$\langle \Pi, c \rangle  \vdash^{\Gamma}_{stmt}\ \textbf{copy}\ c'\ x\ x'$} 
        \DP
        \hskip 2em
        \AXC{$\Pi(x) = c'$}
        \AXC{$\Pi(x') = c'$}
        \RL{\textsc{T-Ucp}}
        \BIC{$\langle \Pi, c \rangle  \vdash^{\Gamma}_{stmt}\ \textbf{uncopy}\ c'\ x\ x'$}
        \DP

        \vskip 2em

        \AXC{$\langle \Pi, c \rangle \vdash_{expr}\ e_1\ :\ c'$}
        \AXC{$\langle \Pi[x \mapsto c'], c \rangle \vdash^{\Gamma}_{stmt}\ s$}
        \AXC{$\langle \Pi, c \rangle \vdash_{expr}\ e_2\ :\ c'$}
        \RL{\textsc{T-LocalBlock}}
        \TIC{$\langle \Pi, c \rangle \vdash^{\Gamma}_{stmt}\ \textbf{local}\ c'\ x = e_1\quad s\quad \textbf{delocal}\ c'\ x = e_2$} 
        \DP
    \end{center}
    \caption{Typing rules extensions for statements in \rooplpp}
    \label{fig:typing-rules-statements-extensions}
\end{figure}

Figure~\ref{fig:typing-rules-statements-extensions} shows the typing rules for the extensions made to \textsc{Roopl} in \rooplpp, covering the \textbf{new}/\textbf{delete} and \textbf{copy}/\textbf{uncopy} statements for objects and arrays and local blocks.

The typing rule \textsc{T-ArrElemAss} defines assignment to integer array element variables, and is well-typed when the type of array $x$ is \textbf{int}, the variable $x[e_1]$ is not present in the right-hand side of the statement, no variables in $e_1$ exist in $e_2$ and both expressions $e_1$ and $e_2$ evaluates to integers.

The \textsc{T-ObjNew} and \textsc{T-ObjDlt} rules define well-typed \textbf{new} and \textbf{delete} statements for dynamically lifetimed objects. The \textbf{new} statement is well-typed, as long as $c' \in \text{classIDs}$ and the variable $x$ is of type $c'$ under type environment $\Pi$ and \textbf{delete} is also well-typed if the type of $x$ under type environment $\Pi$ is equal to $c'$.

The \textsc{T-ArrNew} and \textsc{T-ArrDlt} rules define well-type \textbf{new} and \textbf{delete} statement for \rooplpp arrays. The \textbf{new} statement is well-typed, if the type of the array either is a classID or \textbf{int}, the length expression evaluates to an integer and $x$ is of of type $a[\ ]$ under the type environment $\Pi$, and \textbf{delete} is well-typed if the type of the array is either a classID or \textbf{int}, the length expression evaluates to an integer and $x$ is equal to the array type $a$.

Typing rules \textsc{T-Cp} and \textsc{T-Ucp} define well-typed reference copy and un-copying statements. A well-typed \textbf{copy} or \textbf{uncopy} statement requires that the types of $x$ and $x'$ both are $c'$ under type environment $\Pi$

The rule \textsc{T-LocalBlock} defines well-typed local blocks. A local block is well-typed if its two expression $e_1$ and $e_2$ are well-typed and its body statement $s$ is well-typed. 

\subsection{Programs}
\label{subsec:programs} 
As with \textsc{Roopl}, a \rooplpp program is well-typed if all of its classes and their respective methods are well-typed and if there exists a nullary main method. Figure~\ref{fig:typing-programs} shows the typing rules for class methods, classes and programs.

\begin{figure}[H]
    \begin{center}
        \AXC{$\langle \Pi[x_1 \mapsto t_1,\ ...,\ x_n \mapsto t_n],\ c \rangle \vdash^{\Gamma}_{stmt}\ s$}
        \RL{\textsc{T-Method}}
        \UIC{$\langle \Pi, c \rangle \vdash^{\Gamma}_{meth}\ \textbf{method}\ q(t_1 x_1,\ ...,\ t_n x_n)\ s$}
        \DP

        \vskip 2em

        \AXC{$\Gamma(c) = \Big( \overbrace{\big\{ \langle t_1, f_1 \rangle,\ ...,\ \langle t_i, f_i \rangle \big\}}^{fields},\ \overbrace{\{ m_1,\ ...,\ m_n \}}^{methods} \Big)$}
        \DP
        \AXC{$\Pi = \big[ f_1 \mapsto t_1,\ ...,\ f_n \mapsto t_n \big]$}
        \AXC{$\langle \Pi, c \rangle \vdash^{\Gamma}_{meth}\ m_1 \quad \cdots \quad \langle \Pi, c \rangle \vdash^{\Gamma}_{meth}\ m_n$}
        \RL{\textsc{T-Class}}
        \BIC{$\vdash^{\Gamma}_{class}\ c$}
        \DP

        \vskip 2em

        \AXC{$\Big( \textbf{method main ()}\ s \Big) \in \displaystyle\bigcup^{n}_{i = 1} \text{methods}(c_i)$}
        \DP
        \AXC{$\Gamma = \text{gen}(c_1,\ ...,\ c_n)$}
        \AXC{$\vdash^{\Gamma}_{class}\ c_1 \quad \cdots \quad \vdash^{\Gamma}_{class}\ c_n$}
        \RL{\textsc{T-Prog}}
        \BIC{$\vdash_{prog}\ c_1\ \cdots\ c_n$}
        \DP
    \end{center}
    \caption{Typing rules for class methods, classes and programs, originally from~\cite{th:roopl}}
    \label{fig:typing-programs}
\end{figure}

\newpage
\section{Language Semantics}
\label{sec:language-semantics}
The following sections contain the operational semantics of \rooplpp, as specified by syntax-directed inference rules.

\subsection{Preliminaries}
\label{subsec:semantics-preliminaries}
We define $l$ to be a location. We define a location for integer variables to bind to a single location in program memory and a vector of memory locations for object and array variables, where the vector is the size of the object or array. A memory location is in the set of non-negative integers, $\mathbb{N}_0$. An environment $\gamma$ is a partial function mapping variables to memory locations. A store $\mu$ is a partial function mapping memory locations to values. An object is a tuple of a class name and an environment mapping fields to memory locations. A value is either an integer, an object, a location or a vector of locations.

Applications of environments $\gamma$ and stores $\mu$ are analogous to the type environment $\Gamma$, defined in section~\ref{subsec:preliminaries}. 

\begin{figure}[ht]
    \begin{alignat*}{2}
        l      \in\ &\text{Locs}     &&= \mathbb{N}_0\\
        \gamma \in\ &\text{Envs}     &&= \text{VarIDs} \rightharpoonup \text{Locs}\\
        \mu    \in\ &\text{Stores}   &&= \text{Locs} \rightharpoonup \text{Values}\\
        &\text{Objects} &&= \Big\{ \langle c_f,\ \gamma_f \rangle \mid c_f \in \text{ClassIDs}\ \wedge\ \gamma_f \in \text{Envs} \Big\}\\
        v \in\ &\text{Values} &&= \mathbb{Z} \cup \text{Objects}\ \cup \text{Locs} \cup [\text{Locs}]
    \end{alignat*}
    \caption{Semantic values, originally from~\cite{th:roopl}}
    \label{fig:semantic-values}
\end{figure}

\subsection{Expressions}
\label{subsec:semantics-expressions}
The judgment:
\begin{prooftree}
    \AXC{$\langle \gamma, \mu \rangle \vdash_{expr}\ e \Rightarrow v$}
\end{prooftree}
defines the meaning of expressions. We say that under environment $\gamma$ and store $\mu$, expression $e$ evaluates to value $v$.

\begin{figure}[ht]
    \begin{center}
        \AXC{}
        \RL{\textsc{Con}}
        \UIC{$\langle \gamma, \mu \rangle \vdash_{expr}\ n \Rightarrow \bar{n}$}
        \DP
        \hskip 1.5em
        \AXC{}
        \RL{\textsc{Var}}
        \UIC{$\langle \gamma, \mu \rangle \vdash_{expr}\ x \Rightarrow \mu\Big( \gamma(x) \Big)$}
        \DP
        \hskip 1.5em
        \AXC{}
        \RL{\textsc{Nil}}
        \UIC{$\langle \gamma, \mu \rangle \vdash_{expr}\ \texttt{\textbf{nil}} \Rightarrow 0$}
        \DP

        \vskip 1.5em

        \AXC{$\langle \gamma, \mu \rangle \vdash_{expr}\ e_1 \Rightarrow v_1 $}
        \AXC{$\langle \gamma, \mu \rangle \vdash_{expr}\ e_2 \Rightarrow v_2 $}
        \AXC{$\llbracket \otimes \rrbracket(v_1, v_2) = v$}
        \RL{\textsc{BinOp}}
        \TIC{$\langle \gamma, \mu \rangle \vdash_{expr}\ e_1 \otimes e_2 \Rightarrow v$}
        \DP 
    \end{center}
    \caption{Semantic inference rules for expressions, originally from~\cite{th:roopl}}
    \label{fig:semantics-expr-rules}
\end{figure}

As shown in figure~\ref{fig:semantics-expr-rules}, expression evaluation has no effects on the store. Logical values are represented by \textit{truthy} and \textit{falsy} values of any non-zero value and zero respectively. The evaluation of binary operators is presented in figure~\ref{fig:binary-operator-evaluation}.

\begin{figure}[ht]
    \begin{center}
        \AXC{$\langle \gamma, \mu \rangle \vdash_{expr}\ e \Rightarrow v$}
        \AXC{$\gamma(x) = l$}
        \AXC{$\mu(l)[v] = l'$}
        \AXC{$\mu(l') = w$}
        \RL{\textsc{ArrElem}}
        \QIC{$\langle \gamma, \mu \rangle \vdash_{expr}\ x[e] \Rightarrow w$}
        \DP
    \end{center}
    \caption{Extension to the semantic inference rules for expression in \rooplpp}
    \label{sec:semantics-expr-rule-extensions}
\end{figure}

For \rooplpp, we extend the expression ruleset with a single rule for array element variables shown in figure~\ref{sec:semantics-expr-rule-extensions}. As with the expressions inference rules in \textsc{Roopl}, this extension has no effect on the store.

\begin{figure}[ht]
    \begin{align*}
        &\llbracket \texttt{\textbf{+}} \rrbracket (v_1, v_2) &&= v_1 + v_2 &\llbracket \texttt{\textbf{\%}} \rrbracket (v_1, v_2)      &&&= v_1\ \text{mod}\ v_2\\
        &\llbracket \texttt{\textbf{-}} \rrbracket (v_1, v_2) &&= v_1 - v_2 &\llbracket \texttt{\textbf{\&}} \rrbracket (v_1, v_2)       &&&= v_1 \wedge v_2 \quad, bitwise\\
        &\llbracket \texttt{\textbf{*}} \rrbracket (v_1, v_2) &&= v_1 \times v_2 &\llbracket \texttt{\textbf{|}} \rrbracket  (v_1, v_2) &&&= v_1 \vee v_2 \quad, bitwise\\
        &\llbracket \texttt{\textbf{/}} \rrbracket (v_1, v_2) &&= v_1 \div v_2 &\llbracket \texttt{\textbf{\^}} \rrbracket (v_1, v_2)   &&&= v_1 \oplus v_2\\ 
        &\llbracket\texttt{\textbf{\&\&}}\rrbracket (v_1, v_2) &&= \begin{cases} 
            0 & \text{if}\ v_1 = 0 \vee v_2 = 0\\ 
            1 & \text{otherwise}
        \end{cases}        
        &\llbracket\texttt{\textbf{<=}}\rrbracket (v_1, v_2) &&&= \begin{cases} 
            0 & \text{if}\ v_1 \leq v_2\\ 
            1 & \text{otherwise} 
        \end{cases}\\
        &\llbracket\texttt{\textbf{||}}\rrbracket (v_1, v_2) &&= \begin{cases}
            0 & \text{if}\ v_1 = v_2 = 0\\ 
            1 & \text{otherwise}
        \end{cases}
        &\llbracket\texttt{\textbf{>=}}\rrbracket (v_1, v_2) &&&= \begin{cases}
            0 & \text{if}\ v_1 \geq v_2\\
            1 & \text{otherwise}
        \end{cases}\\
        &\llbracket\texttt{\textbf{<}}\rrbracket (v_1, v_2) &&= \begin{cases}
            1 & \text{if}\ v_1 < v_2\\ 
            0 & \text{otherwise}
        \end{cases}
        &\llbracket\texttt{\textbf{=}}\rrbracket (v_1, v_2) &&&= \begin{cases}
            0 & \text{if}\ v_1 = v_2\\
            1 & \text{otherwise}
        \end{cases}\\
        &\llbracket\texttt{\textbf{>}}\rrbracket (v_1, v_2) &&= \begin{cases}
            1 & \text{if}\ v_1 > v_2\\ 
            0 & \text{otherwise}
        \end{cases}
        &\llbracket\texttt{\textbf{!=}}\rrbracket (v_1, v_2) &&&= \begin{cases}
            0 & \text{if}\ v_1 \not= v_2\\
            1 & \text{otherwise}
        \end{cases}
    \end{align*} 
    \caption{Definition of binary expression operator evaluation, originally from~\cite{th:roopl}}
    \label{fig:binary-operator-evaluation}
\end{figure}

\subsection{Statements}
\label{subsec:semantics-statements}
The judgment
\begin{prooftree}
    \AXC{$\gamma \vdash^{\Gamma}_{stmt}\ s : \mu \rightleftharpoons \mu'$}
\end{prooftree}
defines the meaning of statements. We say that under environment $\gamma$, statement $s$ with class map $\Gamma$ reversibly transforms store $\mu$ to store $\mu'$. Figure~\ref{fig:semantic-statements},~\ref{fig:semantic-statements-cont} and~\ref{fig:semantic-statements-extension} defines the operational semantics of \rooplpp.

The following semantic rules have been simplified from the original \textsc{Roopl} semantics~\cite{th:roopl} to better accommodate the extended language. 

\begin{subfigures}
    \begin{figure}[!ht]
        \begin{center}
            \AXC{}
            \RL{\textsc{Skip}}
            \UIC{$\gamma \vdash^{\Gamma}_{stmt}\ \textbf{skip} : \mu \rightleftharpoons \mu$}
            \DP
    
            \vskip 2em
    
            \AXC{$\gamma \vdash^{\Gamma}_{stmt}\ s_1 : \mu \rightleftharpoons \mu'$}
            \AXC{$\gamma \vdash^{\Gamma}_{stmt}\ s_2 : \mu' \rightleftharpoons \mu''$}
            \RL{\textsc{Seq}}
            \BIC{$\gamma \vdash^{\Gamma}_{stmt}\ s_1\ s_2 : \mu \rightleftharpoons \mu''$}
            \DP
    
            \vskip 2em
    
            \AXC{$\langle \gamma, \mu \rangle \vdash^{\Gamma}_{stmt}\ e \Rightarrow v$}
            \AXC{$\llbracket \odot \rrbracket \Big( \mu \Big( \gamma(x) \Big), v \Big) = v'$}
            \RL{\textsc{AssVar}}
            \BIC{$\gamma \vdash^{\Gamma}_{stmt}\ x \odot= e\ :\ \mu \rightleftharpoons \mu[\gamma(x) \mapsto v']$}
            \DP
    
            \vskip 2em
    
            \AXC{$\mu \Big( \gamma(x_1) \Big) = v_1$}
            \AXC{$\mu \Big( \gamma(x_2) \Big) = v_2$}
            \RL{\textsc{SwpVar}}
            \BIC{$\gamma \vdash^{\Gamma}_{stmt}\ x_1\ \texttt{\textbf{<=>}}\ x_2\ :\ \mu \rightleftharpoons \mu[\gamma(x_1) \mapsto v_2,\ \gamma(x_2) \mapsto v_1]$}
            \DP
    
            \vskip 2em
    
            \AXC{$\langle \gamma, \mu \rangle \vdash^{\Gamma}_{expr}\ e_1 \not\Rightarrow 0$}
            \AXC{$\gamma \vdash^{\Gamma}_{stmt}\ s_1 : \mu \rightleftharpoons \mu'$}
            \AXC{$\gamma \vdash^{\Gamma}_{loop}\ (e_1, s_1, s_2, e_2) : \mu' \rightleftharpoons \mu''$}
            \RL{\textsc{LoopMain}}
            \TIC{$\gamma \vdash^{\Gamma}_{stmt}\ \textbf{from}\ e_1\ \textbf{do}\ s_1\ \textbf{loop}\ s_2\ \textbf{until}\ e_2\ :\ \mu \rightleftharpoons \mu''$}
            \DP
    
            \vskip 2em
    
            \AXC{$\langle \gamma, \mu \rangle \vdash^{\Gamma}_{expr}\ e_2 \not\Rightarrow 0$}
            \RL{\textsc{LoopBase}}
            \UIC{$\gamma \vdash^{\Gamma}_{loop}\ (e_1, s_1, s_2, e_2)\ :\ \mu \rightleftharpoons \mu$}
            \DP
    
            \vskip 2em
    
            \AXC{$\langle \gamma, \mu \rangle \vdash^{\Gamma}_{expr} e_2 \Rightarrow 0$}
            \DP
            \hskip 8em
            \AXC{$\gamma \vdash^{\Gamma}_{stmt} s_2 : \mu \rightleftharpoons \mu'$}
            \DP
            \vskip 0.5em
            \AXC{$\langle \gamma, \mu' \rangle \vdash^{\Gamma}_{expr} e_1 \Rightarrow 0$}
            \AXC{$\gamma \vdash^{\Gamma}_{stmt} s_1 : \mu' \rightleftharpoons \mu''$}
            \AXC{$\gamma \vdash^{\Gamma}_{loop} (e_1, s_1, s_2, e_2) : \mu'' \rightleftharpoons \mu'''$}
            \RL{\textsc{LoopRec}}
            \TIC{$\gamma \vdash^{\Gamma}_{loop}\ (e_1, s_1, s_2, e_2)\ :\ \mu \rightleftharpoons \mu'''$}
            \DP
    
            \vskip 2em
    
            \AXC{$\langle \gamma, \mu \rangle \vdash^{\Gamma}_{expr} e_1 \not\Rightarrow 0$}
            \AXC{$\gamma \vdash^{\Gamma}_{stmt} s_1 : \mu \rightleftharpoons \mu'$}
            \AXC{$\langle \gamma, \mu' \rangle \vdash^{\Gamma}_{expr} e_2 \not\Rightarrow 0$}
            \RL{\textsc{IfTrue}}
            \TIC{$\gamma \vdash^{\Gamma}_{stmt}\ \textbf{if}\ e_1\ \textbf{then}\ s_1\ \textbf{else}\ s_2\ \textbf{fi}\ e_2\ :\ \mu \rightleftharpoons \mu'$}
            \DP
    
            \vskip 2em
    
            \AXC{$\langle \gamma, \mu \rangle \vdash^{\Gamma}_{expr} e_1 \Rightarrow 0$}
            \AXC{$\gamma \vdash^{\Gamma}_{stmt} s_1 : \mu \rightleftharpoons \mu'$}
            \AXC{$\langle \gamma, \mu' \rangle \vdash^{\Gamma}_{expr} e_2 \Rightarrow 0$}
            \RL{\textsc{IfFalse}}
            \TIC{$\gamma \vdash^{\Gamma}_{stmt}\ \textbf{if}\ e_1\ \textbf{then}\ s_1\ \textbf{else}\ s_2\ \textbf{fi}\ e_2\ :\ \mu \rightleftharpoons \mu'$}
            \DP
        \end{center}
        \caption{Semantic inference rules for statements, modified from~\cite{th:roopl}}
        \label{fig:semantic-statements}
    \end{figure}
    
    \begin{figure}[!ht]
        \begin{center}           
            \AXC{$\gamma(this) = l$}
            \DP
            \AXC{$\mu(l) = l' $}
            \DP
            \AXC{$\mu(l') = \Big\langle c, (l_1,\ \dots, l_m) \Big\rangle$}
            \DP
            \AXC{$\gamma(y_i) = l'_i$}
            \DP \vskip .5em
            \AXC{$\Gamma(c) = \Big\langle \overbrace{(x_1,\ \dots, x_m)}^{fields}, \overbrace{(\dots, \textbf{method}\ q(t_1 z_1,\ \dots, t_l z_k)\ s,\ \dots)}^{methods} \Big\rangle$}
            \DP \vskip .5em
            \AXC{$\gamma' = [this \mapsto l, x_1 \mapsto l_1,\ \dots, l_m \mapsto v_m, z_1 \mapsto l'_1, \dots, z_k \mapsto l'_k] $}
            \AXC{$\gamma' \vdash^\Gamma_{stmt}\ s\ : \mu \rightleftharpoons \mu'$}
            \RL{\textsc{Call}}
            \BIC{$\gamma \vdash^{\Gamma}_{stmt}\ \textbf{call}\ q(y_1,\ ...,\ y_n)\ :\ \mu \rightleftharpoons \mu'$}
            \DP
    
            \vskip 2em
    
            \AXC{$\gamma \vdash^{\Gamma}_{stmt}\ \textbf{call}\ q(y_1,\ ...,\ y_n)\ :\ \mu' \rightleftharpoons \mu$}
            \RL{\textsc{Uncall}}
            \UIC{$\gamma \vdash^{\Gamma}_{stmt}\ \textbf{uncall}\ q(y_1,\ ...,\ y_n)\ :\ \mu \rightleftharpoons \mu'$}
            \DP

            \vskip 2em

            \AXC{$\gamma(x_0) = l $}
            \DP
            \AXC{$\mu(l) = l' $}
            \DP
            \AXC{$\mu(l') = \Big\langle c, (l_1,\ \dots, l_m) \Big\rangle$}
            \DP
            \AXC{$\gamma(y_i) = l'_i$}
            \DP \vskip .5em
            \AXC{$\Gamma(c) = \Big\langle \overbrace{(x_1,\ \dots, x_m)}^{fields}, \overbrace{(\dots, \textbf{method}\ q(t_1 z_1,\ \dots, t_l z_k)\ s,\ \dots)}^{methods} \Big\rangle$}
            \DP \vskip .5em
            \AXC{$\gamma' = [this \mapsto l, x_1 \mapsto l_1,\ \dots, l_m \mapsto v_m, z_1 \mapsto l'_1, \dots, z_k \mapsto l'_k] $}
            \AXC{$\gamma' \vdash^\Gamma_{stmt}\ s\ : \mu \rightleftharpoons \mu'$}
            \RL{\textsc{CallObj}}
            \BIC{$\gamma \vdash^{\Gamma}_{stmt}\ \textbf{call}\ x_0::q(y_1,\ ...,\ y_n)\ :\ \mu \rightleftharpoons \mu'$}
            \DP
    
            \vskip 2em
            
            \AXC{$\gamma \vdash^{\Gamma}_{stmt}\ \textbf{call}\ x_0::q(y_1,\ ...,\ y_n)\ :\ \mu' \rightleftharpoons \mu$}
            \RL{\textsc{ObjUncall}}
            \UIC{$\gamma \vdash^{\Gamma}_{stmt}\ \textbf{uncall}\ x_0::q(y_1,\ ...,\ y_n)\ :\ \mu \rightleftharpoons \mu'$}
            \DP

            \vskip 2em

            \AXC{$\Gamma(c) = \Big\langle \overbrace{(x_1,\ \dots, x_m)}^{fields}, methods \Big\rangle $}
            \DP
            \AXC{$\gamma' = \gamma[x \mapsto l_0]$}
            \DP
            \AXC{$l_0 \notin\ \text{dom}(\mu)\ \dots\ l_m \notin\ \text{dom}(\mu) $}
            \DP
            \AXC{$\mu' = \mu\Bigg[\gamma'(x) \mapsto l_0, l_0 \mapsto \Big\langle c, (l_1,\ \dots, 1_m) \Big\rangle, l_1 \mapsto 0,\ \dots, l_m \mapsto 0\Bigg]$}
            \DP
            \AXC{$\gamma' \vdash^\Gamma_{stmt}\ s\ :\ \mu' \rightleftharpoons \mu''$}
            \DP
            \AXC{$\mu'' = \mu'''\Bigg[\gamma'(x) \mapsto l_0, l_0 \mapsto \Big\langle c, (l_1,\ \dots, 1_m) \Big\rangle, l_1 \mapsto 0,\ \dots, l_m \mapsto 0\Bigg]$}
            \RL{\textsc{ObjBlock}}
            \UIC{$\gamma \vdash^{\Gamma}_{stmt}\ \textbf{construct}\ c\ x \quad s \quad \textbf{destruct}\ x\ :\ \mu \rightleftharpoons \mu'''$}  
            \DP
        \end{center}
        \caption{Semantic inference rules for statements, modified from~\cite{th:roopl} (cont)}
        \label{fig:semantic-statements-cont}
    \end{figure}

    \begin{figure}[!ht]
        \begin{center}   
            \AXC{$\langle \gamma, \mu \rangle \vdash^{\Gamma}_{stmt}\ e_1 \Rightarrow v_1$}
            \DP
            \AXC{$\langle \gamma, \mu \rangle \vdash^{\Gamma}_{stmt}\ e_2 \Rightarrow v_2$}
            \DP \vskip 1em
            \AXC{$\gamma(x) = l$}
            \AXC{$\mu(l)[v_1] = l'$}
            \AXC{$\mu(l') = w$}
            \AXC{$\llbracket \odot \rrbracket (w, v_2) = w'$}
            \RL{\textsc{AssArrElemVar}}
            \QIC{$\gamma \vdash^{\Gamma}_{stmt}\ x[e_1]\ \texttt{$\odot$=}\ e_2\ :\ \mu \rightleftharpoons \mu[l' \mapsto w']$}
            \DP
                
            \vskip 2em
                
            \AXC{$\Gamma(c) = \Big\langle \overbrace{(x_1,\ \dots, x_m)}^{fields}, methods \Big\rangle $}
            \AXC{$\gamma(x) = l$}
            \AXC{$l_0 \notin\ \text{dom}(\mu)\ \dots\ l_m \notin\ \text{dom}(\mu) $}
            \RL{\textsc{ObjNew}}
            \TIC{$\gamma \vdash^{\Gamma}_{stmt}\ \textbf{new}\ c\ x\ :\ \mu[l \mapsto 0] \rightleftharpoons \mu\Bigg[l \mapsto l_0, l_0 \mapsto \Big\langle c, (l_1,\ \dots, 1_m) \Big\rangle, l_1 \mapsto 0,\ \dots, l_m \mapsto 0\Bigg]$}  
            \DP
                
            \vskip 2em
                
            \AXC{$\langle l, \gamma \rangle \vdash^{\Gamma}_{stmt}\ \textbf{new}\ c\ x\ :\ \mu' \rightleftharpoons \mu$}
            \RL{\textsc{ObjDelete}}
            \UIC{$\gamma \vdash^{\Gamma}_{stmt}\ \textbf{delete}\ c\ x\ :\ \mu \rightleftharpoons \mu'$}
            \DP

            \vskip 2em

            \AXC{$\langle \gamma, \mu \rangle \vdash^{\Gamma}_{stmt}\ e \Rightarrow n$}
            \AXC{$\gamma(x) = l$}
            \AXC{$\mu(l) = 0$}
            \AXC{$l' \notin\ \text{dom}(\mu)$}
            \RL{\textsc{ArrNew}}
            \QIC{$\gamma \vdash^{\Gamma}_{stmt}\ \textbf{new}\ a[e]\ x\ :\ \mu \rightleftharpoons \mu[l \mapsto l', l' \mapsto 0^n]$}  
            \DP
                
            \vskip 2em
                
            \AXC{$\langle l, \gamma \rangle \vdash^{\Gamma}_{stmt}\ \textbf{new}\ a[e]\ x\ :\ \mu' \rightleftharpoons \mu$}
            \RL{\textsc{ArrDelete}}
            \UIC{$\gamma \vdash^{\Gamma}_{stmt}\ \textbf{delete}\ a[e]\ x\ :\ \mu \rightleftharpoons \mu'$}
            \DP

            \AXC{$\gamma(x) = l$}
            \AXC{$\gamma(x') = l'$}
            \AXC{$\mu(l) = v$}
            \RL{\textsc{Copy}}
            \TIC{$\gamma \vdash^{\Gamma}_{stmt}\ \textbf{copy}\ c\ x\ x'\ :\ \mu[l' \mapsto 0] \leftrightharpoons \mu[l' \mapsto v]$} 
            \DP

            \vskip 2em
            
            \AXC{$\langle l, \gamma \rangle \vdash^{\Gamma}_{stmt}\ \textbf{copy}\ c\ x\ x'\ :\ \mu' \leftrightharpoons \mu$}
            \RL{\textsc{Uncopy}}
            \UIC{$\gamma \vdash^{\Gamma}_{stmt}\ \textbf{uncopy}\ c\ x\ x'\ :\ \mu \rightleftharpoons \mu'$}
            \DP

            \vskip 2em

            \AXC{$\langle \gamma, \mu \rangle \vdash^{\Gamma}_{stmt}\ e_1 \Rightarrow v_1$}
            \DP \hskip 2em
            \AXC{$\langle \gamma, \mu' \rangle \vdash^{\Gamma}_{stmt}\ e_2 \Rightarrow v_2$}
            \DP \vskip 1em
            \AXC{$r \notin\ \text{dom}(\mu)$}
            \AXC{$\gamma[x \mapsto r] \vdash^{\Gamma}_{stmt}\ s : \mu[r \mapsto v_1] \rightleftharpoons \mu'[r \mapsto v_2]$}
            \RL{\textsc{LocalBlock}}
            \BIC{$\gamma \vdash^{\Gamma}_{stmt}\ \textbf{local}\ c\ x = e_1 \quad s \quad \textbf{delocal}\ x = e_2\ :\ \mu \rightleftharpoons \mu'$} 
            \DP
        \end{center}  
        \caption{Extension to the semantic inference rules for statements in \rooplpp}
        \label{fig:semantic-statements-extension}
    \end{figure}
\end{subfigures}

The inference rule \textsc{Skip} defines the operational semantics of \textbf{skip} statements and has no effects on the store $\mu$.

Rule \textsc{Seq} defines statement sequences where the store potentially is updated between each statement execution. 

Rule \textsc{AssVar} defines reversible assignment in which variable identifier $x$ under environment $\gamma$ is mapped to the value $v'$ resulting in an updated store $\mu'$. For variable swapping \textsc{SwpVar} defines how value mappings between two variables are exchanged in the updated store.

For loops and conditionals, Rules \textsc{LoopMain}, \textsc{LoopBase} and \textsc{LoopRec} define the meaning of \text{loop} statements and \text{IfTrue} and \text{IfFalse}, similarly to the operational semantics of Janus, as presented in~\cite{ty:ejanus}. \textsc{LoopMain} is entered if $e_1$ is true and each iteration enters \textsc{LoopRec} until $e_2$ is false, in which case \textsc{LoopBase} is executed. Similarly, if $e_1$ and $e_2$ are true, rule \textsc{IfTrue} is entered, executing the then-branch of the conditional. If $e_1$ and $e_2$ are false, the \textsc{IfFalse} rule is executed and the else-branch is executed.

Rules \textsc{Call}, \textsc{Uncall}, \textsc{CallObj} and \textsc{UncallObj} respectively define local and non-local method invocations. For local methods, method $q$ in current class $c$ should be of arity $n$ matching the number of arguments. The updated store $\mu'$ is obtained after statement body execution in the object environment. As local uncalling is the inverse of local calling, the direction of execution is simply reversed, and as such the input store a \textbf{call} statement serves as the output store of the \textbf{uncall} statement, similarly to techniques presented in~\cite{ty:janus, ty:ejanus}.

The statically scoped object blocks are defined in rule \textsc{ObjBlock}. The operation semantics of these blocks are similar to \textbf{local}-blocks from \textsc{Janus}. We add the reference $x$ to a new environment and afterwards map the location of $x$ to the object tuple at location $l_0$, containing the locations of all object fields, all of which, along with $l_0$, must be unused in $\mu$. The result store $\mu''$ is obtained after executing the body statement $s$ in store $\mu'$ mapping $x$ to object reference at $l_0$, as long as all object fields are zero-cleared in $\mu'''$ afterwards. If any of these conditions fail, the object block statement is undefined.

Figure~\ref{fig:semantic-statements-extension} shows the extensions to the semantics of \textsc{Roopl} with rules for \textbf{new}/\textbf{delete} and \textbf{copy}/\textbf{uncopy} statements, array element assignment and local blocks.

Rule \textsc{AssArrElemVar} defines reversible assignment to array elements. After evaluating expressions $e_1$ to $v_1$ and $e_2$ to $v_2$, the value at the location of variable $x[v_1]$ under environment $\gamma$ is mapped to the value $v_3$ resulting in an updated store $\mu'$.

Dynamic object construction and destruction is defined by rules \textsc{ObjNew} and \textsc{ObjDelete}. For construction, $x$ must be bound to a location $l$. We then make location $l$ point to a new pair consisting of the class name and a vector of $m$ new locations mapping object fields to locations. For destruction, $x$ is still bound to $l$ return $l$ to a null pointer. As with object blocks, it is the program itself responsible for zero-clearing object fields before destruction. If the object fields are not zero-cleared, the \textsc{ObjDelete} statement is undefined.

Array construction and destruction is very similar to object construction and destruction. The major difference is we bind the location to a vector of size equal to the evaluated expression result. For deletion, we return the location of $x$ to a null pointer and remove the binding to the vector from the store.

Object and array referencing is defined by rules \textsc{Copy} and \textsc{Uncopy}. A reference is created and a new store $\mu'$ obtained by mapping $x'$ to the reference $r$ which $x$ current maps to, if $c$ matches the tuple mapped to the location $l$. A reference is removed and a new store $\mu'$ obtained if $x$ and $x'$ maps to the same reference $r$ and $x'$ then is removed from the store.

Local blocks are as previously mentioned, semantically similar to object blocks, where the memory location of variable $x$ is mapped to an unused reference $r$ in the store $\mu$. Before body statement execution, we let $r$ bind to the evaluated value of $e_1$, $v_1$. The result store after body statement execution, $\mu'$ must have $r$ mapped to the expression value of $e_2$, $v_2$. $r$ is then zero-cleared using the value of expression evaluation and becomes unused again.

\subsection{Programs}
\label{subsec:semantics-programs}
The judgment
\begin{prooftree}
    \AXC{$\vdash_{prog} p \Rightarrow \sigma$}
\end{prooftree}
defines the meaning of programs. The class $p$ containing the main method is instantiated and the main function is executed with the partial function $\sigma$ as the result, mapping variable identifiers to values, correlating to the class fields of the main class.

\begin{figure}[ht]
    \begin{center}
        
        \AXC{$\Gamma = \text{gen}(c_1,\ ...,\ c_n)$}
        \DP
        \AXC{$\Gamma(c_1) = \Big( \overbrace{\{ \langle t_1, f_1 \rangle,\ ...,\ \langle t_n, f_n \rangle \}}^{fields},\ methods \Big)$}
        \DP
        \AXC{$\Big( \textbf{method main ()}\ s \Big) \in methods$}
        \DP
        \AXC{$\gamma = [f_1 \mapsto 1,\ ...,\ f_i \mapsto i]$}
        \DP
        \AXC{$\mu = [1 \mapsto 0,\ ...,\ i \mapsto 0,\ this \mapsto i+1, i+1 \mapsto \langle c_1, \gamma \rangle]$}
        \AXC{$\gamma \vdash^{\Gamma}_{stmt}\ s\ :\ \mu \rightleftharpoons \mu'$}
        \RL{\textsc{Main}}
        \BIC{$\vdash_{prog}\ c_1\ \cdots\ c_n \Rightarrow (\gamma, \mu')$}
        \DP
    \end{center}
    \caption{Semantic inference rules for programs, originally from~\cite{th:roopl}}
    \label{fig:semantics-programs}
\end{figure}

As with \textsc{Roopl} programs, the fields of the main method in the main class $c$ are bound in a new environment, starting at memory address $1$, as $0$ is reserved for \textbf{nil}. The fields are zero-initialized in the new store $\mu$ and address $i + 1$ which maps to the new instance of $c$. After body execution, store $\mu'$ is obtained.

\section{Program Inversion}
\label{sec:program-inversion}
In order to truly show that \rooplpp in fact is a reversible language, we must demonstrate and prove local inversion of statements is possible, such that any program written in \rooplpp, regardless of context, can be executed in reverse. \citeauthor{th:roopl} presented a statement inverter for \textsc{Roopl} in~\cite{th:roopl}, which maps statements to their inverse counterparts. Figure~\ref{fig:statement-inverter} shows the statement inverter, extended with the new \rooplpp statements for construction/destruction and referencing copying/copy removal.

\begin{figure}[ht]
    \begin{align*}
        &\mathcal{I}\llbracket \textbf{skip} \rrbracket = \textbf{skip} &&\mathcal{I}\llbracket s_1\ s_2 \rrbracket = \mathcal{I}\llbracket s_2 \rrbracket\ \mathcal{I}\llbracket s_1 \rrbracket\\
        &\mathcal{I}\llbracket x\ \texttt{\textbf{+=}}\ e \rrbracket = x\ \texttt{\textbf{-=}}\ e &&\mathcal{I}\llbracket x\ \texttt{\textbf{-=}}\ e \rrbracket = x\ \texttt{\textbf{+=}}\ e\\
        &\mathcal{I}\llbracket x\ \texttt{\textbf{\textasciicircum=}}\ e \rrbracket = x\ \texttt{\textbf{\textasciicircum=}}\ e &&\mathcal{I}\llbracket x\ \texttt{\textbf{<=>}}\ e \rrbracket = x\ \texttt{\textbf{<=>}}\ e\\
        &\mathcal{I}\llbracket x[e_1]\ \texttt{\textbf{+=}}\ e_2 \rrbracket = x[e_1]\ \texttt{\textbf{-=}}\ e_2 &&\mathcal{I}\llbracket x[e_1]\ \texttt{\textbf{-=}}\ e_2 \rrbracket = x[e_1]\ \texttt{\textbf{+=}}\ e_2\\
        &\mathcal{I}\llbracket x[e_1]\ \texttt{\textbf{\textasciicircum=}}\ e_2 \rrbracket = x[e_1]\ \texttt{\textbf{\textasciicircum=}}\ e_2 &&\mathcal{I}\llbracket x[e_1]\ \texttt{\textbf{<=>}}\ e_2 \rrbracket = x[e_1]\ \texttt{\textbf{<=>}}\ e_2\\
        &\mathcal{I} \llbracket \textbf{new}\ c\ x\rrbracket\ = \textbf{delete}\ c\ x  &&\mathcal{I} \llbracket \textbf{copy}\ c\ x\ x' \rrbracket\ = \textbf{uncopy}\ c\ x\ x'\\
        &\mathcal{I} \llbracket \textbf{delete}\ c\ x \rrbracket\ = \textbf{new}\ c\ x  &&\mathcal{I} \llbracket \textbf{uncopy}\ c\ x\ x' \rrbracket\ = \textbf{copy}\ c\ x\ x'\\
        &\mathcal{I} \llbracket \textbf{call}\ q(\dots) \rrbracket\ = \textbf{uncall}\ q(\dots) &&\mathcal{I} \llbracket \textbf{call}\ x::q(\dots) \rrbracket\ = \textbf{uncall}\ x::q(\dots)\\
        &\mathcal{I} \llbracket \textbf{uncall}\ q(\dots) \rrbracket\ = \textbf{call}\ q(\dots) &&\mathcal{I} \llbracket \textbf{uncall}\ x::q(\dots) \rrbracket\ = \textbf{call}\ x::q(\dots)\\
        &\mathcal{I} \llbracket \textbf{if}\ e_1\ \textbf{then}\ s_1\ \textbf{else}\ s_2\ \textbf{fi}\ e_2 \rrbracket &&= \textbf{if}\ e_1\ \textbf{then}\ \mathcal{I}\llbracket s_1 \rrbracket\ \textbf{else}\ \mathcal{I}\llbracket s_2 \rrbracket\ \textbf{fi}\ e_2\\
        &\mathcal{I} \llbracket \textbf{from}\ e_1\ \textbf{do}\ s_1\ \textbf{loop}\ s_2\ \textbf{until}\ e_2 \rrbracket\ &&=\ \textbf{from}\ e_1\ \textbf{do}\ \mathcal{I}\llbracket s_1 \rrbracket\ \textbf{loop}\ \mathcal{I}\llbracket s_2 \rrbracket\ \textbf{until}\ e_2\\
        &\mathcal{I} \llbracket\textbf{construct}\ c\ x\quad s\quad \textbf{destruct}\ x\rrbracket\ &&=\ \textbf{construct}\ c\ x\quad\mathcal{I}\llbracket s \rrbracket\quad\textbf{destruct}\ x\\
        &\mathcal{I} \llbracket\textbf{local}\ t\ x\ = e\ \quad s\quad \textbf{delocal}\ t\ x\ = e\rrbracket\ &&=\ \textbf{local}\ t\ x = e \quad\mathcal{I}\llbracket s \rrbracket\quad\textbf{delocal}\ t\ x\ = e
    \end{align*}
    \caption{\rooplpp statement inverter, extended from~\cite{th:roopl}}
    \label{fig:statement-inverter}
\end{figure}

Program inversion is conducted by recursive descent over components and statements. A proposed extension to the statement inverter for whole-program inversion is retained in the \rooplpp statement inverter. The extension covers a case that reveals itself during method calling. As a method call is equivalent to an uncall with the inverse method we simply change calls to uncalls during inversion, the inversion of the method body cancels out. The proposed extension, presented in~\cite{ty:janus, th:roopl}, simply avoids inversion of calls and uncalls, as shown in figure~\ref{fig:inverter-extension}.

\begin{figure}[ht]    
    \centering
    \begin{align*}
        &\mathcal{I}' \llbracket \textbf{call}\ q(\dots) \rrbracket\ = \textbf{call}\ q(\dots) &&\mathcal{I}' \llbracket \textbf{call}\ x::q(\dots) \rrbracket\ = \textbf{call}\ x::q(\dots)\\
        &\mathcal{I}' \llbracket \textbf{uncall}\ q(\dots) \rrbracket\ = \textbf{uncall}\ q(\dots) &&\mathcal{I}' \llbracket \textbf{uncall}\ x::q(\dots) \rrbracket\ = \textbf{uncall}\ x::q(\dots)\\ 
        &&\mathcal{I}' \llbracket s \rrbracket = \mathcal{I} \llbracket s \rrbracket
    \end{align*}
    \caption{Modified statement inverter for statements, originally from~\cite{th:roopl}}
    \label{fig:inverter-extension} 
\end{figure}

\subsection{Invertibility of Statements}
\label{subsec:invertibility-of-statements}
While the invertibility of statements remains untouched by the extensions made in \rooplpp, the following proof, originally presented in~\cite{th:roopl}, has been included for completeness.

If execution of a statement $s$ in store $\mu$ yields $\mu'$, then execution of the inverse statement, $\mathcal{I} \llbracket s \rrbracket$ in store $\mu'$ should yield $\mu$. Theorem~\ref{thm:invertibility-of-statements} shows that $\mathcal{I}$ is a statement inverter.\\

\begin{theorem}\label{thm:invertibility-of-statements}(Invertibility of statements, originally from~\cite{th:roopl})
    \begin{equation*}
        \overbrace{\langle l, \gamma \rangle\ \vdash_{stmt}^\Gamma\ s\ :\ \mu \rightleftharpoons \mu'}^{\mathcal{S}}\ \Longleftrightarrow\ \overbrace{\langle l, \gamma \rangle\ \vdash_{stmt}^\Gamma\ \mathcal{I}\llbracket s\rrbracket\ :\ \mu' \rightleftharpoons \mu}^{\mathcal{S}'}
    \end{equation*}
\end{theorem}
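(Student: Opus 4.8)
The plan is to prove the biconditional by structural induction on the statement $s$, following the grammar of statements given in Figure~\ref{fig:roopl-grammar}. The statement inverter $\mathcal{I}$ is defined by recursion over exactly this syntactic structure, and the operational semantics in Figures~\ref{fig:semantic-statements}--\ref{fig:semantic-statements-extension} are syntax-directed, so an induction on $s$ aligns naturally with both. Since the statement is an equivalence, I would establish one direction, say $\mathcal{S} \Rightarrow \mathcal{S}'$, and then observe that the reverse direction follows by the same argument together with the involution property $\mathcal{I}\llbracket \mathcal{I}\llbracket s \rrbracket \rrbracket = s$, which is itself a routine induction on the inverter's definition. Because the semantic judgment is written with the reversible turnstile $\mu \rightleftharpoons \mu'$, several cases are self-dual and collapse immediately.

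First I would dispatch the base cases. For \textbf{skip} the inverse is \textbf{skip} and the store is unchanged, so the equivalence is trivial. For the assignment $x\ \odot\texttt{=}\ e$, the inverter maps $\texttt{+=}$ to $\texttt{-=}$ and vice versa while fixing $\texttt{\textasciicircum=}$; here I would invoke the fact that each modification operator $\llbracket \odot \rrbracket(\mu(\gamma(x)), v)$ has the matching operator as its exact inverse on the stored value (addition/subtraction are mutually inverse, XOR is self-inverse), using that $x \notin \text{vars}(e)$ from the typing rule \textsc{T-AssVar} so that $v$ is unaffected by the update. The swap \texttt{<=>}, the array-element assignment \textsc{AssArrElemVar}, and the referencing pair \textsc{Copy}/\textsc{Uncopy} are handled the same way: each rule and its inverse differ only by exchanging the two stores, so applying the rule for $\mathcal{I}\llbracket s \rrbracket$ to $\mu'$ recovers $\mu$ directly. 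The \textbf{new}/\textbf{delete} and \textbf{call}/\textbf{uncall} cases are even simpler, since their semantic rules (\textsc{ObjDelete}, \textsc{Uncall}, \textsc{ArrDelete}, \textsc{ObjUncall}) are literally defined as the reverse of their positive counterparts.

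For the inductive cases I would assume the theorem holds for all proper substatements. Sequencing uses $\mathcal{I}\llbracket s_1 s_2 \rrbracket = \mathcal{I}\llbracket s_2 \rrbracket\, \mathcal{I}\llbracket s_1 \rrbracket$: given $\mu \rightleftharpoons \mu' \rightleftharpoons \mu''$ via \textsc{Seq}, the induction hypotheses give $\mu'' \rightleftharpoons \mu'$ under $\mathcal{I}\llbracket s_2 \rrbracket$ and $\mu' \rightleftharpoons \mu$ under $\mathcal{I}\llbracket s_1 \rrbracket$, which compose by \textsc{Seq} in the reversed order. The conditional and loop are where the reversibility conditions earn their keep: the inverter preserves the entry and exit tests $e_1, e_2$ unchanged and merely inverts the branch/loop bodies. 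For the conditional I would check that running $\mathcal{I}\llbracket s \rrbracket$ on $\mu'$ takes the \emph{same} branch as the forward run did, which holds because \textsc{IfTrue}/\textsc{IfFalse} tie the branch taken to $e_1$ on the input store and to $e_2$ on the output store; inverting swaps which store each test is evaluated against, and the branch hypothesis supplies the rest. The object block and local block cases combine the freshness side-conditions ($l_0,\dots,l_m \notin \text{dom}(\mu)$; $r \notin \text{dom}(\mu)$) with the inner induction hypothesis on the body.

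The main obstacle will be the \textbf{loop} case under rule \textsc{LoopRec}. Unlike the straight-line constructs, the loop is defined through the auxiliary judgment $\vdash^\Gamma_{loop}$ with its own recursive unfolding, so I cannot simply appeal to a single structural hypothesis on a substatement; instead I expect to need an auxiliary induction on the number of loop iterations (equivalently, on the height of the \textsc{LoopRec} derivation), showing that the inverted loop $\textbf{from}\ e_1\ \textbf{do}\ \mathcal{I}\llbracket s_1\rrbracket\ \textbf{loop}\ \mathcal{I}\llbracket s_2\rrbracket\ \textbf{until}\ e_2$ retraces exactly the same sequence of intermediate stores in reverse. The delicate point is verifying that the assertion tests $e_1$ and $e_2$ continue to gate the iterations correctly when execution is reversed, using that the forward loop guarantees $e_1$ is false on every re-entry after the first and $e_2$ is false until termination; I would formalize this so that the inverse loop's entry and continuation conditions are met precisely on the reversed store sequence.
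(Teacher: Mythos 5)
Your overall strategy matches the paper's: prove only the forward implication $\mathcal{S} \implies \mathcal{S}'$ and recover the converse from the involution property of $\mathcal{I}$, with the implication established by induction. The one organizational difference is that the paper inducts on the semantic derivation of $\mathcal{S}$ rather than on the syntax of $s$, which is precisely why it needs no separate device for loops: the \textsc{LoopRec} premises are proper subderivations even though the loop statement is syntactically unchanged, so the case you single out as "the main obstacle" comes for free (as a simultaneous induction over the $\vdash_{stmt}$ and $\vdash_{loop}$ judgments). Your syntax-directed induction with an auxiliary induction on iteration count is a legitimate reorganization of the same argument, and you correctly anticipate that this is exactly where pure structural induction on $s$ would otherwise break down.

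There is, however, one step in your case analysis that fails as written. You assert that the inverter "preserves the entry and exit tests $e_1, e_2$ unchanged" and then argue that the inverse run takes the correct branch because inversion swaps which store each test is evaluated against. These two claims are incompatible. If the inverted conditional is literally $\textbf{if}\ e_1\ \textbf{then}\ \mathcal{I}\llbracket s_1 \rrbracket\ \textbf{else}\ \mathcal{I}\llbracket s_2 \rrbracket\ \textbf{fi}\ e_2$, then running it on $\mu'$ requires, by \textsc{IfTrue}, that $e_1$ be nonzero on $\mu'$, while the forward derivation only guarantees $e_1$ nonzero on $\mu$ and $e_2$ nonzero on $\mu'$. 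Concretely, with $x = 0$ the statement $\textbf{if}\ x<1\ \textbf{then}\ x\ \texttt{+=}\ 1\ \textbf{else}\ \textbf{skip}\ \textbf{fi}\ x>0$ executes via \textsc{IfTrue} to give $x = 1$, but the test-preserving "inverse" run on that result store fails its entry test, takes the \textbf{else} branch, and then violates the exit assertion, so no semantic rule applies. The case (and the theorem itself) only goes through if the inverter exchanges the tests, i.e.\ $\mathcal{I}\llbracket \textbf{if}\ e_1\ \textbf{then}\ s_1\ \textbf{else}\ s_2\ \textbf{fi}\ e_2 \rrbracket = \textbf{if}\ e_2\ \textbf{then}\ \mathcal{I}\llbracket s_1 \rrbracket\ \textbf{else}\ \mathcal{I}\llbracket s_2 \rrbracket\ \textbf{fi}\ e_1$, and likewise for loops; this is the definition in the original \textsc{Janus}/\textsc{Roopl} work, and the non-swapping version in Figure~\ref{fig:statement-inverter} appears to be a transcription error. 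To your credit, the semantic reasoning you give is exactly the argument that works for the swapping inverter; you should state the inverter accordingly rather than follow the figure, after which your conditional and loop cases are sound.
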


\begin{proof}\let\qed\relax
    By structural induction on the semantic derivation of $\mathcal{S}$ (omitted). It suffices to show that $\mathcal{S} \implies \mathcal{S}'$, as this can serve as proof of $\mathcal{S}' \implies \mathcal{S}$, as $\mathcal{I}$ is an involution.
\end{proof}    

\subsection{Type-Safe Statement Inversion}
\label{subsec:type-safe-statement-inversion}
Given a well-typed statement, the statement inverter $\mathcal{I}$ should always produce a well-typed, inverse statement in order to correctly support backwards determinism of injective functions. Theorem~\ref{thm:type-inversion} describes this.

\begin{theorem}\label{thm:type-inversion}(Inversion of well-typed statements, originally from~\cite{th:roopl})
    \begin{equation*}
        \overbrace{\langle \Pi,\ c\rangle\ \vdash_{stmt}^\Gamma\ s}^{\mathcal{T}}\ \implies\ \overbrace{\langle \Pi,\ c\rangle\ \vdash_{stmt}^\Gamma\ \mathcal{I}\llbracket s\rrbracket}^{\mathcal{T}'}
    \end{equation*}
\end{theorem}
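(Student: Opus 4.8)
The plan is to prove the implication by structural induction on the derivation of the typing judgment $\mathcal{T}$, i.e.\ by cases on the last typing rule applied, which coincides with a case analysis on the syntactic form of $s$. For each case I would exhibit a typing derivation of $\mathcal{I}\llbracket s\rrbracket$ under the \emph{same} type environment $\Pi$, the same enclosing class $c$ and the same class map $\Gamma$, invoking the induction hypothesis on any substatements. Since $\mathcal{I}$ is an involution, proving this single direction $\mathcal{T}\implies\mathcal{T}'$ is all that is required.

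The leaf and non-recursive cases are immediate from the definition of $\mathcal{I}$ in Figure~\ref{fig:statement-inverter}, because each statement is sent either to itself or to a dual whose typing rule carries exactly the same premises. Thus \textbf{skip} is self-inverse and stays well-typed by \textsc{T-Skip}; swaps and XOR-assignments are self-inverse and retyped by the identical rule; \textsc{T-AssVar} does not discriminate between $\texttt{+=}$ and $\texttt{-=}$ (nor does \textsc{T-ArrElemAss} among array updates); \textsc{T-ObjNew} and \textsc{T-ObjDlt} both require only $\Pi(x)=c'$, and the array forms \textsc{T-ArrNew}/\textsc{T-ArrDlt} share their three premises; \textsc{T-Cp}/\textsc{T-Ucp} are identical; and \textsc{T-Uc}/\textsc{T-Uco} define the well-typedness of an \textbf{uncall} precisely as that of the matching \textbf{call}. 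Hence swapping \textbf{new}$\leftrightarrow$\textbf{delete}, \textbf{copy}$\leftrightarrow$\textbf{uncopy} and \textbf{call}$\leftrightarrow$\textbf{uncall} reuses the original premises verbatim.

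For the compound statements I would appeal to the induction hypothesis. The conditional and loop cases leave the guard expressions $e_1,e_2$ untouched, so their \textbf{int} typings survive, and only the two substatements are inverted; these remain well-typed by the IH, after which \textsc{T-If} and \textsc{T-Loop} retype the inverted construct directly. The object block and local block recurse only into the body, which is typed under the extended environment $\Pi[x\mapsto c']$ in \textsc{T-ObjBlock} / \textsc{T-LocalBlock}; the bounding expressions of the local block are preserved by $\mathcal{I}$, so a single use of the IH on the body closes both cases.

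I expect the sequence case $\mathcal{I}\llbracket s_1\,s_2\rrbracket=\mathcal{I}\llbracket s_2\rrbracket\,\mathcal{I}\llbracket s_1\rrbracket$ to be the only place needing genuine care, since the inverter reverses the order of the two statements. What makes this harmless is that the \rooplpp type system is \emph{flow-insensitive}: \textsc{T-Seq} types both $s_1$ and $s_2$ under one and the same environment $\Pi$, which is never threaded or mutated across the sequence. Consequently the IH yields $\langle\Pi,c\rangle\vdash^{\Gamma}_{stmt}\mathcal{I}\llbracket s_1\rrbracket$ and $\langle\Pi,c\rangle\vdash^{\Gamma}_{stmt}\mathcal{I}\llbracket s_2\rrbracket$ under the identical $\Pi$, and one application of \textsc{T-Seq}, which is symmetric in its two premises, retypes the reversed sequence. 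Finally I would remark that the whole-program variant $\mathcal{I}'$ of Figure~\ref{fig:inverter-extension} is covered by the very same argument, since it differs from $\mathcal{I}$ only on \textbf{call}/\textbf{uncall}, which there map to themselves and so trivially preserve their typing.
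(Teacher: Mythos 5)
Your proposal is correct and takes essentially the same route as the paper's proof: structural induction on the typing derivation, in which each inverted statement is retyped by the dual rule with identical premises and the induction hypothesis handles substatements. The only difference is coverage---the paper writes out only the new \rooplpp cases (\textsc{T-ArrElemAss}, \textsc{T-ObjNew}/\textsc{T-ObjDlt}, \textsc{T-ArrNew}/\textsc{T-ArrDlt}, \textsc{T-Cp}/\textsc{T-Ucp}, \textsc{T-LocalBlock}) and omits the unmodified \textsc{Roopl} cases that you also sketch, deferring those to the original proof in~\cite{th:roopl}.
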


\begin{proof}\let\qed\relax
    By structural induction on $\mathcal{T}$. Unmodified \textsc{Roopl} statements retained in \rooplpp has been omitted.

    \begin{itemize}
        \item Case $\mathcal{T} =$\vskip 1em
            \begin{center}
                \AXC{$\overbrace{\Pi(x) = \textbf{int}[\ ]}^{\mathcal{C}_1}$}
                \DP\\
                \AXC{$\overbrace{\Pi \vdash_{expr} e_1\ :\ \textbf{int}}^{\mathcal{E}_1}$}
                \AXC{$\overbrace{\Big( {\ x\ } \cup \text{vars}(e_1) \Big) \cap \text{vars}(e_2) = \emptyset}^{\mathcal{C}_2}$}
                \AXC{$\overbrace{\Pi \vdash_{expr}\ e_2 : \textbf{int}}^{\mathcal{E}_2}$}
                \RL{\textsc{T-ArrElemAss}}
                \TIC{$\langle \Pi, c \rangle \vdash^{\Gamma}_{stmt}\ x[e_1]\ \texttt{$\odot$=}\ e_2$}
                \DP\\
            \end{center}
            \vskip 1em
            In this case, we have $\mathcal{I}\llbracket x\ \texttt{$\odot$=}\ e \rrbracket = x\ \texttt{$\odot'$=}\ e$, for some $\odot'$. Therefore, $\mathcal{T}'$ will also be a derivation of rule \textsc{T-ArrElemAss}, and as such, we can simply reuse the conditions $\mathcal{C}_1, \mathcal{C}_2$ and the expressions $\mathcal{E}_1. \mathcal{E}_2$ in construction of $\mathcal{T}'$
            \begin{gather*}
                \AXC{$\overbrace{\Pi(x) = \textbf{int}[\ ]}^{\mathcal{C}_1}$}
                \DP\\
                \AXC{$\overbrace{\Pi \vdash_{expr} e_1\ :\ \textbf{int}}^{\mathcal{E}_1}$}
                \AXC{$\overbrace{\Big( {\ x\ } \cup \text{vars}(e_1) \Big) \cap \text{vars}(e_2) = \emptyset}^{\mathcal{C}_2}$}
                \AXC{$\overbrace{\Pi \vdash_{expr}\ e_2 : \textbf{int}}^{\mathcal{E}_2}$}
                \LL{$\mathcal{T}' =$}
                \TIC{$\langle \Pi, c \rangle \vdash^{\Gamma}_{stmt}\ x[e_1]\ \texttt{$\odot'$=}\ e_2$}
                \DP
            \end{gather*}

        \item Case $\mathcal{T} =$
            \AXC{$\overbrace{\Pi(x) = c'}^{\mathcal{C}_1}$}
            \RL{\textsc{T-ObjNew}}
            \UIC{$\langle \Pi, c \rangle  \vdash^{\Gamma}_{stmt}\ \textbf{new}\ c'\ x$} 
            \DP\\
            \vskip 1em
            In this case we have $\mathcal{I} \llbracket \textbf{new}\ c\ x\rrbracket\ = \textbf{delete}\ c\ x$, meaning $\mathcal{T}'$ must be of the form:
            \begin{equation*}
                \mathcal{T}' = \AXC{$\overbrace{\Pi(x) = c'}^{\mathcal{C}_2}$}          
                \UIC{$\langle \Pi, c \rangle  \vdash^{\Gamma}_{stmt}\ \textbf{delete}\ c'\ x$} 
                \DP
            \end{equation*} 

        \item Case $\mathcal{T} =$ 
            \AXC{$\overbrace{\Pi(x) = c'}^{\mathcal{C}_1}$}
            \RL{\textsc{T-ObjDlt}}            
            \UIC{$\langle \Pi, c \rangle  \vdash^{\Gamma}_{stmt}\ \textbf{delete}\ c'\ x$} 
            \DP\\
            \vskip 1em
            Inverse of the previous case, we now have $\mathcal{I} \llbracket \textbf{delete}\ c\ x\rrbracket\ = \textbf{new}\ c\ x$, meaning $\mathcal{T}'$ must be of the form: 
            \begin{equation*}
                \mathcal{T}' =  \AXC{$\overbrace{\Pi(x) = c'}^{\mathcal{C}_2}$}
                \UIC{$\langle \Pi, c \rangle  \vdash^{\Gamma}_{stmt}\ \textbf{new}\ c'\ x$} 
                \DP
            \end{equation*}

        \item Case $\mathcal{T} =$ 
            \AXC{$\overbrace{\text{arrayType}(a) \in \Big\{ \text{classIDs}, \textbf{int} \Big\}}^{\mathcal{C}_1}$}
            \AXC{$\overbrace{\Pi \vdash_{expr}\ e = \textbf{int}}^{\mathcal{E}}$}
            \AXC{$\overbrace{\Pi(x) = a[\ ]}^{\mathcal{C}_2}$}
            \RL{\textsc{T-ArrNew}}
            \TIC{$\langle \Pi, c \rangle  \vdash^{\Gamma}_{stmt}\ \textbf{new}\ a[e]\ x$} 
            \DP\\
            \vskip 1em
            In this case we still have $\mathcal{I} \llbracket \textbf{new}\ c\ x\rrbracket\ = \textbf{delete}\ c\ x$. Using $\mathcal{C}_1$ and $\mathcal{E}$, $\mathcal{T}'$ must be of the form:
            \begin{equation*}
                \mathcal{T}' = \AXC{$\overbrace{\text{arrayType}(a) \in \Big\{ \text{classIDs}, \textbf{int} \Big\}}^{\mathcal{C}_1}$}
                \AXC{$\overbrace{\Pi \vdash_{expr}\ e = \textbf{int}}^{\mathcal{E}}$}
                \AXC{$\overbrace{\Pi(x) = a[\ ]}^{\mathcal{C}_3}$}
                \TIC{$\langle \Pi, c \rangle  \vdash^{\Gamma}_{stmt}\ \textbf{delete}\ a[e]\ x$}
                \DP
            \end{equation*}

         \item Case $\mathcal{T} =$
            \AXC{$\overbrace{\text{arrayType}(a) \in \Big\{ \text{classIDs}, \textbf{int} \Big\}}^{\mathcal{C}_1}$}
            \AXC{$\overbrace{\Pi \vdash_{expr}\ e = \textbf{int}}^{\mathcal{E}}$}
            \AXC{$\overbrace{\Pi(x) = a[\ ]}^{\mathcal{C}_2}$}
            \RL{\textsc{T-ArrDlt}}
            \TIC{$\langle \Pi, c \rangle  \vdash^{\Gamma}_{stmt}\ \textbf{delete}\ a[e]\ x$}
            \DP\\
            \vskip 1em
            Similar to the object deletion case, we still have $\mathcal{I} \llbracket \textbf{delete}\ c\ x\rrbracket\ = \textbf{new}\ c\ x$. Using $\mathcal{C}_1$ and $\mathcal{E}$, $\mathcal{T}'$ must be of the form:
            \begin{equation*}
                \mathcal{T}' = \AXC{$\overbrace{\text{arrayType}(a) \in \Big\{ \text{classIDs}, \textbf{int} \Big\}}^{\mathcal{C}_1}$}
                \AXC{$\overbrace{\Pi \vdash_{expr}\ e = \textbf{int}}^{\mathcal{E}}$}
                \AXC{$\overbrace{\Pi(x) = a[\ ]}^{\mathcal{C}_3}$}
                \TIC{$\langle \Pi, c \rangle  \vdash^{\Gamma}_{stmt}\ \textbf{new}\ a[e]\ x$} 
                \DP
            \end{equation*}

         \item Case $\mathcal{T} =$ 
            \AXC{$\overbrace{\Pi(x) = c'}^{\mathcal{C}_1}$}
            \AXC{$\overbrace{\Pi(x') = c'}^{\mathcal{C}_2}$}
            \RL{\textsc{T-Cp}}
            \BIC{$\langle \Pi, c \rangle  \vdash^{\Gamma}_{stmt}   \ \textbf{copy}\ c'\ x\ x'$} 
            \DP\\
            \vskip 1em
            We have $\mathcal{I} \llbracket \textbf{copy}\ c\ x\ x' \rrbracket\ = \textbf{uncopy}\ c\ x\ x'$. Using $\mathcal{C}_1$, $\mathcal{T}'$ must as such be of the form
            \begin{equation*}
                \mathcal{T}' = \AXC{$\overbrace{\Pi(x) = c'}^{\mathcal{C}_1}$}
                \AXC{$\overbrace{\Pi(x') = c'}^{\mathcal{C}_3}$}
                \BIC{$\langle \Pi, c \rangle  \vdash^{\Gamma}_{stmt}\ \textbf{uncopy}\ c'\ x\ x'$}
                \DP
            \end{equation*}

        \item Case $\mathcal{T} =$
            \AXC{$\overbrace{\Pi(x) = c'}^{\mathcal{C}_1}$}
            \AXC{$\overbrace{\Pi(x') = c'}^{\mathcal{C}_2}$}
            \RL{\textsc{T-Ucp}}
            \BIC{$\langle \Pi, c \rangle  \vdash^{\Gamma}_{stmt}\ \textbf{uncopy}\ c'\ x\ x'$}
            \DP\\
            \vskip 1em
            We have $\mathcal{I} \llbracket \textbf{uncopy}\ c\ x\ x' \rrbracket\ = \textbf{copy}\ c\ x\ x'$. Using $\mathcal{C}_1$, $\mathcal{T}'$ must as such be of the form
            \begin{equation*}
                \mathcal{T}' = \AXC{$\overbrace{\Pi(x) = c'}^{\mathcal{C}_1}$}
                \AXC{$\overbrace{\Pi(x') = c'}^{\mathcal{C}_3}$}
                \BIC{$\langle \Pi, c \rangle  \vdash^{\Gamma}_{stmt}   \ \textbf{copy}\ c'\ x\ x'$} 
                \DP
            \end{equation*}
            
        \item Case $\mathcal{T} =$ 
            \AXC{$\overbrace{\langle \Pi, c \rangle \vdash_{expr}^{\Gamma}\ e_1}^{\mathcal{E}_1}$}
            \AXC{$\overbrace{\langle \Pi[x \mapsto c'], c \rangle \vdash^{\Gamma}_{stmt}\ s}^{\mathcal{S}}$}
            \AXC{$\overbrace{\langle \Pi, c \rangle \vdash_{expr}^{\Gamma}\ e_2}^{\mathcal{E}_2}$}
            \RL{\textsc{T-LocalBlock}}
            \TIC{$\langle \Pi, c \rangle \vdash^{\Gamma}_{stmt}\ \textbf{local}\ c'\ x = e_1\quad s\quad \textbf{delocal}\ c'\ x = e_2$} 
            \DP\\
            \vskip 1em
            We have $\mathcal{I} \llbracket\textbf{local}\ t\ x\ = e\ \quad s\quad \textbf{delocal}\ t\ x\ = e\rrbracket\ =\ \textbf{local}\ t\ x = e \quad\mathcal{I}\llbracket s \rrbracket\quad\textbf{delocal}\ t\ x\ = e$.

            By the induction hypothesis on $\mathcal{S}$, we obtain $\mathcal{S'}$ of $\langle \Pi[x \mapsto c'], c \rangle \vdash^{\Gamma}_{stmt}\ \mathcal{I} \llbracket s \rrbracket$. Using $\mathcal{E}_1$, $\mathcal{S'}$ and $\mathcal{E}_2$ we construct $\mathcal{T}'$
            \begin{equation*}
                \mathcal{T}' = \AXC{$\overbrace{\langle \Pi, c \rangle \vdash_{expr}^{\Gamma}\ e_1}^{\mathcal{E}_1}$}
                \AXC{$\overbrace{\langle \Pi[x \mapsto c'], c \rangle \vdash^{\Gamma}_{stmt}\ \mathcal{I} \llbracket s \rrbracket}^{\mathcal{S'}}$}
                \AXC{$\overbrace{\langle \Pi, c \rangle \vdash_{expr}^{\Gamma}\ e_2}^{\mathcal{E}_2}$}
                \TIC{$\langle \Pi, c \rangle \vdash^{\Gamma}_{stmt}\ \textbf{local}\ c'\ x = e_1\quad \mathcal{I} \llbracket s \rrbracket\quad \textbf{delocal}\ c'\ x = e_2$} 
                \DP
            \end{equation*}
    \end{itemize}
\end{proof}

Using these added cases to the original proof provided in~\cite{th:roopl}, Theorem~\ref{thm:type-inversion} shows that well-typedness is preserved over inversion of \rooplpp methods. As methods are well-typed if their body statement is well-typed, inversion of classes and programs also preserve well-typedness, as classes consists of methods and programs of classes, by using the class inverter presented in figure~\ref{fig:inverter-extension}.

\section{Computational Strength}
\label{sec:computational-strength}
Traditional, non-reversible programming languages have their computational strength measured in terms of their abilities to simulate the Turing machine (TM). If any arbitrary Turing machine can be implemented in some programming language, the language is said to be computationally universal or Turing-complete. In essence, Turing-completeness marks when a language can compute all computable functions. Reversible programming languages, like \textsc{Janus}, \textsc{Roopl} and \rooplpp, are not Turing-complete as they only are capable of computing injective, computable functions.

For determining computing strength of reversible programming languages,~\citeauthor{ty:ejanus} suggests that the reversible Turing machine (RTM) could serve as the baseline criterion~\cite{ty:ejanus}. As such, a reversible programming language is reversibly universal or r-Turing complete if it is able to simulate a reversible Turing machine cleanly, i.e. without generating garbage data. If garbage was on the tape, the function simulated by the machine would not be an injective function and as such, no garbage should be left after termination of the simulation.

\subsection{Reversible Turing Machines}
\label{subsec:reversible-turing-machine}
Before we show that \rooplpp in fact is r-Turing complete, we present the formalized reversible Turing machine definition, as defined in~\cite{ty:ejanus}.
\vspace{4mm}
\begin{definition}
    \label{def:quadruple-tm}(Quadruple Turing Machine)\vspace{4mm}\\
    \noindent A TM T is a tuple $(Q,\ \Gamma,\ b,\ \delta,\ q_s,\ q_f)$ where
    \begin{itemize}[label = {}, itemsep = 1pt]
        \item $Q$ is the finite non-empty set of states
        \item $\Gamma$ is the finite non-empty set of tape alphabet symbols
        \item $b\ \in\ \Gamma$ is the blank symbol
        \item $\delta\ :\ (Q\ \times\ \Gamma\ \times\ \Gamma\ \times\ Q)\ \cup\ (Q\ \times\ \{/\}\ \times\ \{L,\ R\}\ \times\ Q)$ is the partial function representing the transitions
        \item $q_s\ \in\ Q$ is the starting state
        \item $q_f\ \in\ Q$ is the final state
    \end{itemize}
    The symbols $L$ and $R$ represent the tape head shift-directions left and right. A quadruple is either a symbol rule of the form $(q_1,\ s_1,\ s_2,\ q_2)$ or a shift rule of the form $(q_1,\ /,\ d,\ q_2)$ where $q_1 \in Q$, $q_2 \in Q$, $s_1 \in \Gamma$, $s_2 \in \Gamma$ and $d$ being either $L$ or $R$.
    
    A symbol rule $(q_1,\ s_1,\ s_2,\ q_2)$ means that in state $q_1$, when reading $s_1$ from the tape, write $s_2$ to the tape and change to state $q_2$. A shift rule $(q_1,\ /,\ d,\ q_2)$ means that in state $q_1$, move the tape head in direction $d$ and change to state $q_2$.
\end{definition}
\vspace{4mm}
\begin{definition}
    \label{def:reversible-tm}(Reversible Turing Machine)\vspace{4mm}\\
    \noindent A TM T is a reversible TM iff, for any distinct pair of quadruples $(q_1,\ s_1,\ s_2,\ q_2)\ \in\ \delta_T$ and $(q'_1,\ s'_1,\ s'_2,\ q'_2)\ \in\ \delta_T$, we have
    \begin{itemize}[label = {}, itemsep = 1pt]
        \item $q_1\ =\ q'_1\ \implies\ (t_1\ \neq\ / \quad \wedge \quad t'_1\ \neq\ / \quad \wedge \quad t_1\ \neq\ t'_1)$ (forward determinism)
        \item $q_2\ =\ q'_2\ \implies\ (t_1\ \neq\ / \quad \wedge \quad t'_1\ \neq\ / \quad \wedge \quad t_2\ \neq\ t'_2)$ (backward determinism)
    \end{itemize}
\end{definition}

A RTM simulation implemented in \textsc{Roopl} by representing the set of states $\{q_1,\ \dots,\ q_n\}$ and the tape alphabet $\Gamma$ as integers and the rule $/$ and direction symbols $L$ and $R$ as the uppercase integer literals \inst{SLASH}, \inst{LEFT} and \inst{RIGHT} was presented in~\cite{th:roopl}. As \textsc{Roopl} contains no array or stack primitives, the transition table $\delta$ was suggested to be represented as a linked list of objects containing four integers ${\textbf{q1}}$, ${\textbf{s1}}$, ${\textbf{s2}}$ and ${\textbf{q2}}$ each, where ${\textbf{s1}}$ equals \inst{SLASH} for shift rules. In \rooplpp, we do, however, have an array primitive and as such, we can simply simulate transitions by having rules ${\textbf{q1}}$, ${\textbf{s1}}$, ${\textbf{s2}}$ and ${\textbf{q2}}$ represented as arrays, where the number of cells in each array is \inst{PC\_MAX}, in a similar fashion as shown in~\cite{ty:ejanus}.

\subsection{Tape Representation}
\label{subsec:tape-representation}
As with regular Turing machines, the Reversible Turing machines also have tapes of infinite length. Therefore, we must simulate tape growth in either direction.
\citeauthor{ty:ejanus} represented the tape using two stack primitives in the Janus RTM interpreter and \citeauthor{th:roopl} used list of objects. In \rooplpp, we could implement a stack, as objects are not statically scoped as in \textsc{Roopl}. However, in terms of easy of use, a doubly linked list implementation similar to the one presented in section~\ref{subsec:doubly-linked-list}, of simple cell objects containing \textit{value}, \textit{left}, \textit{right} and \textit{self} fields, is more intuitive.

As such, the tape head finds a tape cell by inspecting a specific element of the doubly linked list tape representation. When we move in either direction, we simply set the neighbour element as the new tape head and allocate a new neighbour for the new tape head cell, if we are at the end of the list, to simulate the infinitely-length tape. Reversibly, this means that when we move in the opposite direction, blank cells are deallocated if we are moving the tape head away from the cell currently neighbouring either end of the tape.

\begin{figure}[ht]
    \centering
    \lstinputlisting[style = basic, language = roopl]{rtm-moveRight.rplpp}
    \caption{Method for moving the tape head in the RTM simulation}
    \label{fig:rtm-move-tape-head-method}
\end{figure}

Figure~\ref{fig:rtm-move-tape-head-method} shows the \textit{moveRight} method for moving the tape head right. If the current tape head has no instantiated right neighbour we construct one using the \textbf{new} statement. Uncalling this method will move the tape head left. If the tape head is empty after moving left, we simply allocate a new cell, thus allowing tape growth in both directions. 
\newpage

\subsection{Reversible Turing Machine Simulation}
\label{subsec:rtm-simulation}
Figure~\ref{fig:rtm-instruction-method} shows the modified method \textit{inst} from~\cite{ty:ejanus}, which executes a single instruction given the tape head, the current state, symbol, program counter and the four arrays representing the transition rules. As described above, we \textbf{call} \textit{moveRight} to move the tape head right and \textbf{uncall} to move the tape head left.

\begin{figure}[ht]
    \centering
    \lstinputlisting[style = basic, language = roopl]{rtm-inst.rplpp}
    \caption{Method for executing a single TM transition}
    \label{fig:rtm-instruction-method}
\end{figure}

Figure~\ref{fig:rtm-simulation-method} shows the simulate method which is the main method responsible for running the RTM simulation. The tape is extended in either direction when needed, and the program counter is incremented.

\begin{figure}[ht]
    \centering
    \lstinputlisting[style = basic, language = roopl]{rtm-simulate.rplpp}
    \caption{Main RTM simulation method}
    \label{fig:rtm-simulation-method}
\end{figure} 

Unlike the \textsc{Roopl} simulation, \rooplpp is not limited by stack allocated, statically-scoped objects. Due to this limitation, the \textsc{Roopl} RTM simulator cannot finish with the TM tape as its program output when the RTM halts, as the call stack of the simulation must unwind before termination. As objects in \rooplpp are not bound by this limitation, the TM tape will exist as the program output when the RTM halts.\footnote{We are here breaking the rule that a \textbf{new} statement must eventually be followed by a \textbf{delete} statement to free the data.}

Instantiating a RTM simulation consists of initializing an initial tape head cell, as well as the transition rule arrays. After initialization, the \textit{simulate} method is simply called and the simulation begins.

\newpage

\chapter{Dynamic Memory Management}
\label{chp:dynamic-memory-management}
In order to allow objects to live outside of static scopes, we need to utilize a different memory management technique, such that objects are not allocated on the stack. Dynamic memory management presents a method of storing objects in different memory structures, most commonly, a memory heap. Most irreversible, modern programming languages uses dynamic memory management in some form for allocating space for objects in memory. 

However, reversible, native support for complex data structures is a non-trivial matter to implement. Variable-sized records and frames need to be stored efficiently in a structured heap, while avoiding garbage build-up to maintain reversibility. A reversible heap manager layout has been proposed for a simplified version of the reversible functional language \textsc{RFun} and later expanded to allow references to avoid deep copying values~\cite{ha:heap, ty:rfun, tm:refcounting}.

This chapter presents a brief introduction to fragmentation, garbage and linearity and how these respectively are handled reversibly, and a discussion of various heap manager layouts considered for \rooplpp, along with their advantages and disadvantages in terms of implementation difficulty, garbage build-up and the OOP paradigm.

\section{Fragmentation}
\label{sec:fragmentation}
Efficient memory usage is an important matter to consider when designing a heap layout for a dynamic memory manager. In a stack allocating memory layout, the stack discipline is in effect, meaning only the most recently allocated data can be freed. This is not the case with heap allocation, where data can be freed regardless of allocation order. A potential side effect of this freedom, comes as a consequence of memory fragmentation. We distinguish different types of fragmentation as internal or external fragmentation. 

Internal fragmentation refers to unused space inside a memory block used to store an object, if, say, the object is smaller than the block it has been allocated to. External fragmentation occurs as blocks freed throughout execution are spread across the memory heap, resulting in \textit{fragmented} free space~\cite{tm:languages}.

\subsection{Internal Fragmentation}
\label{subsec:internal-fragmentation}
Internal fragmentation occurs in the memory heap when part of an allocated memory block is unused. This type of fragmentation can arise from a number of different scenarios, but mostly it originates from cases of \textit{over-allocation}, which occurs when the memory manager delegates memory larger than required to fit an object, due to e.g. fixed-block sizing. 

For an example, consider a scenario, in which we allocate memory for an object of size $m$ onto a simple, fixed-sized block heap. The fixed block size is $n$ and $m \neq n$. If $n > m$, internal fragmentation would occur of size $n-m$ for every object of size $m$ allocated in said heap. If $n < m$, numerous blocks would be required for allocation to fit our object. In this case the internal fragmentation would be of size $n - m\ mod\ n$ per allocated object of size $m$.

\begin{subfigures}
  \begin{figure}[ht]
    \centering
    \begin{tikzpicture}
      \fill[fill = grey] (0, 0) rectangle (2, 1) node[midway] {};
  
      \draw (0,0) rectangle (3,1);
      \draw (3,0) rectangle (6,1);
      \draw[dashed] (2, 0) -- (2, 1); 
      
      \draw (1, 0.5) node{Object};
      \draw (4.5, 0.5) node{Free space};
  
      \draw[decoration={brace,mirror,raise=5pt},decorate] (0,0) -- node[below=6pt] {$n$} (3,0);
      \draw[decoration={brace,raise=5pt},decorate] (0,1) -- node[above=6pt] {$m$} (2,1);
      \draw[decoration={brace,raise=5pt},decorate] (2,1) -- node[above=6pt] {$n-m$} (3,1); 
    \end{tikzpicture}
    \caption{Creation of internal fragmentation of size $n-m$ due to \textit{over-allocation}}
    \label{fig:internal-frag-example}
  \end{figure}
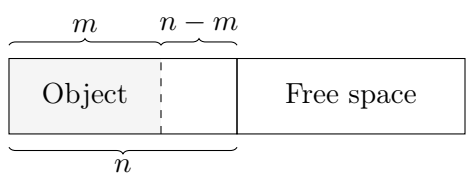

  \begin{figure}[ht]
    \centering
    \begin{tikzpicture}
      \fill[fill = grey] (0, 0) rectangle (4, 1) node[midway] {};
  
      \draw (0,0) rectangle (6,1);
      \draw (6,0) rectangle (9,1);
      \draw[dashed] (4, 0) -- (4, 1);
      
      \draw (2.1, 0.5) node{Object};
      \draw (7.5, 0.5) node{Free space};
  
      \draw[decoration={brace,mirror,raise=5pt},decorate] (0,0) -- node[below=6pt] {$n$} (3,0);
      \draw[decoration={brace,raise=5pt},decorate] (0,1) -- node[above=6pt] {$m$} (4,1);
      \draw[decoration={brace,mirror,raise=5pt},decorate] (4,0) -- node[below=6pt] {$n-m\ mod\ n$} (6,0); 
    \end{tikzpicture}
    \caption{Creation of internal fragmentation of size $n-m\ mod\ n$ due to \textit{over-allocation}}
    \label{fig:internal-frag-example-cont}
  \end{figure}
\end{subfigures}

Figure~\ref{fig:internal-frag-example} and~\ref{fig:internal-frag-example-cont} visualize the examples of internal fragmentation build-up from \textit{over-allocating} memory. 

It is difficult for the memory manager to reclaim wasted memory caused by internal fragmentation, as it usually originates from a design choice.
Intuitively, internal fragmentation can best be prevented by ensuring that the size of block(s) being used for allocating space for an object of size $m$ either match or sums to this exact size, when designing the layout. 

\subsection{External Fragmentation}
\label{subsec:external-fragmentation}
External fragmentation materializes in the memory heap when a freed block becomes partly or completely unusable for future allocation if, say, it is surrounded by allocated blocks but the size of the freed block is too small to contain objects on its own.

This type of fragmentation is generally a more substantial cause of problems than internal fragmentation, as the amount of wasted memory typically is larger and less predictable in external fragmentation blocks than in internal fragmentation blocks. Depending on the heap implementation, i.e. a layout using variable-sized blocks of, say, size $2^n$, the internal fragment size becomes considerable for large values of $n$. 

Non-allocatable external fragments become a problem when it is impossible to allocate space for a large object as a result of too many non-consecutive blocks scattered around the heap, caused by the external fragmentation. Physically, there is enough space to store the object, but not in the current heap state. In this scenario we would need to relocate blocks in such a manner that the fragmentation disperses, which is not possible to do reversibly.

Allocation and deallocation order is important in order to combat external fragmentation. For example, if we have a class $A$, which fit on one memory block of size $n$, and we have a class $B$, which fit on two memory blocks of size $n$ and limited memory space, we can easily reach a situation, where we cannot fit more $B$ objects due to external fragmentation.

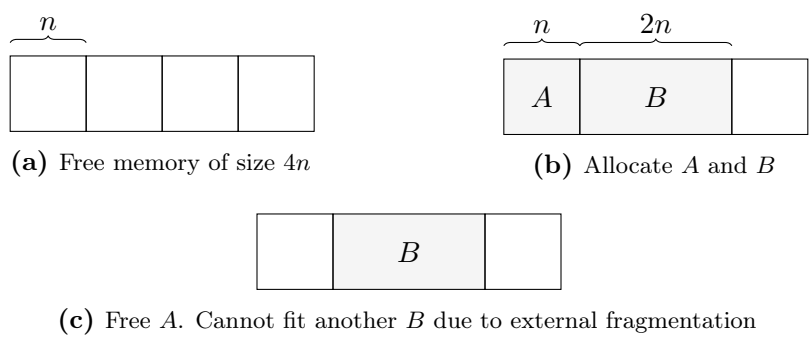
\begin{figure}[ht]
  \centering
  \begin{subfigure}{.4\textwidth}
    \centering
    \begin{tikzpicture}
      \draw (0, 0) rectangle (1, 1);
      \draw (1, 0) rectangle (2, 1);
      \draw (2, 0) rectangle (3, 1);
      \draw (3, 0) rectangle (4, 1);

      \draw[decoration={brace,raise=5pt},decorate] (0,1) -- node[above=6pt] {$n$} (1,1);
    \end{tikzpicture}
    \caption{\footnotesize Free memory of size $4n$}
  \end{subfigure}
  \begin{subfigure}{.4\textwidth}
    \centering
    \begin{tikzpicture}
      \fill[fill = grey] (0, 0) rectangle (3, 1) node[midway] {};
  
      \draw (0, 0) rectangle (1, 1) node[pos=.5] {$A$};
      \draw (1, 0) rectangle (3, 1) node[pos=.5] {$B$};
      \draw (3, 0) rectangle (4, 1);

      \draw[decoration={brace,raise=5pt},decorate] (0,1) -- node[above=6pt] {$n$} (1,1);
      \draw[decoration={brace,raise=5pt},decorate] (1,1) -- node[above=6pt] {$2n$} (3,1);
    \end{tikzpicture}
    \caption{\footnotesize Allocate $A$ and $B$}
  \end{subfigure}
  \vskip 1em
  \begin{subfigure}{.6\textwidth}
    \centering
    \begin{tikzpicture}
      \fill[fill = grey] (1, 0) rectangle (3, 1) node[midway] {};
  
      \draw (0, 0) rectangle (1, 1);
      \draw (1, 0) rectangle (3, 1) node[pos=.5] {$B$};
      \draw (3, 0) rectangle (4, 1);
    \end{tikzpicture}
    \caption{\footnotesize Free $A$. Cannot fit another $B$ due to external fragmentation}
  \end{subfigure}
  \caption{Example of external fragmentation caused for allocation and deallocation order}
  \label{fig:external-frag-example}
\end{figure}  

Figure~\ref{fig:external-frag-example} shows this example, where the allocation and deallocation order causes a situation, in which we cannot allocate any more $B$ objects, even though we physically have the required amount of free space in memory. 

\begin{figure}[ht]
  \centering
  \begin{subfigure}{.4\textwidth}
    \centering
    \begin{tikzpicture}
      \draw (0, 0) rectangle (1, 1);
      \draw (1, 0) rectangle (2, 1);
      \draw (2, 0) rectangle (3, 1);
      \draw (3, 0) rectangle (4, 1);

      \draw[decoration={brace,raise=5pt},decorate] (0,1) -- node[above=6pt] {$n$} (1,1);
    \end{tikzpicture}
    \caption{\footnotesize Free memory of size $4n$}
  \end{subfigure}
  \begin{subfigure}{.4\textwidth}
    \centering
    \begin{tikzpicture}
      \fill[fill = grey] (0, 0) rectangle (3, 1) node[midway] {};
  
      \draw (0, 0) rectangle (2, 1) node[pos=.5] {$B$};
      \draw (2, 0) rectangle (3, 1) node[pos=.5] {$A$};
      \draw (3, 0) rectangle (4, 1);

      \draw[decoration={brace,raise=5pt},decorate] (0,1) -- node[above=6pt] {$2n$} (2,1);
      \draw[decoration={brace,raise=5pt},decorate] (2,1) -- node[above=6pt] {$n$} (3,1);
    \end{tikzpicture}
    \caption{\footnotesize Allocate $B$ and $A$}
  \end{subfigure}
  \vskip 1em
  \begin{subfigure}{.4\textwidth}
    \centering
    \begin{tikzpicture}
      \fill[fill = grey] (0, 0) rectangle (2, 1) node[midway] {};
  
      \draw (0, 0) rectangle (2, 1) node[pos=.5] {$B$};
      \draw (2, 0) rectangle (3, 1);
      \draw (3, 0) rectangle (4, 1);
    \end{tikzpicture}
    \caption{\footnotesize Free $A$}
  \end{subfigure}
  \begin{subfigure}{.4\textwidth}
    \centering
    \begin{tikzpicture}
      \fill[fill = grey] (0, 0) rectangle (4, 1) node[midway] {};
  
      \draw (0, 0) rectangle (2, 1) node[pos=.5] {$B$};
      \draw (2, 0) rectangle (4, 1) node[pos=.5] {$B$};
    \end{tikzpicture}
    \caption{\footnotesize Allocate another $B$}
  \end{subfigure}
  \caption{Example of avoiding external fragmentation using allocation and deallocation order}
  \label{fig:external-frag-example-cont}
\end{figure}
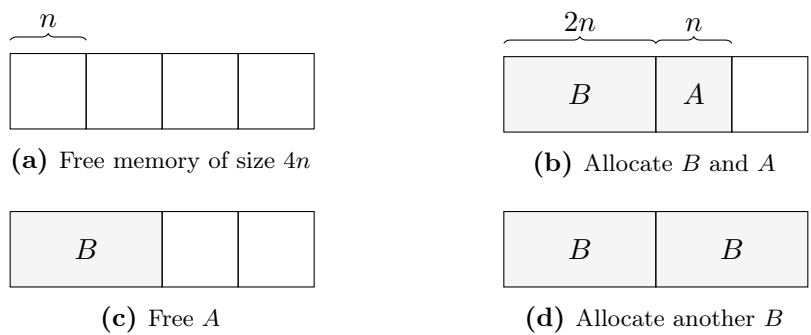  

Figure~\ref{fig:external-frag-example-cont} shows how changing allocation and deallocation order can combat external fragmentation.

\section{Memory Garbage}
\label{sec:memory-garbage}
A reversible computation should be garbage-free and as such it should be our goal to return the memory to its original state after program termination.

Traditionally, in non-reversible programming languages, freed memory blocks are simply re-added to the free list during deallocation and no modification of the actual data stored in the block is performed, as it simply is overwritten when the block is used later on. In the reversible setting we must return the memory block to its original state after the block has been freed (e.g. zero-cleared), to uphold the time-invertible and two-directional computational model. Figure~\ref{fig:injective-garbage-in-out} illustrates how the output data (or garbage) of an injective function $f$ is the input to its inverse function $f^{-1}$.

In heap allocation layouts, we maintain one or more free lists to keep track of free blocks during program execution, which are stored in memory, besides the heap representation itself. These free lists can essentially be considered garbage and as such, they must also be returned to their original state after execution. Furthermore, the heap itself can also be considered garbage and if it grows during execution, it should also be returned to its original size.

\begin{figure}[ht]
  \centering
    \begin{tikzpicture}
      \draw[-] (-1,1.75) node[left]{$in$} -- (9,1.75) node[right] {$in$};
      \draw[-] (2,1.25) -- (6,1.25);
      \draw[-] (2,0.75) -- (6,0.75);
      \draw[-] (2,0.25) -- (6,0.25);
      \draw[-] (-1,-0.5) node[left]{$0$} -- (9,-0.5) node[right] {$out$};
      \draw[-] (4,0.24) node[circle,fill,inner sep=1pt] {} -- (4,-0.5) node {$\oplus$};
      
      \draw[fill = white] (0, 0) rectangle (2, 2) node[pos=0.5] {\Large $f$};
      \draw[fill = white] (6, 0) rectangle (8, 2) node[pos=0.5] {\Large $f^{-1}$};
      \node[diamond, fill=white, draw] at (4,1.25) {\scriptsize $garbage$};
    \end{tikzpicture}
    \caption{The "garbage" output of an injective function $f$ is the input to its inverse function $f^{-1}$}
    \label{fig:injective-garbage-in-out}
\end{figure}
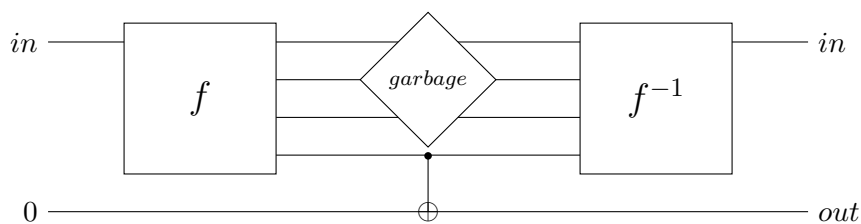

Returning the free list(s) to their original states is a non-trivial matter, which is highly dependent on the heap layout and free list design.~\citeauthor{ha:heap} introduced a dynamic memory manager which allowed heap allocation and deallocation, but without restoring the free list to its original state in~\cite{ha:heap}.~\citeauthor{ha:heap} argue that an unrestored free list can be considered harmless garbage in the sense that the free list residing in memory after termination is equivalent to a restored free list, as it contains the same blocks, but linked in a different order, depending on the order of allocation and deallocation operations performed during program execution. Figure~\ref{fig:equivalent-free-lists} illustrates how an inverse, injective function $f^{-1}$, whose non-inverse function $f$ computes something which modifies a given free lists, does not require the \textit{exact} output free list of $f$, but \textit{any} free list of same layout as input for the inverse function $f^{-1}$. The output free list of $f^{-1}$ will naturally be a further modified free list.

\begin{figure}[ht]
  \centering
  \begin{tikzpicture}
    \draw[-] (-1,1.75) node[left]{$free\ list$} -- (4.5,1.75) node[above]{$free\ list'$} -- (10.5,1.75) node[right] {$free\ list''$};
    \draw[-] (-1,0.25) node[left]{$input_f$} -- (3,0.25) node[right] {$output_f$};
    \draw[-] (6,0.25) node[left]{$input_g$} -- (10,0.25) node[right] {$output_g$};
          
    \draw[fill = white] (0, 0) rectangle (2, 2) node[pos=0.5] {\Large $f$};
    \draw[fill = white] (7, 0) rectangle (9, 2) node[pos=0.5] {\Large $g$};
  \end{tikzpicture}
  \caption{All free lists are considered equivalent "garbage" in terms of injective functions}
  \label{fig:equivalent-free-lists}
\end{figure}
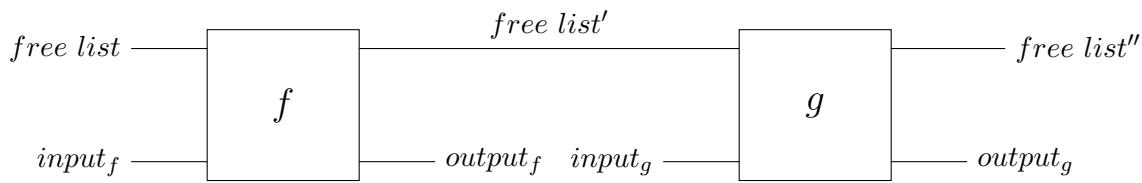

This intuitively leads to the question of garbage classification. In the reversible setting all functions are injective. Thus, given some $input_f$, in a reversible computation using heap allocation, the injective function $f$ produces some $output_f$ and some modified free list $free\ list'$, obtained after storing or freeing data in the heap during the execution of $f$ with an input $free\ list$. A future injective function in the program, function $g$, must thus take any modification of the original free list in addition to its input to produce its output $output_g$ and a potentially further modified free list, $free\ list''$. However, in the context of reversible heaps, we must consider all free lists as of equivalent, harmless garbage class and thus freely substitutable with each other, as injective functions still can drastically change the block layout, free list order, etc. during its execution in either direction. Figure~\ref{fig:equivalent-free-lists} shows how any free list can be passed between a function $f$ and a function $g$ further in the program.

\section{Linearity and Reference Counting}
\label{sec:referencing}
Programming languages use different approaches for storing and synchronizing variables and objects in memory. Typing \textit{linearity} is a distinction, which can reduce storage management and synchronization costs~\cite{hb:linearity}.

Reversible programming languages such as \textsc{Janus} and \textsc{Roopl} are linear in the sense that object and variable pointers cannot be copied and are only deleted during deallocation. Pointer copying greatly increases the flexibility of programming, especially in a reversible setting where zero-clearing is critical, at the cost of increased management in form of reference counting for e.g. objects. For variables, pointer copying is not particular interesting, nor would it add much flexibility as the values of a variable simply can be copied into statically-scoped local blocks. For objects however, tedious amounts of boilerplate work must be done if object $A$ and $B$ need to work on the same object $C$ and only one reference to each object is allowed. Copying is not an option as field modification in one copy does not affect the other copies.  

\citeauthor{tm:refcounting} presented the reversible functional language \textsc{Rcfun} which use reference counting to allow multiple pointers to the same memory nodes as well as a translation from \textsc{Rcfun} into \textsc{Janus} in \cite{tm:refcounting}. In \textsc{Rcfun}, reference counting is used to manage and trace the number of pointer copies made by respectively incrementing and decrementing a \textit{reference count} stored in the memory node, whenever the original node pointer is copied or a copy pointer is deleted. For the presented heap manage, deletion of object nodes was only allowed when no references to a node remained.

In non-reversible languages, reference counting is also used in garbage collection by automatically deallocating unreachable objects and variables which contains no referencing.

\section{Heap Manager Layouts}
\label{sec:heap-manager-layout}
Heap managers can be implemented in numerous ways. Different layouts yield different advantages when allocating memory, finding a free block or when collecting garbage. As our goal is to construct a garbage-free heap manager, our finalized design should emphasize and reflect this objective in particular. Furthermore, we should attempt to allocate and deallocate memory as efficiently as possible, as merging and splitting of blocks is a non-trivial problem in a reversible setting and to avoid problematic fragmentation.

For the sake of simplicity, we will not consider the issue of retrieving memory pages reversibly. A reversible operating system is a long-term dream of the reversible researcher and as reversible programming language designers, we assume that \rooplpp will be running in an environment, in which an operating system will be supplying memory pages and their mappings. As such, the following heap memory designs reflect this preliminary assumption, that we can always query the operating system for more memory. 

Historically, most object-oriented programming languages utilize a dynamic memory manager during program execution. In lower-level languages such as \textsc{C} or \textsc{C\texttt{++}}, memory management is manual and allocation has to be stated explicitly and with the requested size through the \textbf{malloc} statement and deallocated using the \textbf{free} statement. Modern languages, such as \textsc{Java} and \textsc{Python}, \textit{automagically} allocates and frees space for objects and variable-sized arrays by utilizing their dynamic memory manager and garbage collector to dispatch \textbf{malloc}- and \textbf{free}-like operations to the operating system and managing the obtained memory blocks in private heap(s)~\cite{wh:cpp_memory, bv:jvm, py:memory}. The heap layout of these managers vary from language to language and compiler to compiler.

Previous work on reversible heap manipulation has been done for reversible functional languages in~\cite{ha:heap, jsk:translation, tm:garbage}.

\citeauthor{ha:heap} presented a static heap structure consisting of \textsc{Lisp}-inspired constructor cells of fixed size and a single free list for the reversible function language \textsc{Rfun} in~\cite{ha:heap}. \citeauthor{tm:refcounting} presented an implementation in \textsc{Janus} of reversible reference counting under the assumption of \citeauthor{ha:heap}'s heap manager in~\cite{tm:refcounting}. Building on the previous work, \citeauthor{tm:garbage} later presented a reversible intermediate language \textsc{Ril} and an implementation in \textsc{Ril} of a reversible heap manager, which uses reference counting and hash-consing to achieve garbage collection in~\cite{tm:garbage}.

We do not consider reference counting or garbage collection in the layouts presented in the following sections, but we later show how the selected layout for \rooplpp is extended with reference counting in section~\ref{sec:referencing-compilation}.

\subsection{Memory Pools}
\label{subsec:memory-pools}
The simplest heap layout we can design uses fixed-sized blocks. This design is also known as memory pools, as memory is allocated from "pools" of fixed-sized blocks regardless of the record size.
To model these pools of fixed-sized blocks, we simply use a linked list of identically sized free block cells, which we maintain over execution.
While the fixed-block layout is simple and relatively easy in terms of implementation it is also largely uninteresting as it provides little to no options, besides sizing of the fixed-blocks, to combat fragmentation.

This layout comes with a few options in terms of the actual heap layout. If we only allow allocation of consecutive, adjacent free blocks, we should keep the free list sorted. If the free list is not sorted, and we have to allocate an object which requires $n$ blocks, we have to iterate the free list $n^2$ times in the worst case to find a chain of consecutive blocks large enough to fit the object. The sorting part itself is non-trivial matter. Furthermore, we need some overhead storage inside the object to contains the references of the blocks occupied by the object, or some other structure which can be used when deallocating the object and returning all the blocks to the free list. If we allow allocation of non-consecutive blocks, larger amounts of bookkeeping is required as we need to store knowledge of when and where the object is split.

Figures~\ref{fig:external-frag-example} and~\ref{fig:external-frag-example-cont} from earlier in this chapter, in section~\ref{subsec:external-fragmentation} on page~\pageref{fig:external-frag-example} illustrates examples with consecutive, fixed-sized block allocation.

\subsection{One Heap Per Record Size}
\label{subsec:one-heap-per-record-size}
Instead of allocating space for objects from a single free list and heap, we could design an approach which uses one heap per record size, known as a multi-heap layout. The respective classes and their sizes are easily identified during compile time from which the amount of heaps and free list will be initialized. This means the layout is very dynamic and potentially can change drastically in terms of the amount of heaps utilized depending on the input program. 

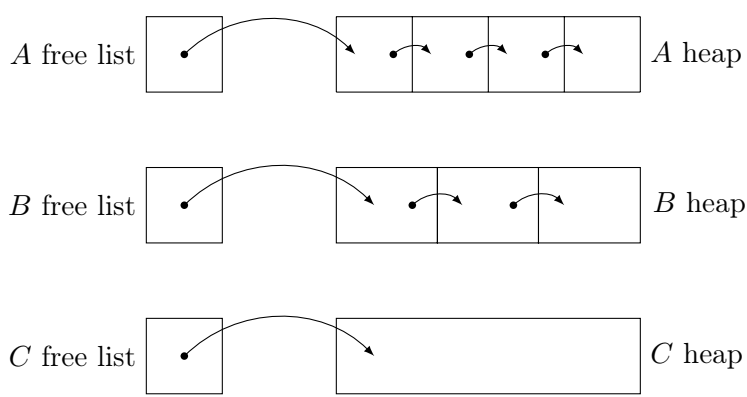
\begin{figure}[ht]
  \centering
  \begin{tikzpicture}
    \draw[step=1cm] (0,4) grid (4,5);
    \node[right] at (4,4.5) {$A$ heap}; 
    \draw (0,2) rectangle (1.33, 3) rectangle (2.66,2) rectangle node[right, xshift=.7cm] {$B$ heap} (4,3);
    \draw (0,0) rectangle node[right, xshift=2cm] {$C$ heap} (4,1);

    \draw (-2.5, 4) rectangle node[left, xshift=-0.5cm] {$A$ free list} (-1.5, 5);
    \draw (-2.5, 2) rectangle node[left, xshift=-0.5cm] {$B$ free list} (-1.5, 3);
    \draw (-2.5, 0) rectangle node[left, xshift=-0.5cm] {$C$ free list} (-1.5, 1);
    
    \node[circle,fill,inner sep=1pt] at (-2, 4.5) {};
    \draw[-latex] (-2, 4.5) to[out=45, in=135] (0.25, 4.5);
    \node[circle,fill,inner sep=1pt] at (0.75, 4.5) {};
    \draw[-latex] (0.75, 4.5) to[out=45, in=135] (1.25, 4.5);
    \node[circle,fill,inner sep=1pt] at (1.75, 4.5) {};
    \draw[-latex] (1.75, 4.5) to[out=45, in=135] (2.25, 4.5);
    \node[circle,fill,inner sep=1pt] at (2.75, 4.5) {};
    \draw[-latex] (2.75, 4.5) to[out=45, in=135] (3.25, 4.5);

    \node[circle,fill,inner sep=1pt] at (-2, 2.5) {};
    \draw[-latex] (-2, 2.5) to[out=45, in=135] (0.5, 2.5);
    \node[circle,fill,inner sep=1pt] at (1, 2.5) {};
    \draw[-latex] (1, 2.5) to[out=45, in=135] (1.66, 2.5);

    \node[circle,fill,inner sep=1pt] at (2.33, 2.5) {};
    \draw[-latex] (2.33, 2.5) to[out=45, in=135] (3, 2.5);

    \node[circle,fill,inner sep=1pt] at (-2, 0.5) {};
    \draw[-latex] (-2, 0.5) to[out=45, in=135] (0.5, 0.5);
  \end{tikzpicture}
  \caption{Memory layout using one heap per record size}
  \label{fig:one-heap-per-record-size}
\end{figure}

Figure~\ref{fig:one-heap-per-record-size} illustrates three heaps with respective free lists for three classes $A$, $B$ and $C$ of size $n$, $2n$ and $4n$. Each heap is represented as a simple linked list with the free list simply being a pointer to the first free block in the heap. 

The advantage of this approach would be effective elimination of internal and external fragmentation, as each heap fits their targeted record perfectly, making each allocation and deallocation tailored to the size of the record obtained from a static analysis during compilation, resulting in no over-allocation and no unusable chunks of freed memory appearing during varying deallocation order. Implementation-wise, allocation of an object of a given class simply becomes the task of popping the head of the respective free list, which can easily be determined at compile time. The deallocation is simply adding a new head to the free list.

Listing~\ref{lst:one-heap-per-record-size} outlines the allocation algorithm for this layout written in extended \textsc{Janus} from~\cite{ty:ejanus}. We assume that the heads of the free lists are stored in a single array primitive, such that the free list for records of size $n$ are indexed at $n-2$ and $n > 2$ (as every record needs some overhead) and that we have heaps for continuous size range with no gaps.

The algorithm consists of an entry point named \textbf{malloc} and a recursion body named \textbf{malloc1}. Given a zero-cleared pointer $p$, the size of the object we are allocating $o_{size}$ and the array of free lists primitive, the recursion body is called after initializing a $counter$, which is an index into the free lists array and a counter size, $c_{size}$, which is the block size of the current free list the $counter$ is indexed in. The recursion body first updates the free list index until we find a free list with a size greater or equal to the size of the object we are allocating. Once such a free list has been found, the head of the free list is simply popped and the next block is set as the new head.\\

\begin{lstlisting}[caption={Allocation algorithm for one heap per record size implemented in extended Janus}, language=janus, style=basic, label={lst:one-heap-per-record-size}]
  procedure malloc(int p, int osize, int freelists[])
    local int counter = 0
    local int csize = 2
    call malloc1(p, osize, freelists, counter, csize)
    delocal int csize = 2
    delocal int counter = 0

  procedure malloc1(int p, int osize, int freelists[], int counter, int csize)
    if (csize < osize) then
      counter += 1
      csize += 1
      call malloc1(p, osize, freelists, counter, csize) 
      csize -= 1
      counter -= 1
    else
      p += freelists[counter]
      freelists[counter] -= p

      // Swap head of free list with next block of p 
      freelists[counter] ^= M(p)
      M(p) ^= freelists[counter]
      freelists[counter] ^= M(p)
    fi csize < osize   
\end{lstlisting}

The obvious disadvantage to this layout is the amount of bookkeeping and workload associated with growing and shrinking a heap and its neighbours, in case the program requests additional memory from the operating system. In real world object-oriented programming, most classes feature a small number of fields, very rarely more than 16. 

Additionally, helper classes of other sizes would spawn additional heaps and bookkeeping work, making the encapsulation concept of OOP rather unattractive, for the optimization-oriented reversible programmer. 

Finally, while internal and external fragmentation is effectively eliminated, we are left with additional and considerable amounts of garbage in forms of all the heaps and free lists initialized in memory. If two record types only differ one word in size, two heaps would be initialized. Each heap intuitively need to be initialized with a chunk of memory from the underlying operating system such that objects can be allocated on their respective heaps, regardless of the number of times the heap is used during program execution. This is an obvious space requirement increase over the previously presented layout, and on average, the amount of required memory for a program compiled using this approach would probably be larger, than some of the following layouts, due to unoptimized heap utilization and sharing. Heap of some sizes may be mostly empty when another is full, resulting in wasted memory.

\subsection{One Heap Per Power-Of-Two}
\label{subsec:one-heap-per-power-of-two}
To address the issues of the previous heap manager layout, we can optimize the amounts of heaps required by introducing a relatively small amount of internal fragmentation. Instead of having a heap per record size, we could have a heap per power-of-two. Records would be stored in the heap closest to their respective size and, as such, we reduce the number of heaps needed, as many different records can be stored in the same heap. Records of size $5$, $6$, $7$ and $8$ would in the former layout be stored in four different heaps, where they would be stored in a single heap using this layout. Figure~\ref{fig:one-heap-per-power-of-two} illustrates the free lists and heaps up to $2^n$. 

\begin{figure}[ht]
  \centering
  \begin{tikzpicture}
    \draw[step=1cm] (0,4) grid (4,5);
    \node[right] at (4,4.5) {$1$ heap}; 
    \draw (0,2) rectangle (2,3) rectangle node[right, xshift=1cm] {$2$ heap} (4,2);
    \draw (0,0) rectangle node[right, xshift=2cm] {$3$ heap} (4,1);

    \draw (0,-3) -- (0, -2);
    \draw (0,-3) -- (1.5, -3);
    \draw (0,-2) -- (1.5, -2);
    \draw (2.5,-3) -- (4, -3);
    \draw (2.5,-2) -- (4, -2);
    \draw (4,-3) -- (4, -2);
    \node[right] at (4,-2.5) {$2^n$ heap};  

    \draw (-2.5, 4) rectangle node[left, xshift=-0.5cm] {$1$ free list} (-1.5, 5);
    \draw (-2.5, 2) rectangle node[left, xshift=-0.5cm] {$2$ free list} (-1.5, 3);
    \draw (-2.5, 0) rectangle node[left, xshift=-0.5cm] {$4$ free list} (-1.5, 1);
    \draw (-2.5, -3) rectangle node[left, xshift=-0.5cm] {$2^n$ free list} (-1.5, -2);
    
    \node[circle,fill,inner sep=1pt] at (-2, 4.5) {};
    \draw[-latex] (-2, 4.5) to[out=45, in=135] (0.25, 4.5);
    \node[circle,fill,inner sep=1pt] at (0.75, 4.5) {};
    \draw[-latex] (0.75, 4.5) to[out=45, in=135] (1.25, 4.5);
    \node[circle,fill,inner sep=1pt] at (1.75, 4.5) {};
    \draw[-latex] (1.75, 4.5) to[out=45, in=135] (2.25, 4.5);
    \node[circle,fill,inner sep=1pt] at (2.75, 4.5) {};
    \draw[-latex] (2.75, 4.5) to[out=45, in=135] (3.25, 4.5);

    \node[circle,fill,inner sep=1pt] at (-2, 2.5) {};
    \draw[-latex] (-2, 2.5) to[out=45, in=135] (0.5, 2.5);
    \node[circle,fill,inner sep=1pt] at (1.5, 2.5) {};
    \draw[-latex] (1.5, 2.5) to[out=45, in=135] (2.5, 2.5);

    \node[circle,fill,inner sep=1pt] at (-2, 0.5) {};
    \draw[-latex] (-2, 0.5) to[out=45, in=135] (0.5, 0.5);

    \node[circle,fill,inner sep=1pt] at (-2, -2.5) {};
    \draw[-latex] (-2, -2.5) to[out=45, in=135] (0.5, -2.5);

    \draw[decoration={brace,raise=5pt},decorate] (0,5) -- node[above=6pt] {$1$} (1,5);
    \draw[decoration={brace,raise=5pt},decorate] (0,3) -- node[above=6pt] {$2$} (2,3);
    \draw[decoration={brace,raise=5pt},decorate] (0,1) -- node[above=6pt] {$4$} (4,1);
    \draw[decoration={brace,raise=5pt},decorate] (0,-2) -- node[above=6pt] {$2^n$} (4,-2);

    \node[above] at (-2, -1) {$\vdots$};
    \node[above] at (2, -1) {$\vdots$};
    \node[] at (2, -2) {$\dots$};
    \node[] at (2, -3) {$\dots$};
  \end{tikzpicture}
  \caption{Memory layout using one heap per power-of-two}
  \label{fig:one-heap-per-power-of-two}
\end{figure}
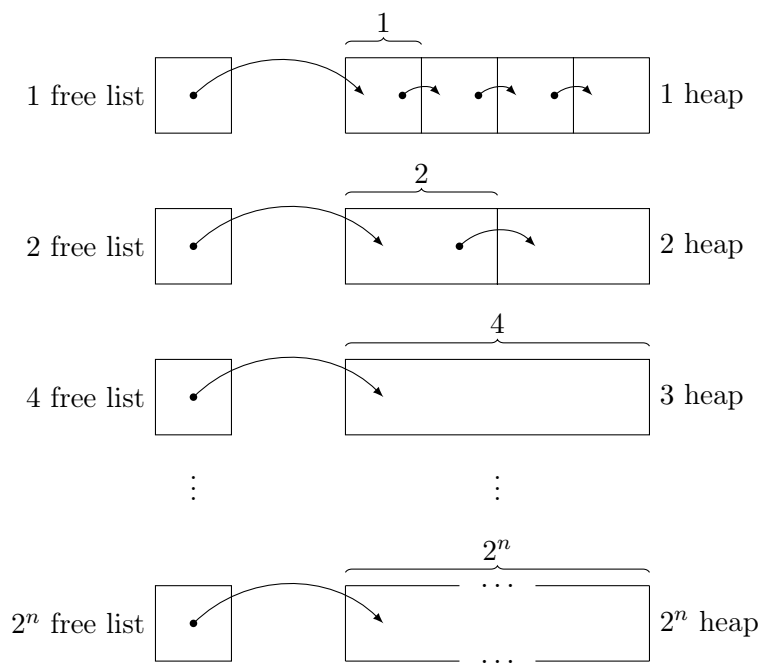

Internal fragmentation does become a problem for very large records, as blocks are only of size $2^n$. An object of size $65$ would fit in a $128$ sized block, resulting in considerable amounts of wasted memory space in form of internal fragmentation. However, in the real world, most records are small and allocation of records causing this much amount of fragmentation is an unlikely scenario. To avoid large amounts of internal fragmentation building up when allocating large records, we could allocate space for large objects using smaller blocks. If a record exceeds some limit, which has been determined the cutoff point, one kilobyte for an example, we could split it into $\sqrt{n}$ sized chunks and use blocks of that size instead. This would reduce the amount of internal fragmentation at the cost of increased bookkeeping.
For smaller records, very minimal amounts of internal fragmentation occur. 

The number of heaps needed for a computation can be determined at compile time by finding the smallest and largest record sizes and ensuring we have heaps to fit these effectively. The allocation process consists of determining the closest $2^n$ to the size of the record we are allocating and then simply popping the head of the respective free list.

Listing~\ref{lst:one-heap-per-power-of-two} shows a modified \textbf{malloc1} recursion body for the power-of-two approach. Once again, we assume our array of free lists contains the head of each free list, such that index $n$ is the head of the free list of size $2^{n+1}$. Instead of incrementing the counter size by one, as in the former layout algorithm, we double it, using the shown \textbf{double} procedure. Besides this change, the algorithm remains unchanged and still assumes each heap has been initialized along with the free lists.\\

\begin{lstlisting}[caption={Allocation algorithm for one heap per power-of-two implemented in extended Janus}, language=janus, style=basic, label={lst:one-heap-per-power-of-two}]
  procedure double(int target)
    local int current = target
    target += current
    delocal int current = target / 2

  procedure malloc1(int p, int osize, int freelists[], int counter, int csize)
    if (csize < osize) then
        counter += 1
        call double(csize)
        call malloc1(p, osize, freelists, counter, csize) 
        uncall double(csize)
        counter -= 1
    else
      if freelists[counter] != 0 then
        p += freelists[counter]
        freelists[counter] -= p

        // Swap head of free list with next block of p
        freelists[counter] ^= M(p)
        M(p) ^= freelists[counter]
        freelists[counter] ^= M(p)
      else
        counter += 1
        call double(csize)
        call malloc1(p, osize, freelists, counter, csize)
        uncall double(csize)
        counter -= 1
      fi freelists[counter] = 0 || p != freelists[counter]
    fi csize < osize   
\end{lstlisting}

\subsection{Shared Heap, Record Size-Specific Free Lists}
\label{subsec:shared-heap}
A natural proposal, considering the disadvantages of the previously presented designs, would be using a shared heap instead of record-specific heaps. 
This way, we ensure minimal fragmentation when allocating and freeing as the different free lists ensure that allocation of an object wastes as little memory as possible. By only keeping one heap, we eliminate the growth/shrinking issues of the multiple heap layout. 

There is, however, still a considerable amount of bookkeeping involved in maintaining multiple free lists. Having mixed-size blocks in a single heap is also a task which might prove difficult to accomplish reversibly. How initialization and destruction of said heap should work is not clear. As with the multiple heap version of this layout, we are still left with the issues surrounding two records which only differs one word in size. In the former layout, two heaps were required to store records of these types. In this layout, we need to store two block sizes in our heap to allocate these records, with no internal fragmentation. We could allow these objects to be allocated on similarly-sized blocks, if we round the calculated class sizes up to, say, a power-of-two. We would essentially have a shared heap, power-of-two-specific free lists layout.

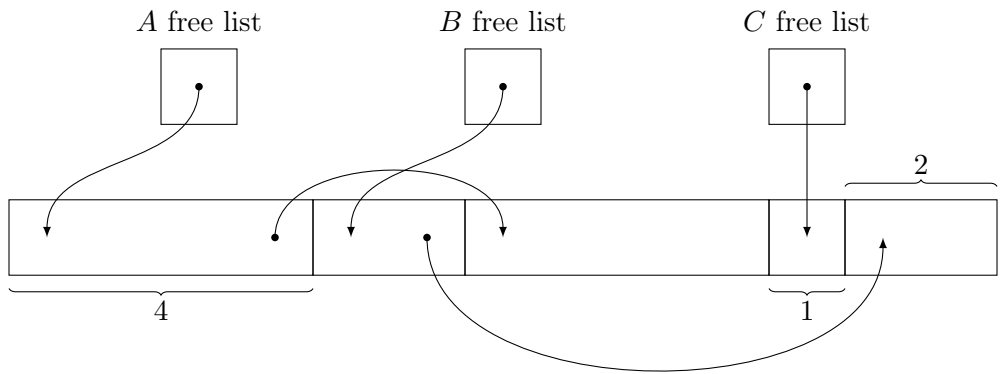
\begin{figure}[ht]
  \centering
  \begin{tikzpicture}
    \draw (0, 0) rectangle (4, 1);
    \draw (4, 0) rectangle (6, 1);
    \draw (6, 0) rectangle (10, 1);
    \draw (10, 0) rectangle (11, 1);
    \draw (11, 0) rectangle (13, 1);

    \draw (2, 2) rectangle (3, 3) node[midway, above, yshift=0.6cm] {$A$ free list}; 
    \draw (6, 2) rectangle (7, 3) node[midway, above, yshift=0.6cm] {$B$ free list};
    \draw (10, 2) rectangle (11, 3) node[midway, above, yshift=0.6cm] {$C$ free list};

    \node[circle,fill,inner sep=1pt] at (2.5, 2.5) {};
    \draw[-latex] (2.5, 2.5) to[out=270, in=90] (0.5, 0.5);

    \node[circle,fill,inner sep=1pt] at (3.5, 0.5) {};
    \draw[-latex] (3.5, 0.5) to[out=90, in=90] (6.5, 0.5);

    \node[circle,fill,inner sep=1pt] at (6.5, 2.5) {};
    \draw[-latex] (6.5, 2.5) to[out=270, in=90] (4.5, 0.5);

    \node[circle,fill,inner sep=1pt] at (5.5, 0.5) {};
    \draw[-latex] (5.5, 0.5) to[out=-90, in=-90] (11.5, 0.5);

    \node[circle,fill,inner sep=1pt] at (10.5, 2.5) {};
    \draw[-latex] (10.5, 2.5) to[out=270, in=90] (10.5, 0.5);

    \draw[decoration={brace, mirror, raise=5pt},decorate] (0,0) -- node[below=6pt] {$4$} (4,0);
    \draw[decoration={brace, mirror, raise=5pt},decorate] (10,0) -- node[below=6pt] {$1$} (11,0);
    \draw[decoration={brace, raise=5pt},decorate] (11,1) -- node[above=6pt] {$2$} (13,1);
  \end{tikzpicture}
  \caption{Record size-specific free lists on a shared heap (powers of two)}
  \label{fig:shared-heap}
\end{figure}

As the only change in this design are the heaps themselves, the allocation process remains unchanged from the one presented in listing~\ref{lst:one-heap-per-record-size} or listing~\ref{lst:one-heap-per-power-of-two} if we use the power-of-two approach. Figure~\ref{fig:shared-heap} visualizes the shared heap and the free lists of this layout.

\subsection{Buddy Memory}
\label{subsec:buddy-memory}
The Buddy Memory layout utilizes blocks of variable-sizes of the power-of-two, typically with one free list per power-of-two using a shared heap. When allocating an object of size $m$, we simply check the free lists for a free block of size $n$, where $n \geq m$. Is such a block found and if $n > m$, we split the block into two halves recursively, until we obtain the smallest block capable of storing $m$. When deallocating a block of size $m$, we do the action described above in reverse, thus merging the blocks again, where possible~\cite{dk:taocp}.

\begin{figure}[ht]
  \centering
  \begin{subfigure}{.5\textwidth}
    \centering
    \begin{tikzpicture}[scale=0.75]
      \draw[step=1] (2,2) grid (6,3);
      \draw (0,0) rectangle (8, 1);

      \node[above] at (4, 3.2) {Free lists}; 
      \node[above] at (2.5, 2.2) {$2^1$}; 
      \node[above] at (3.5, 2.2) {$2^2$}; 
      \node[above] at (4.5, 2.2) {$2^3$}; 
      \node[above] at (5.5, 2.2) {$2^4$}; 
      
      \node[circle,fill,inner sep=1pt] at (5.5, 2.2) {};
      \draw[-latex] (5.5, 2.2) to[out=270, in=90] (0.5, 0.5);
    \end{tikzpicture}
    \caption{\footnotesize Initial memory block}
  \end{subfigure}%
  \begin{subfigure}{.5\textwidth}
    \centering
    \begin{tikzpicture}[scale=0.75]
      \draw[fill=grey] (7,0) rectangle (8,1);

      \draw[step=1] (2,2) grid (6,3);
      \draw (0,0) rectangle (8, 1);

      \node[above] at (4, 3.2) {Free lists}; 
      \node[above] at (2.5, 2.2) {$2^1$}; 
      \node[above] at (3.5, 2.2) {$2^2$}; 
      \node[above] at (4.5, 2.2) {$2^3$}; 
      \node[above] at (5.5, 2.2) {$2^4$}; 

      \draw (4,0) -- (4, 1);
      \draw (6,0) -- (6, 1);
      \draw (7,0) -- (7, 1);
      
      \node[circle,fill,inner sep=1pt] at (4.5, 2.2) {};
      \draw[-latex] (4.5, 2.2) to[out=270, in=90] (0.5, 0.5);

      \node[circle,fill,inner sep=1pt] at (3.5, 2.2) {};
      \draw[-latex] (3.5, 2.2) to[out=270, in=90] (4.5, 0.5);

      \node[circle,fill,inner sep=1pt] at (2.5, 2.2) {};
      \draw[-latex] (2.5, 2.2) to[out=270, in=90] (6.5, 0.5);
    \end{tikzpicture}
    \caption{\footnotesize Allocate an object of size $2^1$}
  \end{subfigure}%
  \vskip 1em
  \begin{subfigure}{.5\textwidth}
    \centering
    \begin{tikzpicture}[scale=0.75]
      \draw[fill=grey] (7,0) rectangle (8,1);
      \draw[fill=grey] (0,0) rectangle (4,1);

      \draw[step=1] (2,2) grid (6,3);
      \draw (0,0) rectangle (8, 1);

      \node[above] at (4, 3.2) {Free lists}; 
      \node[above] at (2.5, 2.2) {$2^1$}; 
      \node[above] at (3.5, 2.2) {$2^2$}; 
      \node[above] at (4.5, 2.2) {$2^3$}; 
      \node[above] at (5.5, 2.2) {$2^4$}; 

      \draw (4,0) -- (4, 1);
      \draw (6,0) -- (6, 1);
      \draw (7,0) -- (7, 1);
      
      \node[circle,fill,inner sep=1pt] at (3.5, 2.2) {};
      \draw[-latex] (3.5, 2.2) to[out=270, in=90] (4.5, 0.5);

      \node[circle,fill,inner sep=1pt] at (2.5, 2.2) {};
      \draw[-latex] (2.5, 2.2) to[out=270, in=90] (6.5, 0.5);
    \end{tikzpicture}
    \caption{\footnotesize Allocate an object of size $2^3$}
  \end{subfigure}%
  \begin{subfigure}{.5\textwidth}
    \centering
    \begin{tikzpicture}[scale=0.75]
      \draw[fill=grey] (7,0) rectangle (8,1);
      \draw[fill=grey] (0,0) rectangle (4,1);
      \draw[fill=grey] (4,0) rectangle (6,1);

      \draw[step=1] (2,2) grid (6,3);
      \draw (0,0) rectangle (8, 1);

      \node[above] at (4, 3.2) {Free lists}; 
      \node[above] at (2.5, 2.2) {$2^1$}; 
      \node[above] at (3.5, 2.2) {$2^2$}; 
      \node[above] at (4.5, 2.2) {$2^3$}; 
      \node[above] at (5.5, 2.2) {$2^4$}; 

      \draw (4,0) -- (4, 1);
      \draw (6,0) -- (6, 1);
      \draw (7,0) -- (7, 1);
      
      \node[circle,fill,inner sep=1pt] at (2.5, 2.2) {};
      \draw[-latex] (2.5, 2.2) to[out=270, in=90] (6.5, 0.5);
    \end{tikzpicture}
    \caption{\footnotesize Allocate an object of size $2^2$}
  \end{subfigure}%
  \caption{Buddy Memory block allocation example}
  \label{fig:buddy-memory-block-splitting}
\end{figure}
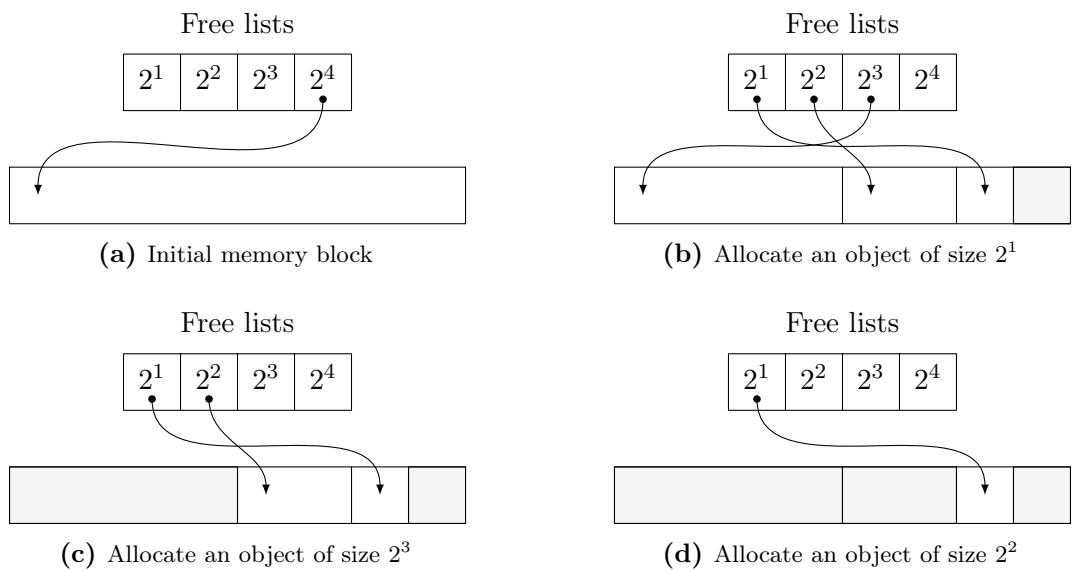

Figure~\ref{fig:buddy-memory-block-splitting} illustrates an example of block splitting during allocation in the buddy system. Originally, one block of free memory is available. When allocating a record three factors smaller than the original block, three splits occurs. 

This layout is somewhat of a middle ground between the previous three designs, addressing a number of problems found in these. The Buddy Memory layout uses a single heap for all record-types, thus eliminating the problems related to moving adjacent heaps reversibly in a multi-heap layout. To optimize the problems around initializing a usable amount of variable-sized blocks in a shared heap, we simply initialize one large block in the buddy system, which we will split into smaller parts during execution and merge where possible when freed.

The main drawback from this layout is the amount of internal fragmentation. As we only allocate blocks of a power-of-two size, substantial internal fragmentation follows when allocating large records, i.e. allocating a block of size 128 for a record of size 65. However, as most real world programs uses much smaller sized records, we do not consider this a very frequent scenario. As discussed in section~\ref{subsec:one-heap-per-power-of-two}, we would split large records into chunks of $\sqrt{n}$ at the cost of additional bookkeeping.

Implementation-wise, this design would require doubling and halving of numbers related to the power-of-two. This action translates well into the reversible setting, as a simply bit-shifting directly gives us the desired result.\\

\begin{lstlisting}[caption={The Buddy Memory algorithm implemented in extended Janus}, language=janus, style=basic, label={lst:buddy-memory}]
  procedure malloc1(int p, int osize, int freelists[], int counter, int csize)
    if (csize < osize) then
        counter += 1
        call double(csize)
        call malloc1(p, osize, freelists, counter, csize) 
        uncall double(csize)
        counter -= 1
    else
        if freelists[counter] != 0 then
            p += freelists[counter]
            freelists[counter] -= p

            // Swap head of free list with next block of p
            freelists[counter] ^= M(p)
            M(p) ^= freelists[counter]
            freelists[counter] ^= M(p)
        else
            counter += 1
            call double(csize)
            call malloc1(p, osize, freelists, counter, csize)
            uncall double(csize)
            counter -= 1
            freelists[counter] += p
            p += csize
        fi freelists[counter] = 0 || p - csize != freelists[counter]
    fi csize < osize   
\end{lstlisting}

Listing~\ref{lst:buddy-memory} shows the Buddy Memory algorithm implemented in the extended Janus variant with local blocks from~\cite{ty:ejanus}. For simplification, object sizes are rounded to the nearest power-of-two during compile-time. The algorithm extends on the one heap per power-of-two algorithm presented in listing~\ref{lst:one-heap-per-power-of-two}, page~\pageref{lst:one-heap-per-power-of-two}.
The body of the allocation function is still executed recursively until a free list for a $2^n$ larger than the size of the object has been found. Once found, we continue searching until we have found a non-empty free list. If the non-empty free list for a $2^n$ larger than the object is found, the head of the list is popped and the popped block is split recursively, until a block the desired size is obtained. Throughout the splitting process, empty free lists are updated when a larger free block is split into a block which fits into those lists.

Since a split block is always added as two blocks to an empty free list, we can only merge adjacent blocks if they are the only two blocks in a free list.

\newpage

\chapter{Compilation}
\label{chp:compilation}
The following chapter presents the considerations and translation schemas used in the process of translating \rooplpp to the reversible low-level machine language \textsc{Pisa}. As \rooplpp is an extension of \textsc{Roopl}, many techniques are carried directly over, and have as such been left out.

Before presenting the \rooplpp compiler, a brief overview of the memory layout and modeling of the \textsc{Roopl} compiler, which the \rooplpp compiler is a continuation of, is provided. 

\section{The \textsc{Roopl} to \textsc{Pisa} Compiler}
\label{sec:roopl-to-pisa-compiler}
\citeauthor{th:roopl} presented a proof-of-concept compiler along with the design for \textsc{Roopl}. The compiler translates well-typed \textsc{Roopl} programs into the reversible machine language \textsc{Pisa} in~\cite{th:roopl}. The \textsc{Roopl} compiler (\textsc{RooplC}) is written in \textsc{Haskell} and hosted at \url{https://github.com/TueHaulund/ROOPLC}.

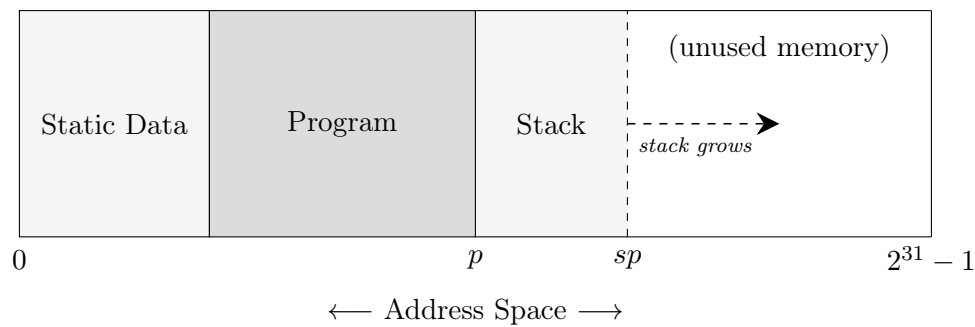
\begin{figure}[ht]
    \centering
    \begin{tikzpicture}
        \fill[fill = grey] (6, 0) rectangle (8, 3) node[midway] {Stack};
        \draw (6, 0) -- (12, 0);
        \draw (6, 3) -- (12, 3);
        \draw (12, 0) -- (12, 3);
        \draw[dashed] (8, 0) -- (8, 3);
    
        \filldraw[fill = grey, draw = black] (0, 0) rectangle (2.5, 3) node[midway] {Static Data};
        \filldraw[fill = darkgrey, draw = black] (2.5, 0) rectangle (6, 3) node[midway] {Program};
        
        \node at (10, 2.5) {(unused memory)};
        \node at (0, -.3) {$0$};
        \node at (6, -.3) {$p$};
        \node at (8, -.3) {$sp$};
        \node at (12, -.3) {$2^{31} - 1$};
        \node at (6, -1) {$\longleftarrow$ Address Space $\longrightarrow$};
        \draw[arrow, dashed] (8, 1.5) -- (10, 1.5);
        \node at (8.9, 1.2) {\scriptsize{\textit{stack grows}}};
    \end{tikzpicture}
    \caption{Memory layout of a \textsc{Roopl} program, originally from~\cite{th:roopl}}
    \label{fig:roopl-memory-layout}
\end{figure}

Figure~\ref{fig:roopl-memory-layout} shows the memory layout of a compiled \textsc{Roopl} program. The layout consists of a static storage segment, the program segment and the stack. 

\begin{figure}[ht]
    \centering
    \begin{subfigure}[t]{.32\textwidth}
        \vskip 0pt
        \centering
        \begin{tikzpicture}
            \draw[dashed] (0, 1.5) -- (0, 2);
            \draw[dashed] (3, 1.5) -- (3, 2);
            \filldraw[fill = grey, draw = black] (0, 1) rectangle (3, 1.5) node[midway] {addr(vtable)};
            \filldraw[fill = darkgrey, draw = black] (0, .5) rectangle (3, 1) node[midway] {x};
            \filldraw[fill = grey, draw = black] (0, 0) rectangle (3, .5) node[midway] {y};
            \draw[dashed] (0, 0) -- (0, -.5);

            \draw[dashed] (3, 0) -- (3, -.5);
    
            \node at (-.3, 1.25) {\texttt{+}$0$};
            \node at (-.3, .75) {\texttt{+}$1$};
            \node at (-.3, .25) {\texttt{+}$2$};
            \draw[->] (3.5, 1.25) -- (3.1, 1.25);
            \node[rotate = 270] at (3.7, 1.25) {$r_{shape}$};
            
            \node at (1.5, 2.5) {\textbf{Shape}};
        \end{tikzpicture}
    \end{subfigure}
    \begin{subfigure}[t]{.32\textwidth}
        \vskip 0pt
        \centering
        \begin{tikzpicture}
            \draw[dashed] (0, 1.5) -- (0, 2);
            \draw[dashed] (3, 1.5) -- (3, 2);
            \filldraw[fill = grey, draw = black] (0, 1) rectangle (3, 1.5) node[midway] {addr(vtable)};
            \filldraw[fill = darkgrey, draw = black] (0, .5) rectangle (3, 1) node[midway] {x};
            \filldraw[fill = grey, draw = black] (0, 0) rectangle (3, .5) node[midway] {y};
            \filldraw[fill = darkgrey, draw = black] (0, -.5) rectangle (3, 0) node[midway] {radius};
            \draw[dashed] (0, -.5) -- (0, -1);
            \draw[dashed] (3, -.5) -- (3, -1);
    
            \node at (-.3, 1.25) {\texttt{+}$0$};
            \node at (-.3, .75) {\texttt{+}$1$};
            \node at (-.3, .25) {\texttt{+}$2$};
            \node at (-.3, -.25) {\texttt{+}$3$};
            \draw[->] (3.5, 1.25) -- (3.1, 1.25);
            \node[rotate = 270] at (3.7, 1.25) {$r_{circ}$};
            
            \node at (1.5, 2.5) {\textbf{Circle}};
        \end{tikzpicture}
    \end{subfigure}
    \begin{subfigure}[t]{.32\textwidth}
        \vskip 0pt
        \centering
        \begin{tikzpicture}
            \draw[dashed] (0, 1.5) -- (0, 2);
            \draw[dashed] (3, 1.5) -- (3, 2);
            \filldraw[fill = grey, draw = black] (0, 1) rectangle (3, 1.5) node[midway] {addr(vtable)};
            \filldraw[fill = darkgrey, draw = black] (0, .5) rectangle (3, 1) node[midway] {x};
            \filldraw[fill = grey, draw = black] (0, 0) rectangle (3, .5) node[midway] {y};
            \filldraw[fill = darkgrey, draw = black] (0, -.5) rectangle (3, 0) node[midway] {a};
            \filldraw[fill = grey, draw = black] (0, -1) rectangle (3, -.5) node[midway] {b};
            \draw[dashed] (0, -1) -- (0, -1.5);
            \draw[dashed] (3, -1) -- (3, -1.5);
    
            \node at (-.3, 1.25) {\texttt{+}$0$};
            \node at (-.3, .75) {\texttt{+}$1$};
            \node at (-.3, .25) {\texttt{+}$2$};
            \node at (-.3, -.25) {\texttt{+}$3$};
            \node at (-.3, -.75) {\texttt{+}$4$};
            \draw[->] (3.5, 1.25) -- (3.1, 1.25);
            \node[rotate = 270] at (3.7, 1.25) {$r_{rect}$};
            
            \node at (1.5, 2.5) {\textbf{Rectangle}};
        \end{tikzpicture}
    \end{subfigure}
    
    \caption[Illustration of object memory layout]{Illustration of prefixing in the memory layout of 3 \textsc{Roopl} objects, originally from~\cite{th:roopl}}
    \label{fig:roopl-object-layout}
\end{figure}
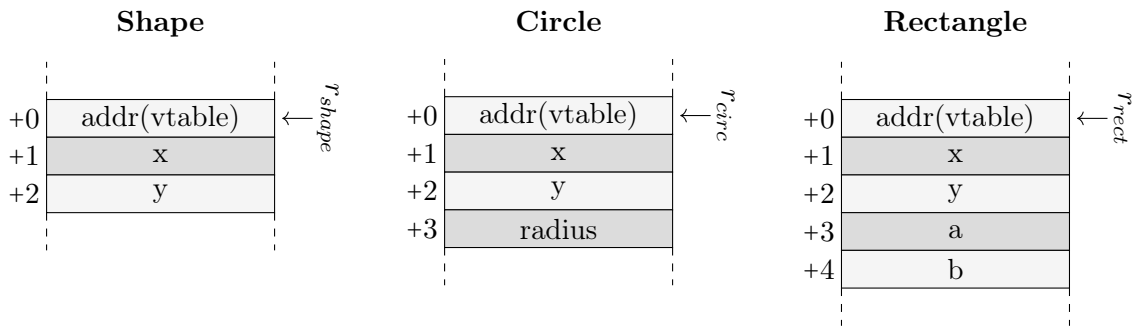

The object model is simple and only features one additional word for storing the address of the virtual table for the object class. Figure~\ref{fig:roopl-object-layout} shows the prefixing for three simple classes modeling geometric shapes.

\section{\rooplpp Memory Layout}
\label{sec:rooplpp-memory-layout}
\rooplpp builds upon the memory layout of its predecessor  with dynamic memory management. The reversible Buddy Memory heap layout presented in section~\ref{subsec:buddy-memory} is utilized in \rooplpp as it is an interesting layout, addressing a number of disadvantages found in other considered layouts, naturally translates into a reversible setting with one simple restriction (i.e only blocks which are heads of their respectable free lists are allocatable) and since its only drawback is dismissible in most real world scenarios.

\begin{figure}[ht]
    \centering
    \begin{tikzpicture}
        \fill[fill = grey] (0, 0) rectangle (2.5, 3) node[midway] {Static Data};
        \fill[fill = darkgrey] (2.5, 0) rectangle (5, 3) node[midway] {Program};
        \fill[fill = grey] (5, 0) rectangle (7, 3) node[midway] {Free lists};
        \fill[fill = grey] (7, 0) rectangle (8.5, 3) node[midway] {Heap};
        \fill[fill = grey] (12.5, 0) rectangle (14, 3) node[midway] {Stack}; 
        
        \draw (0, 0) -- (15, 0);
        \draw (0, 3) -- (15, 3);
        \draw (0, 0) -- (0, 3);
        \draw (15, 0) -- (15, 3);
        \draw (2.5, 0) -- (2.5, 3);
        \draw (5, 0) -- (5, 3); 
        \draw (7, 0) -- (7, 3);
        \draw (14, 0) -- (14, 3); 
        \draw[dashed] (8.5, 0) -- (8.5, 3); 
        \draw[dashed] (12.5, 0) -- (12.5, 3);   
    
        \node at (10.5, 2.5) {(unused memory)};
        \node at (0, -.3) {$0$};
        \node at (5, -.3) {$flp$};
        \node at (7, -.3) {$hp$};
        \node at (14, -.3) {$p$};
        \node at (12.5, -.3) {$sp$};
        \node at (15, -.3) {$2^{31} - 1$};
        \node at (7.5, -1) {$\longleftarrow$ Address Space $\longrightarrow$};

        \draw[arrow, dashed] (8.5, 1.5) -- (10.1, 1.5);
        \node at (9.4, 1.2) {\scriptsize{\textit{heap grows}}};

        \draw[arrow, dashed] (12.5, 1.5) -- (10.9, 1.5); 
        \node at (11.6, 1.2) {\scriptsize{\textit{stack grows}}};
    \end{tikzpicture}
    \caption{Memory layout of a \rooplpp program}
    \label{fig:memory-layout}
\end{figure}
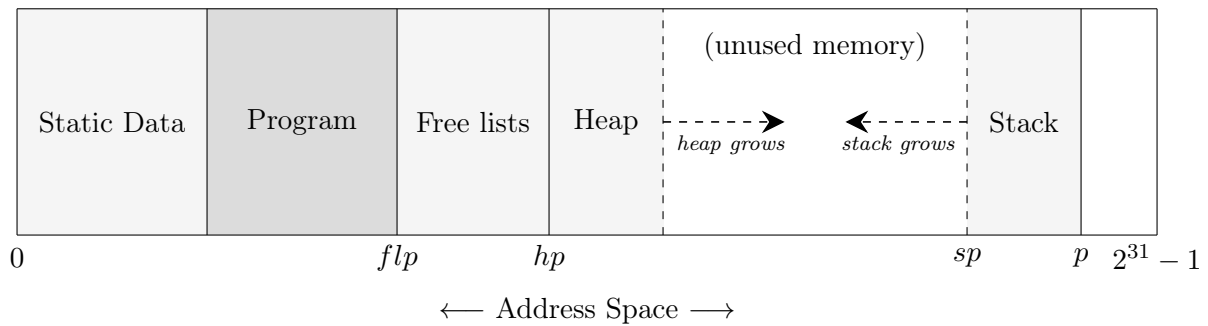

Figure~\ref{fig:memory-layout} shows the full layout of a \rooplpp program stored in memory.

\begin{itemize}
    \item As with \textsc{Roopl}, the static storage segment contains load-time labelled \inst{data} pseudo-instructions, initialized with virtual function tables and other static data needed by the translated program.

    \item The program segment is stored right after the static storage and contains the translated \rooplpp program instructions.

    \item The free lists maintained by the Buddy Memory heap layout is placed right after the program segment, with the \textit{free list pointer} $flp$ pointing at the first free list. The free lists are simply the address pointing to the first block of its respective size. The free lists are stored such that the free list at address $flp + i$ corresponds to the free list of size $2^{i+1}$.   

    \item The heap begins directly following the free lists. Its beginning is marked by the \textit{heap pointer} $(hp)$. 

    \item Unlike in \textsc{Roopl}, where the stack grows upwards, the \rooplpp stack grows downwards and begins at address $p$. The stack remains a LIFO structure, analogously to \textsc{Roopl}.
\end{itemize}

As mentioned in the previous chapter, we assume an underlying reversible operating system providing us with additional memory when needed. With no real way of simulating this, the \rooplpp compiler places the stack at a fixed address $p$ and sets one free block in the largest $2^n$ free list initially. The number of free lists and the address $p$ is configurable in the source code, but defaults to $10$ free lists, meaning initially one block of size $1024$ is available and the stack is placed at address $1024$ words after the heap.

In traditional compilers, the heap pointer usually points to the end of the heap. For reasons stated above, we never grow the heap as we start with a heap of fixed size. As such, the heap pointer simply points to the beginning of the heap.

The heap can simply be expanded by adding another block of the largest possible size and storing the address of the respective free list.

In the following sections of this chapter, we will present various translation techniques. In these translations, we will make use of a number of \textsc{Pisa} pseudo-instructions to subtract integer values from registers and pushing/popping to the program stack. The pseudo-instructions are shown in figure~\ref{fig:pseudo-instructions} and are modified from~\cite{th:roopl}, as the direction of the program stack is flipped in \rooplpp.

\begin{figure}[ht]
    \centering
    \begin{alignat*}{2}
        \inst{subi}\quad r\quad i \qquad &\overset{\textbf{def}}{\scalebox{1.8}{=}} \qquad&&\ \inst{addi}\quad r\quad -i\\[1.6ex]
        \inst{push}\quad r \qquad &\overset{\textbf{def}}{\scalebox{1.8}{=}} \qquad&&\Big[\inst{exch}\quad r\quad r_{sp}\ , \quad \inst{subi}\quad r_{sp}\quad 1\Big]\\[1.6ex]
        \inst{pop}\quad r \qquad &\overset{\textbf{def}}{\scalebox{1.8}{=}} \qquad&&\Big[\inst{addi}\quad r_{sp}\quad 1\ , \quad \inst{exch}\quad r\quad r_{sp}\Big]
    \end{alignat*}
    \caption{Definition of pseudoinstructions \inst{subi}, \inst{push} and \inst{pop}, modified from~\cite{th:roopl}}
    \label{fig:pseudo-instructions}
\end{figure}

\section{Inherited \textsc{Roopl} features}
\label{sec:inherited-features}
As mentioned, a number of features from \textsc{Roopl} carries over to \rooplpp.

The dynamic dispatching mechanism presented in~\cite{th:roopl} is inherited. As such, the invocation of a method implementation is based on the type of the object at run time. Virtual function tables are still the implementation strategy used in the dynamic dispatching implementation.

Evaluation of expressions and control flow remains unchanged. 

For completeness, object blocks are included and still stack allocated as their life time is limited to the scope of their block and the dynamic allocation process is quite expensive in terms of register pressure and number of instructions compared to the stack allocated method implemented in the \textsc{Roopl} compiler.

\section{Program Structure}
\label{sec:program-structure}
The program structure of a translated \rooplpp is analogous to the program structure of a \textsc{Roopl} program with the addition of free lists and heap initialization. The full structure is shown in figure~\ref{fig:pisa-program-layout}. 

\begin{figure}[ht]
    \centering
    \resizebox{.8\linewidth}{!}{
        \begin{minipage}{\linewidth}
            \begin{alignat*}{6}
                &\textbf{(1)}\quad&& &&\cdots\cdots && && &&\text{; Static data declarations}\\
                &\textbf{(2)}\quad&& &&\cdots\cdots && && &&\text{; Code for program class methods}\\
                &\textbf{(3)}\quad&&start\ \texttt{:}\quad&&\inst{start}\quad&& && &&\text{; Program starting point}\\
                &\textbf{(4)}\quad&& &&\inst{addi}\quad &&r_{flps}\quad &&p&&\text{; Initialize free lists pointer}\\
                &\textbf{(5)}\quad&& &&\inst{xor}\quad &&r_{hp}\quad &&r_{flps}\qquad &&\text{; Initialize heap pointer}\\
                &\textbf{(6)}\quad&& &&\inst{addi}\quad &&r_{hp}\quad &&size_{fls}&&\text{; Initialize heap pointer}\\
                &\textbf{(7)}\quad&& &&\inst{xor}\quad &&r_{b}\quad &&r_{hp}\qquad &&\text{; Store address of initial free memory block in $r_b$}\\
                &\textbf{(8)}\quad&& &&\inst{ADDI}\quad &&r_{flps}\quad &&size_{fls}\quad &&\text{; Index to end of free lists}\\
                &\textbf{(9)}\quad&& &&\inst{SUBI}\quad &&r_{flps}\quad && 1\quad &&\text{; Index to last element of free lists}\\
                &\textbf{(10)}\quad&& &&\inst{EXCH}\quad &&rb\quad &&r_{flps}\quad &&\text{; Store address of first block in last element of free lists}\\
                &\textbf{(11)}\quad&& &&\inst{ADDI}\quad &&r_{flps}\quad && 1\quad &&\text{; Index to end of free lists}\\
                &\textbf{(12)}\quad&& &&\inst{SUBI}\quad &&r_{flps}\quad &&s\quad &&\text{; Index to beginning of free lists}   \\
                &\textbf{(13)}\quad&& &&\inst{xor}\quad &&r_{sp}\quad &&r_{hp} &&\text{; Initialize stack pointer}\\
                &\textbf{(14)}\quad&& &&\inst{addi}\quad &&r_{sp}\quad &&offset_{stack}\ &&\text{; Initialize stack pointer}\\
                &\textbf{(15)}\quad&& &&\inst{subi}\quad &&r_{sp}\quad &&size_m\qquad &&\text{; Allocate space for main object}\\
                &\textbf{(16)}\quad&& &&\inst{xor}\quad &&r_m\quad &&r_{sp}\qquad &&\text{; Store address of main object in $r_m$}\\
                &\textbf{(17)}\quad&& &&\inst{xori}\quad &&r_v\quad &&label_{vt}\qquad &&\text{; Store address of vtable in $r_v$}\\
                &\textbf{(18)}\quad&& &&\inst{push}\quad &&r_v\quad && &&\text{; Push address of vtable onto stack}\\
                &\textbf{(19)}\quad&& &&\inst{push}\quad &&r_m\quad && &&\text{; Push '\textit{this}' onto stack}\\
                &\textbf{(20)}\quad&& &&\inst{bra}\quad &&label_m \span\omit\span \qquad&&\text{; Call main procedure}\\
                &\textbf{(21)}\quad&& &&\inst{pop}\quad &&r_m\quad && &&\text{; Pop '\textit{this}' from stack}\\
                &\textbf{(22)}\quad&& &&\inst{pop}\quad &&r_v\quad && &&\text{; Pop vtable address into $r_v$}\\
                &\textbf{(23)}\quad&& &&\inst{xori}\quad &&r_v\quad &&label_{vt}\qquad &&\text{; Clear $r_v$}\\
                &\textbf{(24)}\quad&& &&\inst{xor}\quad &&r_m\quad &&r_{sp}\qquad &&\text{; Clear $r_m$}\\
                &\textbf{(25)}\quad&& &&\inst{addi}\quad &&r_{sp}\quad &&size_m\qquad &&\text{; Deallocate space of main object}\\
                &\textbf{(26)}\quad&& &&\inst{subi}\quad &&r_{sp}\quad &&offset_{stack} &&\text{; Clear stack pointer}\\
                &\textbf{(27)}\quad&& &&\inst{xor}\quad &&r_{sp}\quad &&r_{hp} &&\text{; Clear stack pointer}\\
                &\textbf{(28)}\quad&&finish\ \texttt{:}\quad&&\inst{finish}\quad && && &&\text{; Program exit point}
            \end{alignat*}
        \end{minipage}
    }
    \caption{Overall layout of a translated \rooplpp program}
    \label{fig:pisa-program-layout}
\end{figure}

The following \textsc{Pisa} code block initializes the free lists pointer, the heap pointer, the stack pointer, allocates the main object on the stack, calls the main method, deallocates the main object and finally clears the free lists, heap and stack pointers.

The free lists pointer is initialized by adding the base address, which varies with the size of the translated program, to the register $r_{flps}$. In figure~\ref{fig:pisa-program-layout} the base address is denoted by $p$.

The heap pointer is initialized directly after the free lists pointer by adding the size of the free lists. One free list is the size of one word and the full size of the free lists is configured in the source code (defaulted to 10, as described earlier).

Once the heap pointer and free lists pointer is initialized, the initial block of free memory is placed in the largest free lists by indexing to said list, by adding the length of the list of free lists, subtracting 1, writing the address of the first block (which is the same address as the heap pointer, which points to the beginning of the heap) to the last free list and then resetting the free lists pointer to point to the first list again, afterwards.

The stack pointer is initialized simply by adding the stack offset to the heap pointer register $r_{hp}$. The stack offset is configured in the source code and defaults to $1024$, as described earlier in this chapter. As such, the heap and the stack each have $1024$ words of space to utilize. Once the stack pointer has been initialized, the main object is allocated on the stack and the main method called, analogously to the \textsc{Roopl} program structure.

When the program terminates and the main method returns, the main object is popped from the stack and deallocated and the stack pointer is cleared. The heap and free list pointer not intentionally not cleared to simulate future program simulation using these pointers. The contents of the free lists and whatever is left on the heap is untouched at this point. It is the programmers responsibility to free dynamically allocated objects in their \rooplpp program. Furthermore, depending on the deallocation order, we might not end up with exactly one fully merged block in the end and as such, we do not invert the steps taken to initialize this initial free memory block.
Analogously to \textsc{Roopl}, the values of the main object are left in the stack section of memory.

\section{Buddy Memory Translation}
\label{sec:buddy-memory-translation}
As briefly mentioned in section~\ref{sec:rooplpp-memory-layout}, the Buddy Memory layout was selected as the memory manager layout as it addressed a number of problems related to fragmentation and initialization. The Buddy Memory layout could be converted to a reversible section with only a few restrictions and side effects, which will be described in this section. Firstly, we present the algorithm translated to \textsc{Pisa}. As the algorithm is quite lengthy, it will be broken down into smaller chunks. The full translation is shown in appendix~\ref{app:pisa-translated-buddy-memory}.

The Buddy Memory algorithm consists of three \textsc{Janus} procedures; the entry point \textbf{malloc}, the recursion body \textbf{malloc1} and a helper function \textbf{double}. The entry point is omitted for now, as it differs depending on which type of memory object we are allocating and will be presented in sections~\ref{sec:object-allocation-deallocation} and~\ref{subsec:construction-destruction}. The helper function can be implemented using a single instruction in \textsc{Pisa} for our specific case of doubling number in the power-of-two, which we will show later. 

\begin{figure}[ht]
    \centering
    \resizebox{.8\linewidth}{!}{
        \begin{minipage}{\linewidth}
            \begin{alignat*}{7}
                &\textbf{(1)}\quad&&malloc1_{top}\ \texttt{:}\quad  &&\inst{bra}\quad &&malloc1_{bot} \span\omit\span\quad \span\omit\span\quad &&\text{; Receive jump}\\ 
                &\textbf{(2)}\quad&& &&\inst{pop}\quad&&r_{ro}&& && &&\text{; Pop return offset from the stack}\\
                &\textbf{(3)}\quad&& &&\cdots\cdots && && && &&\text{; Inverse of \textbf{(7)}}\\
                &\textbf{(4)}\quad&&malloc1_{entry}\ \texttt{:}\quad&&\inst{swapbr}\quad &&r_{ro} && && &&\text{; Malloc1 entry and exit point}\\
                &\textbf{(5)}\quad&& &&\inst{neg}\quad &&r_{ro} && && &&\text{; Negate return offset}\\        
                &\textbf{(6)}\quad&& &&\inst{push}\quad &&r_{ro} && && &&\text{; Store return offset on stack}\\
                &\textbf{(7-63)}\quad&& &&\cdots\cdots && && && &&\text{; Allocation code}\\ 
                &\textbf{(64)}\quad&&malloc1_{bot}\ \texttt{:}\quad  &&\inst{bra}\quad &&malloc1_{top} \span\omit\span\quad \span\omit\span\quad &&\text{; Jump}\\
            \end{alignat*}
        \end{minipage}
    }    
    \caption{Dynamic dispatch approach for entering the allocation subroutine}
    \label{fig:buddy-allocation-entry}
\end{figure}

Before we go into depth with the translation of the algorithm, we consider the mechanism for triggering the allocation subroutine. Naively, we could generate the entire block of code required for allocation for every \textbf{new} or \textbf{delete} statement in the target program. This approach would severely limit the amount of objects we could allocate as the register pressure of the Buddy Memory implementation is quite high, as we be shown in this section. Instead, we can utilize the  dynamic dispatching technique, which also is used for method invocations. This way, we only generate the allocation instructions once, and then simply jump to the entry point from different locations in the program. Figure~\ref{fig:buddy-allocation-entry} outlines the structure for this approach. By using the \inst{swapbr} instruction we can jump from multiple points of origin in the compiled program and internally for the recursive needs of the algorithm itself.

\begin{figure}[ht]
    \centering
    \begin{subfigure}{.3\textwidth}
        \resizebox{\linewidth}{!}{
            \lstinputlisting[language=janus, style=basic, frame=none]{buddy-memory-conditionals.ja} 
        }
    \end{subfigure}%
    \begin{subfigure}{.7\textwidth}
        \centering
        \resizebox{0.9\linewidth}{!}{
        \begin{minipage}{\linewidth}
            \begin{alignat*}{7}
                &\textbf{(7)}\quad&& &&\cdots\cdots && && && &&\text{; Code for $r_{fl}\ \leftarrow\ addr(fl[c])$}\\
                &\textbf{(8)}\quad&& &&\cdots\cdots && && && &&\text{; Code for $r_{block}\ \leftarrow\ \llbracket fl[c] \rrbracket$}\\
                &\textbf{(9)}\quad&& &&\cdots\cdots && && && &&\text{; Code for $r_{e1_o}\ \leftarrow\ \llbracket c_{size} < object_{size} \rrbracket$}\\
                &\textbf{(10)}\quad&& &&\inst{xor}\quad &&r_t && r_{e1_o} && &&\text{; Copy value of $c_{size} < object_{size}$ into $r_t$}\\        
                &\textbf{(11)}\quad&& &&\cdots\cdots && && && &&\text{; Inverse of \textbf{(9)}}\\ 
                &\textbf{(12)}\quad&&o_{test}\ \texttt{:}\quad &&\inst{beq} &&r_t && r_0 && o_{test_f} && \text{; Receive jump}\\
                &\textbf{(13)}\quad&& &&\inst{xori} &&r_t && 1 && && \text{; Clear $r_t$}\\
                &\textbf{(14-21)}\quad&& &&\cdots\cdots && && && &&\text{; Code for \textbf{outer if-then} statement}\\
                &\textbf{(22)}\quad&& &&\inst{xori} &&r_t && 1 && && \text{; Set $r_t = 1$}\\
                &\textbf{(23)}\quad&&o_{assert_t}\ \texttt{:}\quad &&\inst{bra} &&o_{assert} \span\omit\span\quad \span\omit\span\quad && \text{; Jump}\\
                &\textbf{(24)}\quad&&o_{test_f}\ \texttt{:}\quad &&\inst{bra} &&o_{test} \span\omit\span\quad \span\omit\span\quad && \text{; Receive jump}\\
                &\textbf{(25)}\quad&& &&\cdots\cdots && && && &&\text{; Code for $r_{e1_i}\ \leftarrow\ \llbracket addr(fl[c]) \neq 0 \rrbracket$}\\
                &\textbf{(26)}\quad&& &&\inst{xor}\quad &&r_{t2} && r_{e1_i} && &&\text{; Copy value of $r_{e1_i}$ into $r_{t2}$}\\        
                &\textbf{(27)}\quad&& &&\cdots\cdots && && && &&\text{; Inverse of \textbf{(25)}}\\
                &\textbf{(28)}\quad&&i_{test}\ \texttt{:}\quad &&\inst{beq} &&r_{t2} && r_0 && i_{test_f} && \text{; Receive jump}\\
                &\textbf{(29)}\quad&& &&\inst{xori} &&r_{t2} && 1 && && \text{; Clear $r_{t2}$}\\
                &\textbf{(30-34)}\quad&& &&\cdots\cdots && && && &&\text{; Code for \textbf{inner if-then} statement}\\
                &\textbf{(35)}\quad&& &&\inst{xori} &&r_{t2} && 1 && && \text{; Set $r_{t2} = 1$}\\
                &\textbf{(36)}\quad&&i_{assert_t}\ \texttt{:}\quad &&\inst{bra} &&i_{assert} \span\omit\span\quad \span\omit\span\quad && \text{; Jump}\\
                &\textbf{(37)}\quad&&i_{test_f}\ \texttt{:}\quad &&\inst{bra} &&i_{test} \span\omit\span\quad \span\omit\span\quad && \text{; Receive jump}\\
                &\textbf{(38-47)}\quad&& &&\cdots\cdots && && && &&\text{; Code for \textbf{inner else} statement}\\
                &\textbf{(48)}\quad&&i_{assert}\ \texttt{:}\quad &&\inst{bne} &&r_{t2} && r_0 && i_{assert_t} && \text{; Receive jump}\\
                &\textbf{(49)}\quad&& &&\inst{exch} &&r_{tmp} && r_{fl} && && \text{; Load address of head of current free list}\\
                &\textbf{(50)}\quad&& &&\inst{sub} &&r_{p} && r_{cs} && && \text{; Set p to previous block address}\\
                &\textbf{(51)}\quad&& &&\cdots\cdots && && && &&\text{; $r_{e2_{i1}}\ \leftarrow\ \llbracket p - c_{size} \neq addr(fl[c])\rrbracket$}\\
                &\textbf{(52)}\quad&& &&\cdots\cdots && && && &&\text{; $r_{e2_{i2}}\ \leftarrow\ \llbracket addr(fl[c]) = 0 \rrbracket$}\\
                &\textbf{(53)}\quad&& &&\cdots\cdots && && && &&\text{; $r_{e2_{i3}}\ \leftarrow\ \llbracket (p - c_{size} \neq addr(fl[c])) \vee (addr(fl[c]) = 0) \rrbracket$}\\
                &\textbf{(54)}\quad&& &&\inst{xor} &&r_{r2} && r_{e2_{i3}} && && \text{; Copy value of $r_{e2_{i3}}$ into $r_{t2}$}\\
                &\textbf{(55)}\quad&& &&\cdots\cdots && && && &&\text{; Inverse of \textbf{(53)}}\\
                &\textbf{(56)}\quad&& &&\cdots\cdots && && && &&\text{; Inverse of \textbf{(52)}}\\
                &\textbf{(57)}\quad&& &&\cdots\cdots && && && &&\text{; Inverse of \textbf{(51)}}\\
                &\textbf{(58)}\quad&& &&\inst{add} &&r_{p} && r_{cs} && && \text{; Inverse of \textbf{(50)}}\\
                &\textbf{(59)}\quad&& &&\inst{exch} &&r_{tmp} && r_{fl} && && \text{; Inverse of \textbf{(49)}}\\
                &\textbf{(60)}\quad&&o_{assert}\ \texttt{:}\quad &&\inst{bne} &&r_{t} && r_0 && o_{assert_t} && \text{; Receive jump}\\
                &\textbf{(61)}\quad&& &&\cdots\cdots && && && &&\text{; Code for $r_{e2_o}\ \leftarrow\ \llbracket c_{size} < object_{size} \rrbracket$}\\
                &\textbf{(62)}\quad&& &&\inst{xor}\quad &&r_t && r_{e2_o} && &&\text{; Copy value of $c_{size} < object_{size}$ into $r_t$}\\        
                &\textbf{(63)}\quad&& &&\cdots\cdots && && && &&\text{; Inverse of \textbf{(61)}}\\ 
            \end{alignat*}
        \end{minipage}
    }
    \end{subfigure}
    \caption{\textsc{Pisa} translation of the nested conditionals in the Buddy Memory algorithm}
    \label{fig:pisa-buddy-conditionals}
\end{figure}

The main recursion body of the algorithm, \textbf{malloc1} from listing~\ref{lst:buddy-memory}, page~\pageref{lst:buddy-memory} consists of two conditionals, in which one is nested in the else branch of the outer conditional. Figure~\ref{fig:pisa-buddy-conditionals} shows the translation structure of the nested conditional pair, using the translation techniques for conditionals presented in~\cite{ha:translation}.

The nested conditionals contain large amounts of boilerplate code for evaluating the various expressions of the conditionals. As these conditionals requires comparisons with contents of the free lists, we must be careful with extracting and storing the values in the free list. 

We have three statements to translate from here. The outer \textbf{if-then} statement, the inner \textbf{if-then} statement and the inner \textbf{else} statement. 

\begin{figure}[ht!]
    \centering
    \begin{subfigure}{.4\textwidth}
        \lstinputlisting[language=janus, style=basic, frame=none]{buddy-memory-outer-if.ja}  
    \end{subfigure}
    \begin{subfigure}{.4\textwidth}
        \centering
        \resizebox{0.99\linewidth}{!}{
        \begin{minipage}{\linewidth}
            \begin{alignat*}{7}
                &\textbf{(14)}\quad&& &&\inst{addi} &&r_{c} && 1 && && \text{; $Counter\texttt{++}$}\\
                &\textbf{(15)}\quad&& &&\inst{rl} &&r_{sc}\ && 1 && && \text{; Call $double(c_{size}$)}\\
                &\textbf{(16)}\quad&& &&\cdots\cdots && && && &&\text{; Inverse of \textbf{(7)}}\\
                &\textbf{(17)}\quad&& &&\cdots\cdots && && && &&\text{; Code for pushing temp reg values to stack}\\
                &\textbf{(18)}\quad&& &&\inst{bra}\quad &&malloc1_{entry} \span\omit\span\quad \span\omit\span\quad && \text{; Call $malloc1()$)}\\
                &\textbf{(19)}\quad&& &&\cdots\cdots && && && &&\text{; Inverse of \textbf{(17)}}\\
                &\textbf{(20)}\quad&& &&\inst{rr} &&r_{sc}\ && 1 && && \text{; Inverse of \textbf{(15)}}\\
                &\textbf{(21)}\quad&& &&\inst{subi} &&r_{c} && 1 && && \text{; Inverse of \textbf{(14)}}\\
            \end{alignat*}
        \end{minipage}
        }
    \end{subfigure}
    \caption{\textsc{Pisa} translation of the outer \textbf{if-then} statement for the Buddy Memory algorithm}
    \label{fig:pisa-buddy-outer-if}
\end{figure}

Figure~\ref{fig:pisa-buddy-outer-if} shows the translation of the outer \textbf{if-then} statement. As briefly mentioned, we can utilize the right bit shift instruction of \textsc{Pisa}, \inst{RL}, in place of the \textbf{double} helper procedure from the \textsc{Janus} implementation. By using a simple bit shift, we are able to maintain reversibility elegantly when doubling or halving numbers in the power-of-two. This statement also contains one of the careful storage operations of the free list values, in instruction \textbf{(16)}. Before we recursively branch to the entry point, we must place the previously extracted address of the head of the free list back into the free list. This is also the reason for instruction \textbf{(3)} in figure~\ref{fig:buddy-allocation-entry}. Furthermore, we must push all temporary evaluated expression values to the stack, so they can be popped when we return.   

\begin{figure}[ht]
    \centering
    \begin{subfigure}{.4\textwidth}
        \lstinputlisting[language=janus, style=basic, frame=none]{buddy-memory-inner-if.ja}  
    \end{subfigure}
    \begin{subfigure}{.5\textwidth}
        \centering
        \resizebox{\linewidth}{!}{
        \begin{minipage}{\linewidth}
            \begin{alignat*}{7}
                &\textbf{(30)}\quad&& &&\inst{add}\quad &&r_{p} && r_{block} && && \text{; Copy address of the current block to p}\\
                &\textbf{(31)}\quad&& &&\inst{sub}\quad &&r_{block}\ && r_{p} && && \text{; Clear $r_{block}$}\\
                &\textbf{(32)}\quad&& &&\inst{exch}\quad &&r_{tmp} && r_{p} && && \text{; Load address of next block}\\
                &\textbf{(33)}\quad&& &&\inst{exch}\quad &&r_{tmp} && r_{fl} && && \text{; Set address of next block as new head of free list}\\
                &\textbf{(34)}\quad&& &&\inst{xor}\quad &&r_{tmp} && r_{p} && && \text{; Clear address of next block}\\
            \end{alignat*}
        \end{minipage}
        }
    \end{subfigure}
    \caption{\textsc{Pisa} translation of the inner \textbf{if-then} statement for the Buddy Memory algorithm}
    \label{fig:pisa-buddy-inner-if}
\end{figure}

Figure~\ref{fig:pisa-buddy-inner-if} shows the translation of the inner \textbf{if-then} statement. This statement translates easily using the \inst{exch} instructions to swap with memory locations as simulated in the \textsc{Janus} code. 

\begin{figure}[ht]
    \centering
    \begin{subfigure}{.4\textwidth}
        \lstinputlisting[language=janus, style=basic, frame=none]{buddy-memory-inner-else.ja}  
    \end{subfigure}%
    \begin{subfigure}{.5\textwidth}
        \centering
        \resizebox{\linewidth}{!}{
        \begin{minipage}{\linewidth}
            \begin{alignat*}{7}
                &\textbf{(38)}\quad&& &&\inst{addi} &&r_{c} && 1 && && \text{; $Counter\texttt{++}$}\\
                &\textbf{(39)}\quad&& &&\inst{rl} &&r_{sc}\ && 1 && && \text{; Call $double(c_{size}$)}\\
                &\textbf{(40)}\quad&& &&\cdots\cdots && && && &&\text{; Push temp reg values to stack}\\
                &\textbf{(41)}\quad&& &&\inst{bra}\quad &&malloc1_{entry} \span\omit\span\quad \span\omit\span\quad && \text{; Call $malloc1()$)}\\
                &\textbf{(42)}\quad&& &&\cdots\cdots && && && &&\text{; Inverse of \textbf{(40)}}\\
                &\textbf{(43)}\quad&& &&\inst{rr} &&r_{sc}\ && 1 && && \text{; Inverse of \textbf{(39)}}\\
                &\textbf{(44)}\quad&& &&\inst{subi} &&r_{c} && 1 && && \text{; Inverse of \textbf{(38)}}\\
                &\textbf{(45)}\quad&& &&\inst{xor} &&r_{tmp} && r_p && && \text{; Copy current address of $p$}\\
                &\textbf{(46)}\quad&& &&\inst{exch} &&r_{tmp} && r_{fl} && && \text{; Store address of $p$ in free list}\\
                &\textbf{(47)}\quad&& &&\inst{add} &&r_{p} && r_{cs} && && \text{; Split block by $p$ = other half of block}\\
            \end{alignat*}
        \end{minipage}
        }
    \end{subfigure}
    \caption{\textsc{Pisa} translation of the inner \textbf{else} statement for the Buddy Memory algorithm}
    \label{fig:pisa-buddy-inner-else}
\end{figure}

The last statement translation is the inner \textbf{else} statement shown in figure~\ref{fig:pisa-buddy-inner-else}. This statement is almost identical to the outer \textbf{if-then} with the addition of the block splitting code. The block splitting is done in three instructions. First, the current block we are examining is set as the new head of the current free list. Afterwards the current free list block size is added to out pointer $p$, resulting in an effectively split block.

During the design of the reversible Buddy Memory algorithm limitations on the merging and splitting conditions were required to ensure reversibility. Since a split block is always added as two blocks to an empty free list, we can only merge adjacent blocks if they are the only two blocks in a free list. In the irreversible Buddy Memory algorithm block merging can occur in any place of the free list, but in the reversible version, we can only merge blocks at the start of the free list to maintain reversibility. The effect of this limitation prevents us from returning to one final block of free memory, if the deallocation order is not exactly opposite of the allocation order.

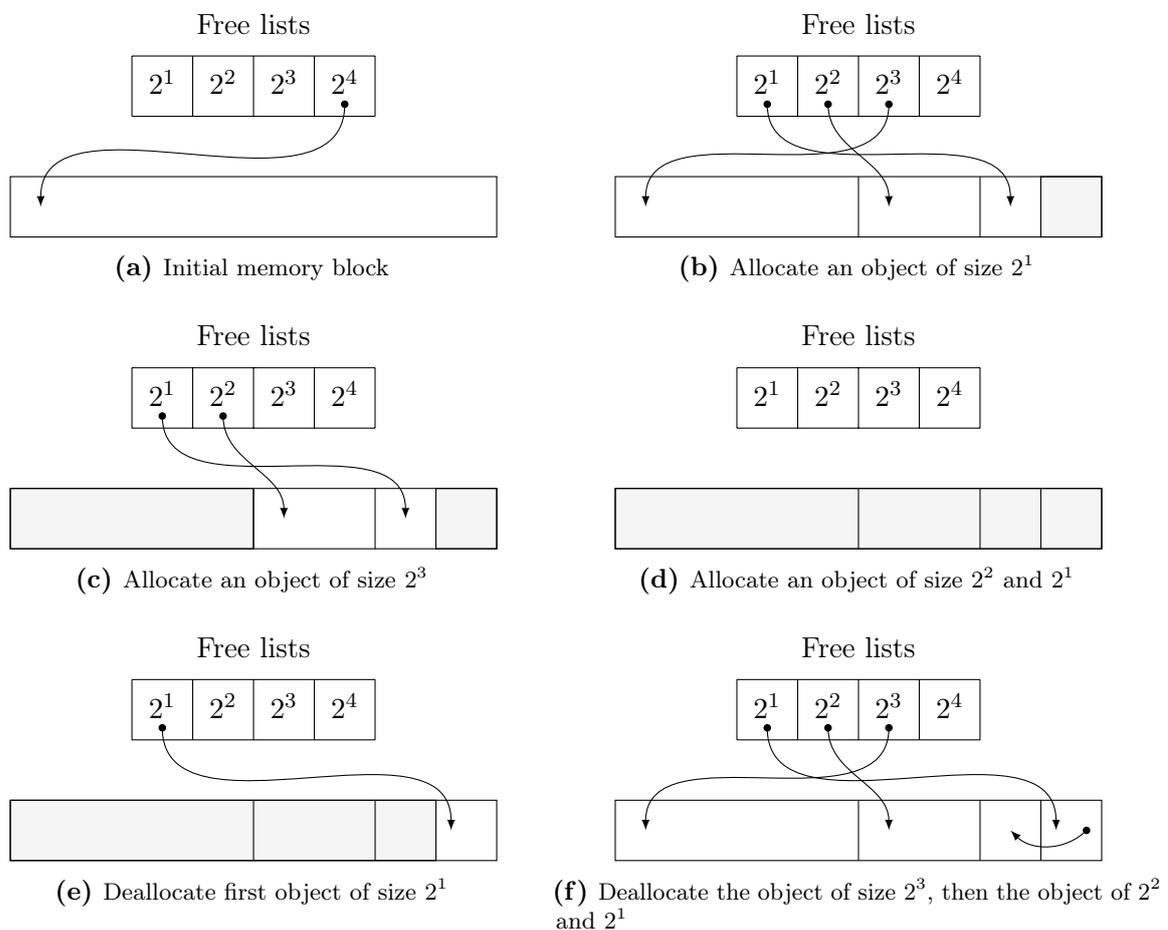
\begin{figure}[ht]
    \centering
    \begin{subfigure}[t]{.5\textwidth}
        \centering
        \begin{tikzpicture}[scale=0.8]
            \draw[step=1] (2,2) grid (6,3);
            \draw (0,0) rectangle (8, 1);
            
            \node[above] at (4, 3.2) {Free lists}; 
            \node[above] at (2.5, 2.2) {$2^1$}; 
            \node[above] at (3.5, 2.2) {$2^2$}; 
            \node[above] at (4.5, 2.2) {$2^3$}; 
            \node[above] at (5.5, 2.2) {$2^4$}; 
            
            \node[circle,fill,inner sep=1pt] at (5.5, 2.2) {};
            \draw[-latex] (5.5, 2.2) to[out=270, in=90] (0.5, 0.5);
        \end{tikzpicture}
        \caption{\footnotesize Initial memory block}
    \end{subfigure}%
    \begin{subfigure}[t]{.5\textwidth}
        \centering
        \begin{tikzpicture}[scale=0.8]
            \draw[fill=grey] (7,0) rectangle (8,1);
            
            \draw[step=1] (2,2) grid (6,3);
            \draw (0,0) rectangle (8, 1);
            
            \node[above] at (4, 3.2) {Free lists}; 
            \node[above] at (2.5, 2.2) {$2^1$}; 
            \node[above] at (3.5, 2.2) {$2^2$}; 
            \node[above] at (4.5, 2.2) {$2^3$}; 
            \node[above] at (5.5, 2.2) {$2^4$}; 
            
            \draw (4,0) -- (4, 1);
            \draw (6,0) -- (6, 1);
            \draw (7,0) -- (7, 1);
            
            \node[circle,fill,inner sep=1pt] at (4.5, 2.2) {};
            \draw[-latex] (4.5, 2.2) to[out=270, in=90] (0.5, 0.5);
            
            \node[circle,fill,inner sep=1pt] at (3.5, 2.2) {};
            \draw[-latex] (3.5, 2.2) to[out=270, in=90] (4.5, 0.5);
            
            \node[circle,fill,inner sep=1pt] at (2.5, 2.2) {};
            \draw[-latex] (2.5, 2.2) to[out=270, in=90] (6.5, 0.5);
        \end{tikzpicture}
        \caption{\footnotesize Allocate an object of size $2^1$}
    \end{subfigure}%
    \vskip 1em
    \begin{subfigure}[t]{.5\textwidth}
        \centering
        \begin{tikzpicture}[scale=0.8]
            \draw[fill=grey] (7,0) rectangle (8,1);
            \draw[fill=grey] (0,0) rectangle (4,1);
            
            \draw[step=1] (2,2) grid (6,3);
            \draw (0,0) rectangle (8, 1);
            
            \node[above] at (4, 3.2) {Free lists}; 
            \node[above] at (2.5, 2.2) {$2^1$}; 
            \node[above] at (3.5, 2.2) {$2^2$}; 
            \node[above] at (4.5, 2.2) {$2^3$}; 
            \node[above] at (5.5, 2.2) {$2^4$}; 
            
            \draw (4,0) -- (4, 1);
            \draw (6,0) -- (6, 1);
            \draw (7,0) -- (7, 1);
            
            \node[circle,fill,inner sep=1pt] at (3.5, 2.2) {};
            \draw[-latex] (3.5, 2.2) to[out=270, in=90] (4.5, 0.5);
            
            \node[circle,fill,inner sep=1pt] at (2.5, 2.2) {};
            \draw[-latex] (2.5, 2.2) to[out=270, in=90] (6.5, 0.5);
        \end{tikzpicture}
        \caption{\footnotesize Allocate an object of size $2^3$}
    \end{subfigure}%
    \begin{subfigure}[t]{.5\textwidth}
        \centering
        \begin{tikzpicture}[scale=0.8]
            \draw[fill=grey] (0,0) rectangle (8,1);
  
            \draw[step=1] (2,2) grid (6,3);
            \draw (0,0) rectangle (8, 1);
  
            \node[above] at (4, 3.2) {Free lists}; 
            \node[above] at (2.5, 2.2) {$2^1$}; 
            \node[above] at (3.5, 2.2) {$2^2$}; 
            \node[above] at (4.5, 2.2) {$2^3$}; 
            \node[above] at (5.5, 2.2) {$2^4$}; 
  
            \draw (4,0) -- (4, 1);
            \draw (6,0) -- (6, 1);
            \draw (7,0) -- (7, 1);
        \end{tikzpicture}
        \caption{\footnotesize Allocate an object of size $2^2$ and $2^1$}
    \end{subfigure}%
    \vskip 1em
    \begin{subfigure}[t]{.5\textwidth}
        \centering
        \begin{tikzpicture}[scale=0.8]
            \draw[fill=grey] (0,0) rectangle (7,1);
  
            \draw[step=1] (2,2) grid (6,3);
            \draw (0,0) rectangle (8, 1);
  
            \node[above] at (4, 3.2) {Free lists}; 
            \node[above] at (2.5, 2.2) {$2^1$}; 
            \node[above] at (3.5, 2.2) {$2^2$}; 
            \node[above] at (4.5, 2.2) {$2^3$}; 
            \node[above] at (5.5, 2.2) {$2^4$}; 
  
            \draw (4,0) -- (4, 1);
            \draw (6,0) -- (6, 1);
            \draw (7,0) -- (7, 1);
  
            \node[circle,fill,inner sep=1pt] at (2.5, 2.2) {};
            \draw[-latex] (2.5, 2.2) to[out=270, in=90] (7.25, 0.5);

        \end{tikzpicture}
        \caption{\footnotesize Deallocate first object of size $2^1$}
    \end{subfigure}%
    \begin{subfigure}[t]{.5\textwidth}
        \centering
        \begin{tikzpicture}[scale=0.8]           
            \draw[step=1] (2,2) grid (6,3);
            \draw (0,0) rectangle (8, 1);
  
            \node[above] at (4, 3.2) {Free lists}; 
            \node[above] at (2.5, 2.2) {$2^1$}; 
            \node[above] at (3.5, 2.2) {$2^2$}; 
            \node[above] at (4.5, 2.2) {$2^3$}; 
            \node[above] at (5.5, 2.2) {$2^4$}; 
  
            \draw (4,0) -- (4, 1);
            \draw (6,0) -- (6, 1);
            \draw (7,0) -- (7, 1);
  
            \node[circle,fill,inner sep=1pt] at (2.5, 2.2) {};
            \draw[-latex] (2.5, 2.2) to[out=270, in=90] (7.25, 0.5);

            \node[circle,fill,inner sep=1pt] at (7.75, .5) {};
            \draw[-latex] (7.75, .5) to[out=-135, in=-45] (6.5, .5);

            \node[circle,fill,inner sep=1pt] at (3.5, 2.2) {};
            \draw[-latex] (3.5, 2.2) to[out=270, in=90] (4.5, 0.5);

            \node[circle,fill,inner sep=1pt] at (4.5, 2.2) {};
            \draw[-latex] (4.5, 2.2) to[out=270, in=90] (0.5, 0.5);
        \end{tikzpicture}
        \caption{\footnotesize Deallocate the object of size $2^3$, then the object of $2^2$ and $2^1$}
    \end{subfigure}%
    \caption{Non-opposite deallocation results in a different free list after termination}
    \label{fig:deallocation-order-free-list}
\end{figure}

Figure~\ref{fig:deallocation-order-free-list} shows how alternative deallocation orders results in different free lists, compared to the original given to some function. However, as discussed in section~\ref{sec:memory-garbage}, we can consider every collection of Buddy Memory free lists equivalent, as a later computation can take another set of free lists and still execute its function, as long as the free lists have the required blocks available.

\newpage
\section{Object Allocation and Deallocation}
\label{sec:object-allocation-deallocation}
Now that we have the main allocation mechanism in place and a method of accessing it through a label and a \inst{swapbr} instruction, we can continue translating the \textbf{malloc} procedure entry point from listing~\ref{lst:buddy-memory} on page~\pageref{lst:buddy-memory}.

\begin{figure}[ht!]
    \centering
    \begin{subfigure}{.4\textwidth}
        \lstinputlisting[language=janus, style=basic, frame=none]{buddy-memory-malloc.ja}  
    \end{subfigure}%
    \begin{subfigure}{.5\textwidth}
        \centering
        \resizebox{\linewidth}{!}{
        \begin{minipage}{\linewidth}
            \begin{alignat*}{6}
                &\textbf{(1)}\quad&&l_{malloc\_top}\ \texttt{:}\quad &&\inst{bra}\quad && l_{malloc\_bot}\quad && \quad&& \text{; Receive jump}\\
                &\textbf{(2)}\quad&&l_{malloc}\ \texttt{:}\quad &&\inst{swapbr}\quad &&r_o && && \text{; Entry and exit point}\\
                &\textbf{(3)}\quad&& &&\inst{neg} &&r_o && && \text{; Negate return offset}\\
                &\textbf{(4)}\quad&& &&\inst{addi} &&c_{size} && 2\quad && \text{; Init $c_{size}$}\\
                &\textbf{(5)}\quad&& &&\inst{xor} &&r_{counter} && r_0 && \text{; Init counter}\\
                &\textbf{(6)}\quad&& &&\cdots && && && \text{; Pop $r_p$ and $object_{size}$ from stack}\\
                &\textbf{(7)}\quad&& &&\inst{push} &&r_{0} && && \text{; Push $r_o$}\\
                &\textbf{(8)}\quad&& &&\inst{bra} &&l_{malloc1} && && \text{; call \textbf{malloc1()}}\\
                &\textbf{(9)}\quad&& &&\inst{pop} &&r_{0} && && \text{; Inverse of \textbf{(7)}}\\
                &\textbf{(10)}\quad&& &&\cdots &&r_{0} && && \text{; Inverse of \textbf{(6)}}\\
                &\textbf{(11)}\quad&& &&\inst{xor} &&r_{counter} && r_0 && \text{; Inverse of \textbf{(5)}}\\
                &\textbf{(12)}\quad&& &&\inst{subi} &&c_{size} && 2 && \text{; Inverse of \textbf{(4)}}\\
                &\textbf{(13)}\quad&&l_{malloc\_bot}\ \texttt{:}\quad &&\inst{bra} &&l_{malloc\_top} && && \text{; Jump}\\
            \end{alignat*}
        \end{minipage}
        }
    \end{subfigure}
    \caption{\textsc{Pisa} translation of the \textbf{malloc} procedure entry point of Buddy Memory algorithm}
    \label{fig:pisa-buddy-malloc-entry}
\end{figure}

Figure~\ref{fig:pisa-buddy-malloc-entry} shows the translated \textbf{malloc} procedure. In addition to the original procedure, we also push the current return offset register value to the stack before we branch to the \textbf{malloc1} implementation, to ensure we have a zero-cleared register before starting the allocation process. The translated procedure assumes that the pointer to the object we are allocating and its size are on top of the stack before entering the block. This translated procedure serves as the entry point for the allocation subroutine as it is also only generated once. Each \textbf{new} and \textbf{delete} statement branches to the $l_{malloc}$ label to begin an allocation or a deallocation.  

\begin{figure}[H]
    \centering
    \begin{subfigure}[t]{0.495\linewidth}
        \vskip 0pt
        \centering
        \begin{equation*} 
            \textbf{new}\ c\ x
        \end{equation*}
        \resizebox{.9\linewidth}{!}{
            \begin{minipage}{\linewidth}
                \begin{alignat*}{5}
                    &\textbf{(1)}\quad&&\cdots \quad &&\quad && \quad &&\text{; Push registers }\\
                    &\textbf{(2)}\quad&&\cdots\quad &&\quad && &&\text{; Code for $r_t \leftarrow x_{size}$}\\
                    &\textbf{(3)}\quad&&\inst{push}\quad && r_t \quad && &&\text{; Push $r_t$}\\
                    &\textbf{(4)}\quad&&\inst{push}\quad && r_p \quad && &&\text{; Push $r_p$}\\
                    &\textbf{(5)}\quad&&\inst{bra}\quad && l_{malloc} \quad && &&\text{; Allocate}\\
                    &\textbf{(6)}\quad&&\inst{pop}\quad && r_p \quad && &&\text{; Inverse of \textbf{(4)}}\\
                    &\textbf{(7)}\quad&&\inst{pop}\quad && r_t \quad && &&\text{; Inverse of \textbf{(3)}}\\
                    &\textbf{(8)}\quad&&\cdots\quad && \quad && &&\text{; Inverse of \textbf{(2)}}\\
                    &\textbf{(9)}\quad&&\cdots\quad && \quad && &&\text{; Inverse of \textbf{(1)}}\\
                    &\textbf{(10)}\quad&&\cdots\quad &&\quad && &&\text{; Code for $r_v \leftarrow \llbracket addr(x) \rrbracket$}\\
                    &\textbf{(11)}\quad&&\inst{xori}\quad && r_t\quad && label_{vt} &&\text{; Store address of vtable in $r_t$}\\
                    &\textbf{(12)}\quad&&\inst{exch}\quad && r_t\quad && r_p\quad &&\text{; Store vtable in new object}\\
                    &\textbf{(13)}\quad&&\inst{addi}\quad && r_p\quad && offset_{ref}\quad &&\text{; Index to ref count pos}\\
                    &\textbf{(14)}\quad&&\inst{xori}\quad && r_t\quad && 1 &&\text{; Init ref count}\\
                    &\textbf{(15)}\quad&&\inst{exch}\quad && r_t\quad && r_p\quad &&\text{; Store ref count}\\
                    &\textbf{(16)}\quad&&\inst{subi}\quad && r_p\quad && offset_{ref}\quad &&\text{; Inverse of \textbf{(13)}}\\
                    &\textbf{(17)}\quad&&\inst{exch}\quad && r_p\quad && r_v\quad &&\text{; Store address in variable}\\
                    &\textbf{(18)}\quad&&\cdots\quad &&\quad && &&\text{; Inverse of \textbf{(10)}}\\
                \end{alignat*}
            \end{minipage}
        }
    \end{subfigure}
    \begin{subfigure}[t]{0.495\linewidth}
        \vskip 0pt
        \centering
        \begin{equation*}
            \textbf{delete}\ c\ x
        \end{equation*}
        \resizebox{.9\linewidth}{!}{
            \begin{minipage}{\linewidth}
                \begin{alignat*}{5}
                    &\textbf{(1)}\quad&&\cdots\quad &&\quad && &&\text{; Code for $r_p \leftarrow \llbracket addr(x) \rrbracket$}\\
                    &\textbf{(2)}\quad&&\inst{exch}\quad && r_t\quad && r_p\quad &&\text{; extract vtable from object}\\
                    &\textbf{(3)}\quad&&\inst{xori}\quad && r_t\quad && label_{vt} &&\text{; clear address of vtable in $r_t$}\\
                    &\textbf{(4)}\quad&&\inst{addi}\quad && r_p\quad && offset_{ref}\quad &&\text{; Index to ref count pos}\\
                    &\textbf{(5)}\quad&&\inst{exch}\quad && r_t\quad && r_p\quad &&\text{; Extract ref count}\\
                    &\textbf{(6)}\quad&&\inst{xori}\quad && r_t\quad && 1 &&\text{; Clear ref count}\\
                    &\textbf{(7)}\quad&&\inst{subi}\quad && r_p\quad && offset_{ref}\quad &&\text{; Inverse of \textbf{(4)}}\\
                    &\textbf{(8)}\quad&&\cdots \quad &&\quad && \quad &&\text{; Push registers except $r_p$, $r_t$ }\\
                    &\textbf{(9)}\quad&&\cdots\quad &&\quad && &&\text{; Code for $r_t \leftarrow x_{size}$}\\
                    &\textbf{(10)}\quad&&\inst{push}\quad && r_t \quad && &&\text{; Push $r_t$}\\
                    &\textbf{(11)}\quad&&\inst{push}\quad && r_p \quad && &&\text{; Push $r_p$}\\
                    &\textbf{(12)}\quad&&\inst{rbra}\quad && l_{malloc} \quad && &&\text{; Deallocate}\\
                    &\textbf{(13)}\quad&&\inst{pop}\quad && r_p \quad && &&\text{; Inverse of \textbf{(11)}}\\
                    &\textbf{(14)}\quad&&\inst{pop}\quad && r_t \quad && &&\text{; Inverse of \textbf{(10)}}\\
                    &\textbf{(15)}\quad&&\cdots\quad &&\quad && &&\text{; Inverse of \textbf{(9)}}\\
                    &\textbf{(16)}\quad&&\cdots\quad && \quad && &&\text{; Inverse of \textbf{(8)}}\\
                    &\textbf{(17)}\quad&&\cdots\quad && \quad && &&\text{; Inverse of \textbf{(1)}}\\
                \end{alignat*}
            \end{minipage}
        }
    \end{subfigure}
    \caption{PISA translation of heap allocation and deallocation for objects}
    \label{fig:pisa-allocation-deallocation}
\end{figure}
\clearpage

Figure~\ref{fig:pisa-allocation-deallocation} shows how each \textbf{new} and \textbf{delete} statement for objects are translated during compilation. They are simply inverse of each other. For allocation, the object pointer and its size are pushed to the stack and then a jump to the malloc entry point is executed. After allocation, the virtual table and reference count are stored in the first two words of the allocated memory. Note how deallocation jumps and flips the direction of execution using the \inst{rbra} instruction, which then runs the allocation process in reverse. In the figure $x_{size}$ denotes the computed size of objects with class $c$, plus two, to account for the virtual table pointer and reference count space, rounded up to nearest power-of-two.

\begin{figure}[ht]
    \centering
    \begin{equation*}
        \textbf{construct}\ c\ x\quad s\quad\textbf{destruct}\ x
    \end{equation*}
    \resizebox{.8\linewidth}{!}{
    \begin{minipage}{\linewidth}
    \begin{alignat*}{6}
    &\textbf{(1)}\quad&&\inst{xor}\quad &&r_x\quad &&r_{sp}\qquad &&\text{; Store address of new object $x$ in $r_x$}\\
    &\textbf{(2)}\quad&&\inst{push}\quad &&r_x\quad && &&\text{; Push $r_x$ to the stack}\\
    &\textbf{(3)}\quad&&\cdots\cdots && && &&\text{; Code for \textbf{new} $c$ $x$}\\
    &\textbf{(4)}\quad&&\cdots\cdots && && &&\text{; Code for statement $s$}\\
    &\textbf{(5)}\quad&&\cdots\cdots && && &&\text{; Code for \textbf{delete} $c$ $x$}\\
    &\textbf{(6)}\quad&&\inst{pop}\quad &&r_x\quad && &&\text{; Pop $r_x$ from the stack}\\
    &\textbf{(7)}\quad&&\inst{xor}\quad &&r_x\quad &&r_{sp}\qquad &&\text{; Clear $r_x$}
    \end{alignat*}
    \end{minipage}
    }
    \caption{PISA translation of a \rooplpp object block}
    \label{fig:pisa-object-block}
\end{figure}

Figure~\ref{fig:pisa-object-block} shows the updated translation technique for object blocks. In \textsc{Roopl}, the object blocks allocated their objects on the stack, but in \rooplpp, we can now allocate them on the heap. to facilitate this, we simply execute the exact same instructions as in \textbf{new} and \textbf{delete} statements, with body statement execution code in between. As described in section~\ref{sec:local-blocks}, the \textbf{construct}/\textbf{desctruct} block can be considered syntactic sugar, and its usage in a real world example would probably be limited.

\section{Referencing}
\label{sec:referencing-compilation}
As mentioned, one of the main strengths of \rooplpp in terms of increased expressiveness is allowance of multiple references to objects and arrays. When an object or array is constructed we allocate enough space to hold an additional \textit{reference counter} which is initialized to $1$. For each reference copied using the \textbf{copy}-statement, we incrementally increase the reference counter by one. When we \textbf{uncopy} a reference, the reference counter is decreased by one. The object or array cannot be deconstructed until its reference counter has been returned to $1$ as we would have a reference pointer to cleared memory in the heap. Such references are known as dangling pointers.

Figure~\ref{fig:ref-counting-object-layout} shows the object layout of \rooplpp objects with the added space for the reference counting from the original \textsc{Roopl} model in figure~\ref{fig:roopl-object-layout} on page~\pageref{fig:roopl-object-layout}.
\newpage

\begin{figure}[ht!]
    \centering
    \begin{subfigure}[t]{.32\textwidth}
        \vspace{0pt}
        \vskip 0pt
        \centering
        \begin{tikzpicture}
            \draw[dashed] (0, 1.5) -- (0, 2);
            \draw[dashed] (3, 1.5) -- (3, 2);
            \filldraw[fill = grey, draw = black] (0, 1) rectangle (3, 1.5) node[midway] {addr(vtable)};
            \filldraw[fill = darkgrey, draw = black] (0, .5) rectangle (3, 1) node[midway] {reference count};
            \filldraw[fill = grey, draw = black] (0, 0) rectangle (3, .5) node[midway] {x};
            \filldraw[fill = darkgrey, draw = black] (0, -.5) rectangle (3, 0) node[midway] {y};
            \draw[dashed] (0, -.5) -- (0, -1);
            \draw[dashed] (3, -.5) -- (3, -1);
    
            \node at (-.3, 1.25) {\texttt{+}$0$};
            \node at (-.3, .75) {\texttt{+}$1$};
            \node at (-.3, .25) {\texttt{+}$2$};
            \node at (-.3, -.25) {\texttt{+}$3$};
            \draw[->] (3.5, 1.25) -- (3.1, 1.25);
            \node[rotate = 270] at (3.7, 1.25) {$r_{shape}$};
            
            \node at (1.5, 2.5) {\textbf{Shape}};
        \end{tikzpicture}
    \end{subfigure}
    \begin{subfigure}[t]{.32\textwidth}
        \vspace{0pt}
        \vskip 0pt
        \centering
        \begin{tikzpicture}
            \draw[dashed] (0, 1.5) -- (0, 2);
            \draw[dashed] (3, 1.5) -- (3, 2);
            \filldraw[fill = grey, draw = black] (0, 1) rectangle (3, 1.5) node[midway] {addr(vtable)};
            \filldraw[fill = darkgrey, draw = black] (0, .5) rectangle (3, 1) node[midway] {reference count};
            \filldraw[fill = grey, draw = black] (0, 0) rectangle (3, .5) node[midway] {x};
            \filldraw[fill = darkgrey, draw = black] (0, -.5) rectangle (3, 0) node[midway] {y};
            \filldraw[fill = grey, draw = black] (0, -1) rectangle (3, -.5) node[midway] {radius};
            \draw[dashed] (0, -1) -- (0, -1.5);
            \draw[dashed] (3, -1) -- (3, -1.5);
    
            \node at (-.3, 1.25) {\texttt{+}$0$};
            \node at (-.3, .75) {\texttt{+}$1$};
            \node at (-.3, .25) {\texttt{+}$2$};
            \node at (-.3, -.25) {\texttt{+}$3$};
            \node at (-.3, -.75) {\texttt{+}$4$};
            \draw[->] (3.5, 1.25) -- (3.1, 1.25);
            \node[rotate = 270] at (3.7, 1.25) {$r_{circ}$};
            
            \node at (1.5, 2.5) {\textbf{Circle}};
        \end{tikzpicture}
    \end{subfigure}
    \begin{subfigure}[t]{.32\textwidth}
        \vspace{0pt}
        \vskip 0pt
        \centering
        \begin{tikzpicture}
            \draw[dashed] (0, 1.5) -- (0, 2);
            \draw[dashed] (3, 1.5) -- (3, 2);
            \filldraw[fill = grey, draw = black] (0, 1) rectangle (3, 1.5) node[midway] {addr(vtable)};
            \filldraw[fill = darkgrey, draw = black] (0, .5) rectangle (3, 1) node[midway] {reference count};
            \filldraw[fill = grey, draw = black] (0, 0) rectangle (3, .5) node[midway] {x};
            \filldraw[fill = darkgrey, draw = black] (0, -.5) rectangle (3, 0) node[midway] {y};
            \filldraw[fill = grey, draw = black] (0, -1) rectangle (3, -.5) node[midway] {a};
            \filldraw[fill = darkgrey, draw = black] (0, -1.5) rectangle (3, -1) node[midway] {b};
            \draw[dashed] (0, -1.5) -- (0, -2);
            \draw[dashed] (3, -1.5) -- (3, -2);
    
            \node at (-.3, 1.25) {\texttt{+}$0$};
            \node at (-.3, .75) {\texttt{+}$1$};
            \node at (-.3, .25) {\texttt{+}$2$};
            \node at (-.3, -.25) {\texttt{+}$3$};
            \node at (-.3, -.75) {\texttt{+}$4$};
            \node at (-.3, -1.25) {\texttt{+}$5$};
            \draw[->] (3.5, 1.25) -- (3.1, 1.25);
            \node[rotate = 270] at (3.7, 1.25) {$r_{rect}$};
            
            \node at (1.5, 2.5) {\textbf{Rectangle}};
        \end{tikzpicture}
    \end{subfigure}
    \caption[Illustration of object memory layout]{Illustration of prefixing in the memory layout of three \rooplpp objects}
    \label{fig:ref-counting-object-layout}
\end{figure}
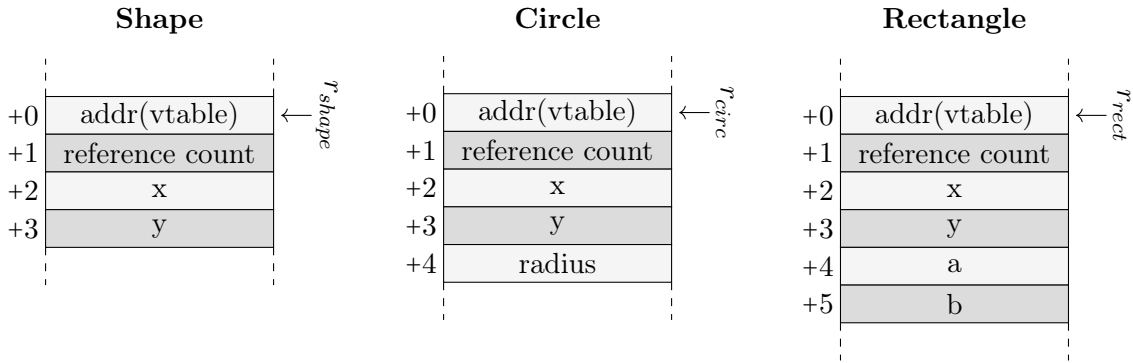

Figure~\ref{fig:pisa-referencing} shows the translated \textsc{Pisa} code for the \textbf{copy} and \textbf{uncopy} statements. As shown, they are both very simple and each others inverse. For copying, the address of the passed variable $x$ is simply copied into the zero-cleared value of $x'$ and the reference count incremented by one. For deletion, the address is cleared and the reference count decremented. Copying and clearing is done through the \inst{xor} instruction. These translations features no error handling, but a solution is discussed in section~\ref{sec:error-handling}.

\begin{figure}[ht]
    \centering
    
    \begin{subfigure}[t]{0.495\linewidth}
    \vskip 0pt
    \centering
    \begin{equation*} 
        \textbf{copy}\ c\ x\ x'
    \end{equation*}
    
    \resizebox{.8\linewidth}{!}{
        \begin{minipage}{1.025\linewidth}
            \begin{alignat*}{5}
                &\textbf{(1)}\quad&&\dots\quad && && &&\text{; Code for $r_{p}\ \leftarrow\ addr(x)$}\\
                &\textbf{(2)}\quad&&\dots\quad && && &&\text{; Code for $r_{cp}\ \leftarrow\ value(x')$}\\
                &\textbf{(3)}\quad&&\inst{xor}\quad &&r_{cp}\quad && r_p \quad &&\text{; Copy address of $x$ into $x'$}\\
                &\textbf{(4)}\quad&&\inst{addi}\quad &&r_{p}\quad && offset_{ref} &&\text{; Index to reference count address}\\
                &\textbf{(5)}\quad&&\inst{exch}\quad &&r_{t}\quad && r_p &&\text{; Extract reference count}\\
                &\textbf{(6)}\quad&&\inst{addi}\quad &&r_{t}\quad && 1 &&\text{; Increment reference count}\\
                &\textbf{(7)}\quad&&\inst{exch}\quad &&r_{t}\quad && r_p &&\text{; Store updated reference count}\\
                &\textbf{(8)}\quad&&\inst{subi}\quad &&r_{p}\quad && offset_{ref} &&\text{; Inverse of \textbf{(3)}}\\
                &\textbf{(9)}\quad&&\dots\quad && && &&\text{; Inverse of \textbf{(2)}}\\
                &\textbf{(10)}\quad&&\dots\quad && && &&\text{; Inverse of \textbf{(1)}}
            \end{alignat*}
        \end{minipage}
    }
    \end{subfigure}
    \begin{subfigure}[t]{0.495\linewidth}
    \vskip 0pt
    \centering
    \begin{equation*}
        \textbf{uncopy}\ c\ x\ x'
    \end{equation*}
    
    \resizebox{.8\linewidth}{!}{
        \begin{minipage}{1.025\linewidth}
            \begin{alignat*}{5}
                &\textbf{(1)}\quad&&\dots\quad && && &&\text{; Code for $r_{p}\ \leftarrow\ addr(x)$}\\
                &\textbf{(2)}\quad&&\dots\quad && && &&\text{; Code for $r_{cp}\ \leftarrow\ value(x')$}\\
                &\textbf{(3)}\quad&&\inst{xor}\quad &&r_{cp}\quad && r_p \quad &&\text{; Clear address of $x$ from $x'$}\\
                &\textbf{(4)}\quad&&\inst{addi}\quad &&r_{p}\quad && offset_{ref} &&\text{; Index to reference count address}\\
                &\textbf{(5)}\quad&&\inst{exch}\quad &&r_{t}\quad && r_p &&\text{; Extract reference count}\\
                &\textbf{(6)}\quad&&\inst{subi}\quad &&r_{t}\quad && 1 &&\text{; Decrement reference count}\\
                &\textbf{(7)}\quad&&\inst{exch}\quad &&r_{t}\quad && r_p &&\text{; Store updated reference count}\\
                &\textbf{(8)}\quad&&\inst{subi}\quad &&r_{p}\quad && offset_{ref} &&\text{; Inverse of \textbf{(3)}}\\
                &\textbf{(9)}\quad&&\dots\quad && && &&\text{; Inverse of \textbf{(2)}}\\
                &\textbf{(10)}\quad&&\dots\quad && && &&\text{; Inverse of \textbf{(1)}}
            \end{alignat*}
        \end{minipage}
    }
    \end{subfigure}
    
    \caption{\textsc{Pisa} translation of the reference copying and deletion statements}
    \label{fig:pisa-referencing}
\end{figure}

\section{Arrays}
\label{sec:arrays}
The fixed-sized arrays in \rooplpp are also heap allocated to allow dynamic lifetime. The array memory layout is presented in figure~\ref{fig:array-layout}. As shown, the arrays feature two additional fields to store the size of the array and the reference count. Additionally, integer arrays store their values directly in the array, while object arrays are a simple pointer stores.

As the size of a \rooplpp array is determined by a passed expression evaluation, it is unknown at compile time. This also means that out-of-bounds checking cannot be conducted during compilation. A possible solution for this is presented in section~\ref{sec:error-handling}.  

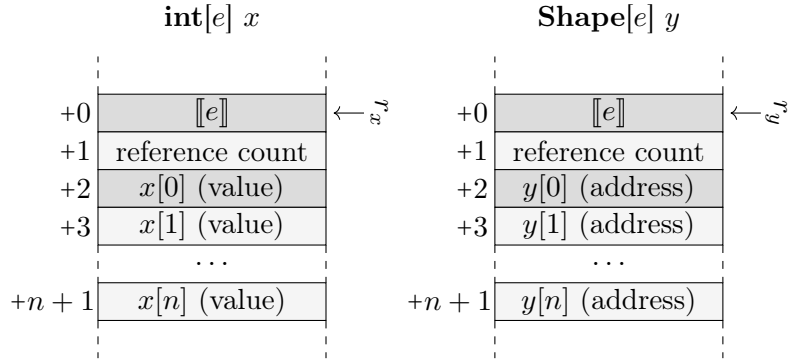
\begin{figure}[ht]
    \centering
    \begin{subfigure}[ht]{.32\textwidth}
        \vskip 0pt
        \centering
        \begin{tikzpicture}
            \draw[dashed] (0, 2) -- (0, 2.5);
            \draw[dashed] (3, 2) -- (3, 2.5);
            \filldraw[fill = darkgrey, draw = black] (0, 1.5) rectangle (3, 2) node[midway] {$\llbracket e \rrbracket$};
            \filldraw[fill = grey, draw = black] (0, 1) rectangle (3, 1.5) node[midway] {reference count};
            \filldraw[fill = darkgrey, draw = black] (0, .5) rectangle (3, 1) node[midway] {$x[0]$ (value)};
            \filldraw[fill = grey, draw = black] (0, 0) rectangle (3, .5) node[midway] {$x[1]$ (value)};
            \draw[dashed] (0, 0) -- (0, -.5);
            \node at (1.5, -.25) {$\dots$};
            \draw[dashed] (3, 0) -- (3, -.5);
            \filldraw[fill = grey, draw = black] (0, -.5) rectangle (3, -1) node[midway] {$x[n]$ (value)};
            \draw[dashed] (0, -1) -- (0, -1.5);
            \draw[dashed] (3, -1) -- (3, -1.5);
    
            \node at (-.3, 1.75) {\texttt{+}$0$};
            \node at (-.3, 1.25) {\texttt{+}$1$};
            \node at (-.3, .75) {\texttt{+}$2$};
            \node at (-.3, .25) {\texttt{+}$3$};
            \node at (-.6, -.75) {\texttt{+}$n+1$};
            \draw[->] (3.5, 1.75) -- (3.1, 1.75);
            \node[rotate = 270] at (3.7, 1.75) {$r_{x}$};
            
            \node at (1.5, 3) {\textbf{int}[$e$] $x$};
        \end{tikzpicture}
    \end{subfigure}
    \begin{subfigure}[ht]{.32\textwidth}
        \vskip 0pt
        \centering
        \begin{tikzpicture}
            \draw[dashed] (0, 2) -- (0, 2.5);
            \draw[dashed] (3, 2) -- (3, 2.5);
            \filldraw[fill = darkgrey, draw = black] (0, 1.5) rectangle (3, 2) node[midway] {$\llbracket e \rrbracket$};
            \filldraw[fill = grey, draw = black] (0, 1) rectangle (3, 1.5) node[midway] {reference count};
            \filldraw[fill = darkgrey, draw = black] (0, .5) rectangle (3, 1) node[midway] {$y[0]$ (address)};
            \filldraw[fill = grey, draw = black] (0, 0) rectangle (3, .5) node[midway] {$y[1]$ (address)};
            \draw[dashed] (0, 0) -- (0, -.5);
            \node at (1.5, -.25) {$\dots$};
            \draw[dashed] (3, 0) -- (3, -.5);
            \filldraw[fill = grey, draw = black] (0, -.5) rectangle (3, -1) node[midway] {$y[n]$ (address)};
            \draw[dashed] (0, -1) -- (0, -1.5);
            \draw[dashed] (3, -1) -- (3, -1.5);
    
            \node at (-.3, 1.75) {\texttt{+}$0$};
            \node at (-.3, 1.25) {\texttt{+}$1$};
            \node at (-.3, .75) {\texttt{+}$2$};
            \node at (-.3, .25) {\texttt{+}$3$};
            \node at (-.6, -.75) {\texttt{+}$n+1$};
            \draw[->] (3.5, 1.75) -- (3.1, 1.75);
            \node[rotate = 270] at (3.7, 1.75) {$r_{y}$};
            
            \node at (1.5, 3) {\textbf{Shape}[$e$] $y$};
        \end{tikzpicture}
    \end{subfigure}
    \caption{Illustration of prefixing in the memory layout of two \rooplpp arrays}
    \label{fig:array-layout}
\end{figure}
\newpage
\subsection{Construction and Destruction}
\label{subsec:construction-destruction}
As \rooplpp arrays also are heap allocated, the buddy allocation implementation is also used for allocating arrays. The only difference between object and array allocation is that no virtual table is stored in the allocated space while the offsets for the reference counter are shared for both types. Due to this fact, \textbf{copy} and \textbf{uncopy} \textsc{Pisa} blocks generated during compile time are exactly the same for arrays and objects, as shown in the previous section.

\begin{figure}[ht]
    \centering
    \begin{subfigure}[t]{0.495\linewidth}
        \vskip 0pt
        \centering
        \begin{equation*} 
            \textbf{new}\ a[e]\ x
        \end{equation*}
        \resizebox{.9\linewidth}{!}{
            \begin{minipage}{\linewidth}
                \begin{alignat*}{5}
                    &\textbf{(1)}\quad&&\cdots \quad &&\quad && \quad &&\text{; Push registers }\\
                    &\textbf{(2)}\quad&&\cdots\quad &&\quad && &&\text{; Code for $r_t \leftarrow \llbracket e \rrbracket + 2$}\\
                    &\textbf{(3)}\quad&&\inst{push}\quad && r_t \quad && &&\text{; Push $r_t$}\\
                    &\textbf{(4)}\quad&&\inst{push}\quad && r_p \quad && &&\text{; Push $r_p$}\\
                    &\textbf{(5)}\quad&&\inst{bra}\quad && l_{malloc} \quad && &&\text{; Allocate array}\\
                    &\textbf{(6)}\quad&&\inst{pop}\quad && r_p \quad && &&\text{; Inverse of \textbf{(4)}}\\
                    &\textbf{(7)}\quad&&\inst{pop}\quad && r_t \quad && &&\text{; Inverse of \textbf{(3)}}\\
                    &\textbf{(9)}\quad&&\cdots\quad && \quad && &&\text{; Inverse of \textbf{(1)}}\\
                    &\textbf{(10)}\quad&&\cdots\quad &&\quad && &&\text{; Code for $r_v \leftarrow \llbracket addr(x) \rrbracket$}\\
                    &\textbf{(11)}\quad&&\inst{subi}\quad && r_t\quad && 2 &&\text{; $r_t \leftarrow \llbracket e \rrbracket$}\\
                    &\textbf{(12)}\quad&&\inst{exch}\quad && r_t\quad && r_p\quad &&\text{; Store size in new array}\\
                    &\textbf{(13)}\quad&&\inst{addi}\quad && r_p\quad && offset_{ref}\quad &&\text{; Index to ref count pos}\\
                    &\textbf{(14)}\quad&&\inst{xori}\quad && r_t\quad && 1 &&\text{; Init ref count}\\
                    &\textbf{(15)}\quad&&\inst{exch}\quad && r_t\quad && r_p\quad &&\text{; Store ref count}\\
                    &\textbf{(16)}\quad&&\inst{subi}\quad && r_p\quad && offset_{ref}\quad &&\text{; Inverse of \textbf{(13)}}\\
                    &\textbf{(17)}\quad&&\inst{exch}\quad && r_p\quad && r_v\quad &&\text{; Store address in variable}\\
                    &\textbf{(18)}\quad&&\cdots\quad &&\quad && &&\text{; Inverse of \textbf{(10)}}\\
                \end{alignat*}
            \end{minipage}
        }
    \end{subfigure}
    \begin{subfigure}[t]{0.495\linewidth}
        \vskip 0pt
        \centering
        \begin{equation*}
            \textbf{delete}\ a[e]\ x
        \end{equation*}
        \resizebox{.9\linewidth}{!}{
            \begin{minipage}{\linewidth}
                \begin{alignat*}{5}
                    &\textbf{(1)}\quad&&\cdots\quad &&\quad && &&\text{; Code for $r_p \leftarrow \llbracket addr(x) \rrbracket$}\\
                    &\textbf{(2)}\quad&&\cdots\quad && \quad && \quad &&\text{; Code for $r_v \leftarrow \llbracket e \rrbracket$}\\
                    &\textbf{(3)}\quad&&\inst{addi}\quad && r_p\quad && offset_{ref}\quad &&\text{; Index to ref count pos}\\
                    &\textbf{(4)}\quad&&\inst{exch}\quad && r_t\quad && r_p\quad &&\text{; Extract ref count}\\
                    &\textbf{(5)}\quad&&\inst{xori}\quad && r_t\quad && 1 &&\text{; Clear ref count}\\
                    &\textbf{(6)}\quad&&\inst{subi}\quad && r_p\quad && offset_{ref}\quad &&\text{; Inverse of \textbf{(3)}}\\
                    &\textbf{(7)}\quad&&\inst{exch}\quad && r_t\quad && r_p\quad &&\text{; extract size from object}\\
                    &\textbf{(8)}\quad&&\inst{xori}\quad && r_t\quad && r_v &&\text{; clear size in $r_t$}\\
                    &\textbf{(9)}\quad&&\cdots \quad &&\quad && \quad &&\text{; Push registers except $r_p$, $r_v$ }\\
                    &\textbf{(10)}\quad&&\inst{addi}\quad && r_v \quad && 2 &&\text{; Actual size of array}\\
                    &\textbf{(11)}\quad&&\inst{push}\quad && r_v \quad && &&\text{; Push $r_v$}\\
                    &\textbf{(12)}\quad&&\inst{push}\quad && r_p \quad && &&\text{; Push $r_p$}\\
                    &\textbf{(13)}\quad&&\inst{rbra}\quad && l_{malloc} \quad && &&\text{; Deallocate array}\\
                    &\textbf{(14)}\quad&&\inst{pop}\quad && r_p \quad && &&\text{; Inverse of \textbf{(12)}}\\
                    &\textbf{(15)}\quad&&\inst{pop}\quad && r_v \quad && &&\text{; Inverse of \textbf{(11)}}\\
                    &\textbf{(16)}\quad&&\inst{subi}\quad && r_v \quad && 2 &&\text{; Inverse of \textbf{(10)}}\\
                    &\textbf{(17)}\quad&&\cdots\quad &&\quad && &&\text{; Inverse of \textbf{(9)}}\\
                    &\textbf{(18)}\quad&&\cdots\quad && \quad && &&\text{; Inverse of \textbf{(2)}}\\
                    &\textbf{(19)}\quad&&\cdots\quad && \quad && &&\text{; Inverse of \textbf{(1)}}\\
                \end{alignat*}
            \end{minipage}
        }
    \end{subfigure}
    \caption{\textsc{Pisa} translations of array allocation and deallocation statements}
    \label{fig:array-allocation-deallocation}
\end{figure}

Figure~\ref{fig:array-allocation-deallocation} shows the translation schemes used for array allocation and deallocation. As said, these are almost identical to the object allocation and deallocation schemes presented in figure~\ref{fig:pisa-allocation-deallocation} on page~\pageref{fig:pisa-allocation-deallocation}. Classes are analyzed during a compilation phase and their allocation size, the object size + 2 (for virtual table and reference counter) rounded up to nearest power-of-two. The size of arrays cannot be determined during compilation, as that would require evaluating the expression passed to the initialization call, and as such, we add the overhead needed directly in the allocation and deallocation instructions. While the two blocks are code are not exact opposites they are functionally inverse of each other. An extra \inst{xori} instruction on line \textbf{(8)} in the deallocation block has been included to clear the stored array size using the value of the passed expression and further use this size for the inverse \textbf{malloc} subroutine.  

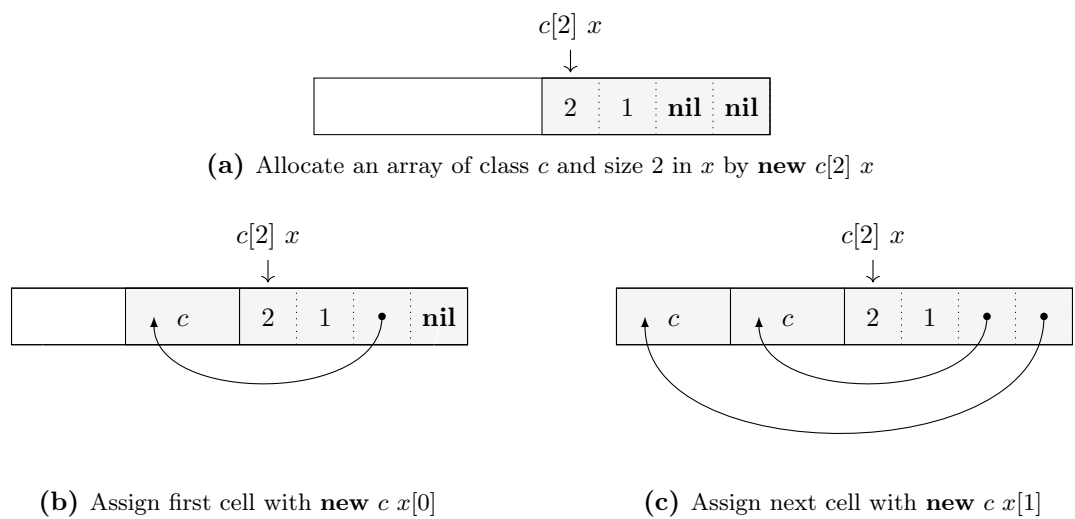
\begin{figure}[ht]
    \centering
        \begin{subfigure}{.75\textwidth}
            \centering
            \begin{tikzpicture}[scale=0.75]
                \draw[fill=grey] (4,0) rectangle (8,1);
                
                \draw (0,0) rectangle (8, 1);

                \draw[dotted] (5, 0) to (5, 1);
                \draw[dotted] (6, 0) to (6, 1);
                \draw[dotted] (7, 0) to (7, 1);
                \draw[dotted] (8, 0) to (8, 1);

                \draw[->] (4.5, 1.5) -- (4.5, 1.1);
                \node[above] at (4.5, 1.5) {\small $c[2]\ x$};

                \node[above] at (4.5, 0.15) {\small $2$}; 
                \node[above] at (5.5, 0.15) {\small $1$};
                \node[above] at (6.5, 0.15) {\small \textbf{nil}}; 
                \node[above] at (7.5, 0.15) {\small \textbf{nil}}; 
            \end{tikzpicture}
            \caption{\footnotesize Allocate an array of class $c$ and size $2$ in $x$ by $\textbf{new}\ c[2]\ x$}
        \end{subfigure}%
        \vskip 1em
        \begin{subfigure}[t]{.5\textwidth}
            \vspace{0pt}
            \vskip 0pt
            \centering
            \begin{tikzpicture}[scale=0.75]
                \draw[fill=grey] (4,0) rectangle (8,1);
                \draw[fill=grey] (2,0) rectangle (4,1);
                
                \draw (0,0) rectangle (8, 1);

                \draw[dotted] (5, 0) to (5, 1);
                \draw[dotted] (6, 0) to (6, 1);
                \draw[dotted] (7, 0) to (7, 1);
                \draw[dotted] (8, 0) to (8, 1);

                \draw[->] (4.5, 1.5) -- (4.5, 1.1);
                \node[above] at (4.5, 1.5) {\small $c[2]\ x$};

                \node[above] at (3, 0.15) {\small $c$};
                \node[above] at (4.5, 0.15) {\small $2$}; 
                \node[above] at (5.5, 0.15) {\small $1$};
                \node[above] at (7.5, 0.15) {\small \textbf{nil}};                

                \node[circle,fill,inner sep=1pt] at (6.5, 0.5) {};
                \draw[-latex] (6.5, 0.5) to[out=-90, in=-90] (2.5, 0.5);
                \draw[-latex, white] (7.5, 0.5) to[out=-90, in=-90] (0.5, 0.5);
                \draw (0,0) rectangle (8, 1);
            \end{tikzpicture}
            \caption{\footnotesize Assign first cell with $\textbf{new}\ c\ x[0]$}
        \end{subfigure}%
        \begin{subfigure}[t]{.5\textwidth}
            \vspace{0pt}
            \vskip 0pt
            \centering
            \begin{tikzpicture}[scale=0.75]
                \draw[fill=grey] (4,0) rectangle (8,1);
                \draw[fill=grey] (2,0) rectangle (4,1);
                \draw[fill=grey] (0,0) rectangle (2,1);
                
                \draw (0,0) rectangle (8, 1);

                \draw[dotted] (5, 0) to (5, 1);
                \draw[dotted] (6, 0) to (6, 1);
                \draw[dotted] (7, 0) to (7, 1);
                \draw[dotted] (8, 0) to (8, 1);

                \draw[->] (4.5, 1.5) -- (4.5, 1.1);
                \node[above] at (4.5, 1.5) {\small $c[2]\ x$};

                \node[above] at (1, 0.15) {\small $c$};
                \node[above] at (3, 0.15) {\small $c$};
                \node[above] at (4.5, 0.15) {\small $2$}; 
                \node[above] at (5.5, 0.15) {\small $1$};               

                \node[circle,fill,inner sep=1pt] at (6.5, 0.5) {};
                \draw[-latex] (6.5, 0.5) to[out=-90, in=-90] (2.5, 0.5);

                \node[circle,fill,inner sep=1pt] at (7.5, 0.5) {};
                \draw[-latex] (7.5, 0.5) to[out=-90, in=-90] (0.5, 0.5);
            \end{tikzpicture}
            \caption{\footnotesize Assign next cell with $\textbf{new}\ c\ x[1]$}
        \end{subfigure}%
    \caption{Illustration of array memory storage layout}
    \label{fig:array-memory-storage}
\end{figure}

Figure~\ref{fig:array-memory-storage} shows how object arrays simply contain pointers to allocated objects. For integer arrays, the cell values would stored directly in the allocated array space instead. 

\subsection{Array Element Access}
\label{subsec:array-element-access}
Array elements are simply passed as any other variable to methods or statements. Based on the variable type, compilation of various statements individually determines whether the address or the value of the passed variable should be used for the compiling the statement. For arrays, this is no different. If an integer array element is passed, it is treated just liked a regular integer variable. For an object array element, it is treated just like a regular object variable.

\section{Error Handling}
\label{sec:error-handling}
While a program written in \rooplpp might be syntactically valid and well-typed, this is not a guarantee that it executes successfully. A number of conditions exist, which cannot be determined at compile time, which in turn results in erroneous executed code. \citeauthor{th:roopl} describes the following conditions:

\begin{itemize}
    \item If the entry expression of a conditional is \textbf{true}, then the exit assertion should also be \textbf{true} after executing the then-branch.
    \item If the entry expression of a conditional is \textbf{false}, then the exit assertion should also be \textbf{false} after executing the else-branch.
    \item The entry expression of a loop should initially be \textbf{true}.
    \item If the exit assertion of a loop is \textbf{false}, then the entry expression should also be \textbf{false} after executing the loop-statement.
    \item All instance variables should be zero-cleared within an object block before the object is deallocated.
    \item The value of a local variable should always match the value of the delocal-expression after the block statement has executed~\cite{th:roopl}.
\end{itemize}

The extensions made to \textsc{Roopl} in \rooplpp brings forth a number of additional conditions:

\begin{itemize}
    \item All fields of an object instance should be zero-cleared before the object is deallocated using the \textbf{delete} statement.
    \item All cells of an instance should be zero-cleared before the array is deallocated using the \textbf{delete} statement.
    \item Local object blocks should have their fields zero-cleared after the execution of the block statement.
    \item Local array blocks should have their cells zero-cleared after the execution of the block statement.
    \item If the value of a local object variable is exchanged during its block statement and the new value is an object reference, this object must have its fields zero-cleared after the execution of the block statement.
    \item If the value of a local array variable is exchanged during its block statement and the new value is an array reference, this array must have its cell zero-cleared after the execution of the block statement.
    \item The variable in the \textbf{new} statement must be zero-cleared beforehand.
    \item The variable in the \textbf{copy} statement must be zero-cleared beforehand.
    \item An object variable must be initialized using \textbf{new} or \textbf{copy} before its methods can be called.
    \item An array variable must be initialized using \textbf{new} or \textbf{copy} before its fields can be accessed.
    \item Array cell indices must be within bounds defined in the expression passed during initialization.
    \item Only one reference to an object or an array must exist when executing the \textbf{delete} statement.
    \item Swapping cell values between a subtype $A$ variable and parent-type $B$ array is only allowed if the value stored in the variable is also $A$ afterwards.
\end{itemize}

It is the programmer's responsibility to meet these conditions. As these conditions, in general, cannot be determined at compile time, undefined program behaviour will occur as the termination will continue silently, resulting in erroneous program state. We can insert run time error checks in the generated instructions such that the program is terminated if one of the conditions does not hold. The run time error checks can be added as dynamic error checks using error routines defined at labels, such as $label_{uninitialized\_object}$ which the program can jump to, if such a condition is unmet. \citeauthor{th:roopl} presented an example for dynamic error checking for local blocks in~\cite{th:roopl}. \textsc{Pisa} and its simulator PendVM is, however, limited and does not support exit codes natively. To fully support dynamic error checking, PendVM could be extended to read from a value from a designated register to supply a more meaningful message for the programmer in the case of a run time exit.
 
\section{Implementation}
\label{sec:implementation}
The \rooplpp compiler (\textsc{Rooplppc}) was implemented using techniques and translation schemes presented in this chapter, expanding upon the work of the original \textsc{Roopl} compiler (\textsc{Rooplc}). The compiler serves as a proof-of-concept and simply performs one-to-one translations of \rooplpp code to \textsc{Pisa} code without any optimizations along the way. The compiler is written in \textsc{Haskell} 7.10 and the translated output was tested on the Pendulum simulator, PendVM~\cite{cr:pendvm}. 

As with the \textsc{Roopl} compiler, the \rooplpp compiler is structured around the same six separate compilation phases.
\begin{enumerate}
    \item \textbf{Parsing} consists of constructing an abstract syntax tree from the input program text using parser combinators from the \textsc{Parsec} library in \textsc{Haskell}.
    \item \textbf{Class Analysis} verifies inheritance cycles, duplicated method names or fields and base classes. In this phase, we also compute the allocation size of each class
    \item \textbf{Scope Analysis} constructs the virtual and symbol tables and maps every identifier to a unique variable or method.
    \item \textbf{Type Checking} verifies that the parsed program is well-typed.
    \item \textbf{Code Generation} translates the abstract syntax tree to blocks of \textsc{Pisa} code in a recursive descent.
    \item \textbf{Macro Expansion} expands macros left by the code generator for i.e. configuration variables, etc.
\end{enumerate}

Compiled \textsc{Roopl} programs have a size increase by a factor of 10 to 15 in terms of the lines of code. For \rooplpp the size increase is much larger, partially due to the increase of static code included in form of the memory manager using the buddy layout described in this chapter and partially because heap allocations are more costly than stack allocations in terms of lines of code.

The \textsc{Roopl} compiler was implemented in $1403$ lines of \textsc{Haskell} and the \rooplpp compiler was extended to $2046$ lines of \textsc{Haskell}.

The entire compiler source code as well as example programs and their compiled versions are provided in the appendices and in the supplied ZIP archive. It is also hosted on Github as open source software under the MIT license at \url{https://github.com/cservenka/ROOPLPPC}.

Building and usage of the compiler is supplied in the README.md file found in the ZIP archive and in appendix~\ref{app:rooplc-source}.

\section{Evaluation}
\label{sec:evaluation}
For evaluating the results of the implemented compiler, it was tested against example code provided throughout this thesis. Tests programs utilizing the linked list, doubly-linked list and binary tree data structures and the RTM implementation are found in appendix~\ref{app:example-output}.

\begin{figure}[ht]
    \centering
    \begin{tabular}{ c | c | c | c}
        \textbf{Program} & \textbf{\rooplpp LOC} & \textbf{\textsc{Pisa} LOC} & \textbf{Number of executed instructions}  \\ \hline 
        Linked List & 61 & 1280 & 18015 \\
        Doubly-Linked List & 66 & 1339 & 21825 \\
        Binary Tree & 86 & 2056 & 6065\\ 
        RTM Simulation & 211 & 6716 & 64922\\ 
      \end{tabular}
    \caption{Lines of code comparison between target and compiled \rooplpp programs}
    \label{fig:lines-of-code}
\end{figure}

The linked list test programs simply instantiates ten cells and links them in their respective lists. The binary tree test program instantiates three nodes and adds them to the tree structure, which afterwards is traversed to determine the sum of the nodes and finally mirroring the tree. The Reversible Turing Machine implementing incrementation of a non-negative $n$-bit binary number by 1 originally described in~\cite{ty:ejanus} has been implemented in \rooplpp and successfully converts its initial tape value in little endian form of $1101$ to $0011$ after termination. It should be noted that these test programs require additional stack space during their lengthy computations and as such has been compiled with twice the length between the stack and heap to allow further stack growth.

As discussed, the compiler is considered proof-of-concept and no noteworthy optimizations has been implemented. However, for the sake of giving the reader an idea of the size blowup of a compiled \rooplpp program, figure~\ref{fig:lines-of-code} details this difference. The lines of translated \textsc{Pisa} instructions includes the $204$ instructions needed for the \textbf{malloc} and \textbf{malloc1} \textsc{Pisa}-equivalent mechanisms. The last row of the table shows how many instructions are execution during simulation using PendVM.

\newpage

\chapter{Conclusions}
\label{chp:conclusions}
We formally presented a dynamic memory management extension for the reversible object-oriented programing language, \textsc{Roopl}, in the form of the superset language \rooplpp. The extension expands upon the previously presented static typing system defining well-typedness. The language successfully extends the expressiveness of its predecessor by allowing more flexibility within the domain of reversible object-oriented programming. With \rooplpp we, as reversible programmers, can now define and model non-trivial dynamic data structures in a reversible setting, such as lists, trees and graphs. We illustrated this by example programs such as a new reversible Turing machine simulator along with implementations for linked lists, doubly-linked lists and binary trees as well as techniques for traversing these. Besides expanding the expressiveness of \textsc{Roopl}, we have also shown that complex dynamic data structures are not only feasible, but furthermore do not contradict the reversible computing paradigm.

We presented various dynamic memory management layouts and how each would translate into the reversible allocation algorithms. Weighing the advantages and disadvantages of each, the Buddy Memory layout was found to translate into reversible code very naturally with few side effects and addressed a number of disadvantages found in other considered layouts. With dynamically lifetimed objects the allocation and deallocation order is important in terms of a entirely garbage-free computation. In most cases with \rooplpp, we only obtain partially garbage-free computations, as our free lists might not be restored to their original form, without an effective garbage collector design for the memory manager.

Techniques for clean translations of extended parts of the language, such as the memory manager, the new fixed-sized array type and reference counting have been demonstrated and implemented in a proof-of-concept compiler for validation.

With the dynamic memory manager for reversible object-oriented programming languages allowing dynamic object-scopes and multiple references, exemplified by \rooplpp, we have successfully taking an additional step in the direction towards high-level abstractions reversible computations.  

\section{Future Work}
\label{sec:future-work}
Naturally with the discovery of feasibility of non-trivial, reversible data structures with the introduction of \rooplpp, further study of design and implementation of reversible algorithms working with these data structures are an obvious contender for future research. Data structures such as lists, graphs and trees could potentially provide very interesting future reversible programs.

In terms of the future of reversible object-oriented languages, additional work could be made to extend the fixed-sized array type with a fully dynamic array supporting multiple dimensions. This addition could further help the discovery and research of reversible data structures such as trees and graphs. Such an extension could perhaps be added via a \textbf{put} and \textbf{take} statement pair, being each others inverse. After a dynamic array has been declared, it could automatically reallocate or upscale its internal space when putting new data outside of its current bounds. In reverse, the space could shrink or reallocate when removing the largest indexed value. The current memory management layout will still suffice for this extension.

Finally, more research could be conducted into reversible heap managers. We provided a simple manager which translated to our problem domain naturally. To obtain completely garbage free computations, a garbage collector could be designed to work with the reversible Buddy Memory memory manager. A reversible garbage collector for non-mutable objects has been designed and shown feasible for the reversible functional language \textsc{Rcfun} in~\cite{tm:garbage}. Additionally, experimentation with implementing the Buddy Memory layout into other reversible languages with dynamic allocation and deallocation such as \textsc{R-While} and \textsc{R-Core} provides an interesting opportunity~\cite{rg:rwhile, rg:rcore}.

\newpage

\printreferences 
\newpage 
 
\begin{appendices}
    \chapter{Pisa Translated Buddy Memory}
\label{app:pisa-translated-buddy-memory}
\allowdisplaybreaks
{\tiny
\begin{alignat*}{7}
    &\textbf{(1)}\quad&&malloc1_{top}\ \texttt{:}\quad  &&\inst{bra}\quad &&malloc1_{bot} \span\omit\span\quad \span\omit\span\quad &&\text{; Receive jump}\\ 
    &\textbf{(2)}\quad&& &&\inst{pop}\quad&&r_{ro}&& && &&\text{; Pop return offset from the stack}\\
    &\textbf{(3)}\quad&& &&\cdots\cdots && && && &&\text{; Inverse of \textbf{(7)}}\\
    &\textbf{(4)}\quad&&malloc1_{entry}\ \texttt{:}\quad&&\inst{swapbr}\quad &&r_{ro} && && &&\text{; Malloc1 entry and exit point}\\
    &\textbf{(5)}\quad&& &&\inst{neg}\quad &&r_{ro} && && &&\text{; Negate return offset}\\        
    &\textbf{(6)}\quad&& &&\inst{push}\quad &&r_{ro} && && &&\text{; Store return offset on stack}\\  
    &\textbf{(7)}\quad&& &&\cdots\cdots && && && &&\text{; Code for $r_{fl}\ \leftarrow\ addr(freelists[counter])$}\\
    &\textbf{(8)}\quad&& &&\cdots\cdots && && && &&\text{; Code for $r_{block}\ \leftarrow\ \llbracket freelists[counter] \rrbracket$}\\
    &\textbf{(9)}\quad&& &&\cdots\cdots && && && &&\text{; Code for $r_{e1_o}\ \leftarrow\ \llbracket c_{size} < object_{size} \rrbracket$}\\
    &\textbf{(10)}\quad&& &&\inst{xor}\quad &&r_t && r_{e1_o} && &&\text{; Copy value of $c_{size} < object_{size}$ into $r_t$}\\        
    &\textbf{(11)}\quad&& &&\cdots\cdots && && && &&\text{; Inverse of \textbf{(9)}}\\ 
    &\textbf{(12)}\quad&&o_{test}\ \texttt{:}\quad &&\inst{beq} &&r_t && r_0 && o_{test_f} && \text{; Receive jump}\\
    &\textbf{(13)}\quad&& &&\inst{xori} &&r_t && 1 && && \text{; Clear $r_t$}\\
    &\textbf{(14)}\quad&& &&\inst{addi} &&r_{c} && 1 && && \text{; $Counter\texttt{++}$}\\
    &\textbf{(15)}\quad&& &&\inst{rl} &&r_{sc}\ && 1 && && \text{; Call $double(c_{size}$)}\\
    &\textbf{(16)}\quad&& &&\cdots\cdots && && && &&\text{; Inverse of \textbf{(7)}}\\
    &\textbf{(17)}\quad&& &&\cdots\cdots && && && &&\text{; Code for pushing temp reg values to stack}\\
    &\textbf{(18)}\quad&& &&\inst{bra}\quad &&malloc1_{entry} \span\omit\span\quad \span\omit\span\quad && \text{; Call $malloc1()$)}\\
    &\textbf{(19)}\quad&& &&\cdots\cdots && && && &&\text{; Inverse of \textbf{(17)}}\\
    &\textbf{(20)}\quad&& &&\inst{rr} &&r_{sc}\ && 1 && && \text{; Inverse of \textbf{(15)}}\\
    &\textbf{(21)}\quad&& &&\inst{subi} &&r_{c} && 1 && && \text{; Inverse of \textbf{(14)}}\\
    &\textbf{(22)}\quad&& &&\inst{xori} &&r_t && 1 && && \text{; Set $r_t = 1$}\\
    &\textbf{(23)}\quad&&o_{assert_t}\ \texttt{:}\quad &&\inst{bra} &&o_{assert} \span\omit\span\quad \span\omit\span\quad && \text{; Jump}\\
    &\textbf{(24)}\quad&&o_{test_f}\ \texttt{:}\quad &&\inst{bra} &&o_{test} \span\omit\span\quad \span\omit\span\quad && \text{; Receive jump}\\
    &\textbf{(25)}\quad&& &&\cdots\cdots && && && &&\text{; Code for $r_{e1_i}\ \leftarrow\ \llbracket addr(freelists[counter]) \neq 0 \rrbracket$}\\
    &\textbf{(26)}\quad&& &&\inst{xor}\quad &&r_{t2} && r_{e1_i} && &&\text{; Copy value of $r_{e1_i}$ into $r_{t2}$}\\        
    &\textbf{(27)}\quad&& &&\cdots\cdots && && && &&\text{; Inverse of \textbf{(25)}}\\
    &\textbf{(28)}\quad&&i_{test}\ \texttt{:}\quad &&\inst{beq} &&r_{t2} && r_0 && i_{test_f} && \text{; Receive jump}\\
    &\textbf{(29)}\quad&& &&\inst{xori} &&r_{t2} && 1 && && \text{; Clear $r_{t2}$}\\
    &\textbf{(30)}\quad&& &&\inst{add} &&r_{p} && r_{block} && && \text{; Copy address of the current block to p}\\
    &\textbf{(31)}\quad&& &&\inst{sub} &&r_{block}\ && r_{p} && && \text{; Clear $r_{block}$}\\
    &\textbf{(32)}\quad&& &&\inst{exch} &&r_{tmp} && r_{p} && && \text{; Load address of next block}\\
    &\textbf{(33)}\quad&& &&\inst{exch} &&r_{tmp} && r_{fl} && && \text{; Set address of next block as new head of free list}\\
    &\textbf{(34)}\quad&& &&\inst{xor} &&r_{tmp} && r_{p} && && \text{; Clear address of next block}\\
    &\textbf{(35)}\quad&& &&\inst{xori} &&r_{t2} && 1 && && \text{; Set $r_{t2} = 1$}\\
    &\textbf{(36)}\quad&&i_{assert_t}\ \texttt{:}\quad &&\inst{bra} &&i_{assert} \span\omit\span\quad \span\omit\span\quad && \text{; Jump}\\
    &\textbf{(37)}\quad&&i_{test_f}\ \texttt{:}\quad &&\inst{bra} &&i_{test} \span\omit\span\quad \span\omit\span\quad && \text{; Receive jump}\\
    &\textbf{(38)}\quad&& &&\inst{addi} &&r_{c} && 1 && && \text{; $Counter\texttt{++}$}\\
    &\textbf{(39)}\quad&& &&\inst{rl} &&r_{sc}\ && 1 && && \text{; Call $double(c_{size}$)}\\
    &\textbf{(40)}\quad&& &&\cdots\cdots && && && &&\text{; Code for pushing temp reg values to stack}\\
    &\textbf{(41)}\quad&& &&\inst{bra}\quad &&malloc1_{entry} \span\omit\span\quad \span\omit\span\quad && \text{; Call $malloc1()$)}\\
    &\textbf{(42)}\quad&& &&\cdots\cdots && && && &&\text{; Inverse of \textbf{(40)}}\\
    &\textbf{(43)}\quad&& &&\inst{rr} &&r_{sc}\ && 1 && && \text{; Inverse of \textbf{(39)}}\\
    &\textbf{(44)}\quad&& &&\inst{subi} &&r_{c} && 1 && && \text{; Inverse of \textbf{(38)}}\\
    &\textbf{(45)}\quad&& &&\inst{xor} &&r_{tmp} && r_p && && \text{; Copy current address of p}\\
    &\textbf{(46)}\quad&& &&\inst{exch} &&r_{tmp} && r_{fl} && && \text{; Store current address of p in current free list}\\
    &\textbf{(47)}\quad&& &&\inst{add} &&r_{p} && r_{cs} && && \text{; Split block by setting p to second half of current block}\\
    &\textbf{(48)}\quad&&i_{assert}\ \texttt{:}\quad &&\inst{bne} &&r_{t2} && r_0 && i_{assert_t} && \text{; Receive jump}\\
    &\textbf{(49)}\quad&& &&\inst{exch} &&r_{tmp} && r_{fl} && && \text{; Load address of head of current free list}\\
    &\textbf{(50)}\quad&& &&\inst{sub} &&r_{p} && r_{cs} && && \text{; Set p to previous block address}\\
    &\textbf{(51)}\quad&& &&\cdots\cdots && && && &&\text{; Code for $r_{e2_{i1}}\ \leftarrow\ \llbracket p - c_{size} \neq addr(freelists[counter])\rrbracket$}\\
    &\textbf{(52)}\quad&& &&\cdots\cdots && && && &&\text{; Code for $r_{e2_{i2}}\ \leftarrow\ \llbracket addr(freelists[counter]) = 0 \rrbracket$}\\
    &\textbf{(53)}\quad&& &&\cdots\cdots && && && &&\text{; Code for $r_{e2_{i3}}\ \leftarrow\ \llbracket (p - c_{size} \neq addr(freelists[counter])) \vee (addr(freelists[counter]) = 0) \rrbracket$}\\
    &\textbf{(54)}\quad&& &&\inst{xor} &&r_{r2} && r_{e2_{i3}} && && \text{; Copy value of $r_{e2_{i3}}$ into $r_{t2}$}\\
    &\textbf{(55)}\quad&& &&\cdots\cdots && && && &&\text{; Inverse of \textbf{(53)}}\\
    &\textbf{(56)}\quad&& &&\cdots\cdots && && && &&\text{; Inverse of \textbf{(52)}}\\
    &\textbf{(57)}\quad&& &&\cdots\cdots && && && &&\text{; Inverse of \textbf{(51)}}\\
    &\textbf{(58)}\quad&& &&\inst{add} &&r_{p} && r_{cs} && && \text{; Inverse of \textbf{(50)}}\\
    &\textbf{(59)}\quad&& &&\inst{exch} &&r_{tmp} && r_{fl} && && \text{; Inverse of \textbf{(49)}}\\
    &\textbf{(60)}\quad&&o_{assert}\ \texttt{:}\quad &&\inst{bne} &&r_{t} && r_0 && o_{assert_t} && \text{; Receive jump}\\
    &\textbf{(61)}\quad&& &&\cdots\cdots && && && &&\text{; Code for $r_{e2_o}\ \leftarrow\ \llbracket c_{size} < object_{size} \rrbracket$}\\
    &\textbf{(62)}\quad&& &&\inst{xor}\quad &&r_t && r_{e2_o} && &&\text{; Copy value of $c_{size} < object_{size}$ into $r_t$}\\        
    &\textbf{(63)}\quad&& &&\cdots\cdots && && && &&\text{; Inverse of \textbf{(61)}}\\ 
    &\textbf{(64)}\quad&&malloc1_{bot}\ \texttt{:}\quad  &&\inst{bra}\quad &&malloc1_{top} \span\omit\span\quad \span\omit\span\quad &&\text{; Jump}\\
\end{alignat*}

}%

    \newpage
    \newcommand{\source}[1]{
    \lstinputlisting[style = basic, language = haskell]{#1}
}
\lstset{
    inputpath = {./},
    mathescape=false, 
    texcl=false,
    inputencoding=latin1,
    breaklines=true
}

\chapter{\textsc{Rooplppc} Source Code}
\label{app:rooplc-source}

\section*{README.md}
\lstinputlisting[style = basic]{code/ROOPLPPC/README.md}

\newpage
\section*{AST.hs}
\source{code/ROOPLPPC/src/AST.hs}

\newpage
\section*{PISA.hs}
\source{code/ROOPLPPC/src/PISA.hs}
 
\newpage
\section*{Parser.hs}
\source{code/ROOPLPPC/src/Parser.hs} 

\newpage
\section*{ClassAnalyzer.hs}
\source{code/ROOPLPPC/src/ClassAnalyzer.hs}

\newpage
\section*{ScopeAnalyzer.hs}
\source{code/ROOPLPPC/src/ScopeAnalyzer.hs}

\newpage
\section*{TypeChecker.hs}
\source{code/ROOPLPPC/src/TypeChecker.hs}

\newpage
\section*{CodeGenerator.hs}
\source{code/ROOPLPPC/src/CodeGenerator.hs}

\newpage
\section*{MacroExpander.hs}
\source{code/ROOPLPPC/src/MacroExpander.hs}

\newpage
\section*{ROOPLPPC.hs} 
\source{code/ROOPLPPC/src/ROOPLPPC.hs}
    \newpage
    \chapter{Example Ouput}
\label{app:example-output}

\section*{LinkedList.rplpp}
\lstinputlisting[
    style = basic,
    frame = leftline,
    language = roopl]{code/ROOPLPPC/test/LinkedList.rplpp}

\newpage
\section*{LinkedList.pal}
\lstinputlisting[
    style = basic,
    frame = leftline,
    multicols=2,
    language = pisa]{code/ROOPLPPC/test/LinkedList.pal}

\newpage
\section*{BinaryTree.rplpp}
\lstinputlisting[
    style = basic,
    frame = leftline,
    language = roopl]{code/ROOPLPPC/test/BinaryTree.rplpp}

\newpage
\section*{BinaryTree.pal}
\lstinputlisting[
    style = basic,
    frame = leftline,
    multicols=2,
    language = pisa]{code/ROOPLPPC/test/BinaryTree.pal}

\newpage
\section*{DoublyLinkedList.rplpp}
\lstinputlisting[
    style = basic,
    frame = leftline,
    language = roopl]{code/ROOPLPPC/test/DoublyLinkedList.rplpp}

\newpage
\section*{DoublyLinkedList.pal}
\lstinputlisting[
    style = basic,
    frame = leftline,
    multicols=2,
    language = pisa]{code/ROOPLPPC/test/DoublyLinkedList.pal}   

\newpage
\section*{RTM.rplpp}
\lstinputlisting[
    style = basic,
    frame = leftline,
    language = roopl]{code/ROOPLPPC/test/RTM.rplpp}

\newpage
\section*{RTM.pal}
\lstinputlisting[
    style = basic,
    frame = leftline,
    multicols=2,
    language = pisa]{code/ROOPLPPC/test/RTM.pal} 

\end{appendices} 

\end{document}